\documentclass[11pt]{article}

\usepackage[letterpaper,margin=1in]{geometry}
\usepackage[T1]{fontenc}
\usepackage{lmodern}
\usepackage{microtype}
\usepackage{amsmath,amssymb,amsthm,mathtools}
\usepackage{enumitem}
\usepackage{graphicx}
\usepackage{float}
\usepackage{booktabs,tabularx,makecell}
\usepackage[table]{xcolor}
\usepackage{tikz}
\usetikzlibrary{arrows.meta,calc,positioning,fit,backgrounds,shapes.geometric}
\usepackage[numbers,sort&compress]{natbib}
\usepackage{hyperref}
\hypersetup{
  hidelinks,
  pdftitle={Resolving the Nelson--Nguyen Conjecture},
  pdfsubject={Sparse oblivious subspace embeddings}
}
\usepackage{xurl}

\definecolor{accent}{RGB}{63,101,154}
\definecolor{layerfill}{RGB}{247,249,252}
\definecolor{zoomfill}{RGB}{238,244,252}

\tikzset{
  layer/.style={
    draw=black!65,
    fill=layerfill,
    rounded corners=1.2pt,
    minimum width=4.25cm,
    minimum height=0.78cm,
    inner sep=0pt
  },
  selected layer/.style={
    layer,
    draw=accent,
    fill=zoomfill,
    line width=0.95pt
  },
  matrix frame/.style={
    draw=accent,
    fill=zoomfill!45,
    rounded corners=1.2pt,
    line width=0.95pt
  },
  guide/.style={
    densely dashed,
    draw=black!55,
    line width=0.55pt
  },
  zoom line/.style={
    densely dotted,
    draw=accent,
    line width=0.95pt
  },
  dim/.style={
    <->,
    >=Stealth,
    draw=black!70,
    line width=0.55pt
  },
  nz/.style={
    fill=white,
    inner sep=1.3pt,
    font=\small
  },
  focus nz/.style={
    draw=accent,
    fill=white,
    rounded corners=1pt,
    line width=0.8pt,
    inner xsep=3.2pt,
    inner ysep=2pt,
    font=\small
  },
  panel title/.style={
    font=\small\bfseries,
    anchor=west
  }
}

\allowdisplaybreaks
\setlist[itemize]{leftmargin=2em,itemsep=0.3em,topsep=0.4em}
\setlist[enumerate]{leftmargin=2.4em,itemsep=0.3em,topsep=0.4em}

\newtheorem{theorem}{Theorem}[section]
\newtheorem{lemma}[theorem]{Lemma}
\newtheorem{proposition}[theorem]{Proposition}

\newtheorem{claim}[theorem]{Claim}
\newtheorem{definition}[theorem]{Definition}
\theoremstyle{remark}

\newcommand{\R}{\mathbb{R}}
\newcommand{\E}{\mathbb{E}}
\newcommand{\Pp}{\mathbb{P}}
\newcommand{\Id}{I_d}
\newcommand{\tr}{\operatorname{tr}}
\newcommand{\cc}{\operatorname{cc}}
\newcommand{\op}{\mathrm{op}}
\newcommand{\one}{\mathbf{1}}
\newcommand{\cW}{\mathcal{W}}
\newcommand{\cP}{\Pi}
\newcommand{\cL}{\Lambda}
\newcommand{\cI}{\mathcal{I}}
\newcommand{\cO}{\mathcal{O}}
\newcommand{\cT}{\mathcal{T}}
\newcommand{\cC}{\mathcal{C}}
\newcommand{\cG}{\mathcal{G}}
\newcommand{\cJ}{\mathcal{J}}

\newcommand{\tT}{\mathsf{T}}
\newcommand{\Sign}{\mathsf{Sign}}
\newcommand{\Hash}{\mathsf{Hash}}
\newcommand{\Gram}{\mathsf{Gram}}
\newcommand{\Cont}{\mathsf{Cont}}
\newcommand{\eps}{\varepsilon}

\title{The Nelson–Nguyen Conjecture via Mean-to-Moments Concentration}
\author{
  Tung Mai\\
  Adobe Research
  \and
  Anup Rao\\
  Adobe Research
}
\date{}

\begin{document}
\maketitle

\begin{abstract}
An oblivious subspace embedding (OSE) is a distribution over matrices that approximately preserves the squared Euclidean norm of every vector in any fixed low-dimensional subspace. We prove the Nelson--Nguyen conjecture: for every $0<\delta<1$, there exists a distribution that gives an OSE with
\[
    O\!\left(\frac{d+\log(1/\delta)}{\eps^2}\right)
    \quad\text{embedding dimension and}\quad
    s=O\!\left(\frac{\log(d/\delta)}{\eps}\right)
    \quad\text{column sparsity},
\]
with failure probability at most $\delta$.
We first bound the mean spectral error by a trace-moment argument, then upgrade this bound to the desired high-probability guarantee using concentration and resampling. ChatGPT--5.6--Pro was used in proving and writing the results of this manuscript.
\end{abstract}

\section{Introduction}

An $m\times n$ matrix $\Phi$ is an $\eps$-subspace embedding for a subspace $V\subseteq\R^n$ if
\[
    (1-\eps)\lVert x\rVert_2^2
    \le \lVert \Phi x\rVert_2^2
    \le (1+\eps)\lVert x\rVert_2^2
    \qquad (x\in V).
\]
A distribution over $m\times n$ matrices is an $(\eps,\delta,d)$-oblivious subspace embedding (OSE) if for every fixed $d$-dimensional subspace $V$, a draw from the distribution is an $\eps$-subspace embedding for $V$ with probability at least $1-\delta$. The column sparsity parameter is the maximum number of nonzeros in any column of a matrix in the support. The number of rows $m$ is also called the \emph{embedding dimension} of the OSE.

Thus $\Phi$ approximately preserves the Euclidean norm of all vectors in $V$ while mapping them into a space whose dimension $m$ may be much smaller than $n$. Such embeddings are a basic sketching primitive in randomized numerical linear algebra: they reduce large least-squares, low-rank approximation, and related problems to smaller ones while preserving the relevant geometry. When $\Phi$ has few nonzeros per column, products of the form $\Phi A$ can be computed quickly for matrices $A$ with $n$ rows. Therefore, both the embedding dimension and the column sparsity are important algorithmic parameters.

Nelson and Nguyen~\cite{NelsonNguyen2013} conjectured that, for every failure probability $\delta$, there exists a sparse oblivious subspace embedding distribution with embedding dimension $m$ and column sparsity $s$ satisfying
\[
    m=O\!\left(\frac{d+\log(1/\delta)}{\eps^2}\right),
    \qquad
    s=O\!\left(\frac{\log(d/\delta)}{\eps}\right).
\]
We prove the conjecture by showing that the SparseStack distribution achieves this tradeoff. Informally, SparseStack stacks $s$ independent one-sparse CountSketch layers and normalizes the resulting matrix by $s^{-1/2}$.

\begin{theorem}[Nelson--Nguyen Conjecture]\label{thm:main}
For every $n\ge d\ge2$, $0<\eps\le1$, and $0<\delta<1$, there exists a
distribution over $m\times n$ matrices with
\[
    m=O\!\left(\frac{d+\log(1/\delta)}{\eps^2}\right)
    \quad\text{rows and}\quad
    s=O\!\left(\frac{\log(d/\delta)}{\eps}\right)
    \quad\text{nonzeros per column},
\]
such that, for every fixed $d$-dimensional subspace
$V\subseteq\R^n$, a matrix $\Phi$ sampled from the distribution satisfies
\[
    \Pp \!\left\{
        (1-\eps)\lVert x\rVert_2^2
        \le \lVert \Phi x\rVert_2^2
        \le (1+\eps)\lVert x\rVert_2^2
        \quad\text{for every }x\in V
    \right\}
    \ge 1-\delta.
\]
\end{theorem}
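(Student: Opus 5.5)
We will take $\Phi = s^{-1/2}[\Phi_1;\dots;\Phi_s]$ with independent CountSketch layers $\Phi_\ell\in\R^{(m/s)\times n}$. Each $\Phi_\ell$ has one nonzero per column: the entry $\sigma_\ell(j)\in\{\pm1\}$ sits in row $h_\ell(j)$, where $h_\ell$ and $\sigma_\ell$ are independent and uniform. Fix an orthonormal basis $U\in\R^{n\times d}$ of $V$. The statement is then equivalent to
\[
\lVert U^\tT\Phi^\tT\Phi U-\Id\rVert_\op\le\eps .
\]
Write
\[
E=U^\tT\Phi^\tT\Phi U-\Id=\frac1s\sum_{\ell}\bigl(U^\tT\Phi_\ell^\tT\Phi_\ell U-\Id\bigr)=\frac1s\sum_\ell X_\ell .
\]
Here each $X_\ell$ is mean zero and collects the off-diagonal collision terms
\[
\sum_{i\ne j}\one[h_\ell(i)=h_\ell(j)]\,\sigma_\ell(i)\sigma_\ell(j)\,u_iu_j^\tT ,
\]
where $u_i$ denotes the $i$th row of $U$. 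The proof has two stages: first bound $\E\lVert E\rVert_\op$, then upgrade this to a tail bound of order $\log(1/\delta)$.

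\textbf{Stage 1: the mean.} I will show $\E\lVert E\rVert_\op\le \eps/2$ whenever $m\ge C d/\eps^2$ and $s\ge C\log d/\eps$. The tool is the trace moment $\E\tr(E^{2q})$ with $q\approx\log d$, which gives $\E\lVert E\rVert\le(\E\tr E^{2q})^{1/2q}$ up to a factor $d^{1/2q}=O(1)$. Expanding the trace yields a sum over closed walks on the indices $i_1,\dots,i_{2q}$. Each step carries a layer label, and the sign symmetrization forces every (layer, edge) pair to appear an even number of times. Two kinds of contributions then appear.

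Walks that spread their steps over many distinct layers behave like a Gaussian/Wishart matrix with variance proxy $d/m$. This is the Cohen--Nelson--Woodruff type count, and it contributes about $(\sqrt{d/m})^{2q}$.

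Walks that concentrate many steps inside a single layer pay a factor $1/s$ for each repeated layer. They are controlled by $\sum_i\lVert u_i\rVert^2=d$ together with $\lVert u_i\rVert\le1$, and they contribute about $(q/s)^{2q}$ times combinatorial factors.

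Balancing $q/s\lesssim\eps$ and $\sqrt{d/m}\lesssim\eps$ with $q\sim\log d$ gives the mean bound. \emph{This combinatorial accounting is the main obstacle.} The difficulty is organizing the walk graphs so that the layer-collision weight $(m/s)^{-1}$ per collision and the $1/s$ normalizations combine into the two clean regimes, without picking up an extra $\log d$ factor in $m$. My first attempt will be a decoupling and Hanson--Wright-style chaos bound for each layer, followed by a matrix Rosenthal inequality across the $s$ independent layers. That route yields $\E\lVert E\rVert\lesssim\sqrt{d/m}+\sqrt{\log d/s}\,\cdot(\text{single-layer scale})+\log d/s$. The single-layer scale is $O(1)$, which is why $s\sim\log d/\eps$ is needed. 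If the middle term turns out too large, I will fall back to the direct moment method described above.

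\textbf{Stage 2: from mean to high probability.} Consider $F=\lVert E\rVert_\op$ as a function of the $s$ independent layers, and also of the independent per-coordinate choices $(h_\ell(j),\sigma_\ell(j))$. Replacing one layer changes $F$ by at most $\frac1s\lVert X_\ell-X_\ell'\rVert$, but this change is not uniformly small. I will therefore use a resampling (Efron--Stein / entropy) moment inequality of the form $\lVert F-\E F\rVert_p\lesssim\sqrt{p}\,\lVert\sqrt{V_+}\rVert_p+p\,\lVert M\rVert_p$ with $p=\log(1/\delta)$. For $V_+$, use the variational formula: $F=\sup_{x}|x^\tT Ex|$ is attained at a unit $x\in V$. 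The variance contribution is then bounded by $\sum_\ell s^{-2}(x^\tT X_\ell x)^2$, whose expectation for fixed $x$ is about $s^{-1}\cdot s/m\cdot\lVert x\rVert^4$ in sketch terms. Using both the row count $m$ and $s$ gives $\sqrt{p}\cdot\sqrt{1/m}\lesssim\eps$ once $m\gtrsim\log(1/\delta)/\eps^2$. The max-increment term $M$, for a single-layer change, is at most $\lVert X_\ell\rVert/s\lesssim 1/s$ after a truncation argument, which contributes $p/s\lesssim\eps$ once $s\gtrsim\log(1/\delta)/\eps$. Markov's inequality at order $p$ then gives $\Pp\{F>\E F+\eps/2\}\le\delta$. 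Combined with Stage 1 and the choices $m=C(d+\log(1/\delta))/\eps^2$ and $s=C\log(d/\delta)/\eps$, this proves Theorem~\ref{thm:main}.
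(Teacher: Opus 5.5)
Your overall architecture matches the paper's: SparseStack, a $\Theta(\log d)$ trace moment for $\E\lVert E\rVert_{\op}$, then a resampling moment inequality. Both stages, however, have genuine gaps.

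\textbf{Stage 1.} The mean bound is the heart of the problem, and it is not proved.

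The Hanson--Wright/matrix Rosenthal route cannot reach the target. The single-layer scale in your middle term is $\lVert\E X_\ell^2\rVert_{\op}^{1/2}\ge\sqrt{(d-1)s/m}$, since a layer has only $m/s$ rows; it is not $O(1)$. So that term is $\sqrt{d\log d/m}$, which is Cohen's $m\gtrsim d\log d/\eps^2$. Even if the scale were $O(1)$, the term $\sqrt{\log d/s}$ would force $s\gtrsim\log d/\eps^2$.

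The fallback moment method is only a heuristic: the contributions $(\sqrt{d/m})^{2q}$ and $(q/s)^{2q}$ are asserted, not derived. The concrete obstruction is this. After sign averaging, each surviving term carries a Gram product $\prod_t\langle u_{j_t},u_{i_{t+1}}\rangle$ of indefinite sign, summed over labelings that must be distinct within each layer. Taking absolute values loses powers of $d$, and the contraction $\sum_i u_iu_i^\top=\Id$ is unavailable under distinctness.

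The paper resolves this in several steps:
\begin{itemize}
\item It groups tuples by the partition $\pi$ of trace positions.
\item It proves a restricted Eulerian contraction bound $d^{\cc(\cG_\pi)}$ for arbitrary per-vertex label sets, using two edge-disjoint spanning trees or reduction across a two-edge cut.
\item It enforces per-layer injectivity by inclusion--exclusion at cost $(2e)^{|\pi|}$.
\item It combines this with the $s^{\cc(\cC_\pi)}$ layer choices and collision probability $B^{\cc(\cC_\pi)-|\pi|}$ to get $|\Cont(\pi)|\le d^{1-g_\pi}p^{|\pi|}$, where $g_\pi$ is a cycle rank.
\item A weighted count of canonical walks then shows that the $d^{-g}$ penalty pays for cyclic supports once $\ell\le\log d/\log 16$.
\end{itemize}
Nothing in your outline plays the role of these steps.

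\textbf{Stage 2.} Whole-layer resampling cannot produce the additive $d+\log(1/\delta)$ dependence.

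First, your claim $\lVert X_\ell\rVert_{\op}\lesssim1$ is false in the target regime. $\Id+X_\ell$ has rank at most $m/s$, and its trace $\lVert\Phi_\ell U\rVert_F^2$ concentrates near $d$. Hence typically $\lVert X_\ell\rVert_{\op}\gtrsim ds/m$, which is of order $\eps\log(d/\delta)$ when $d\gtrsim\log(1/\delta)$. No truncation removes this. Your increment bound is therefore of order $d/m$, and $p\cdot d/m\le\eps$ fails, for example, for constant $\eps$ and $d\approx\log(1/\delta)\to\infty$.

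The variance step has the same problem. $\E(x^\top X_\ell x)^2\lesssim s/m$ holds for fixed $x$, but $x$ is the random top eigenvector. The natural layerwise uniform bound $\sum_\ell s^{-2}(x^\top X_\ell x)^2\le\frac{Z+2}{s}\max_\ell\lVert X_\ell\rVert_{\op}\approx d/m$ again gives only the multiplicative $m\gtrsim d\log(1/\delta)/\eps^2$.

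The paper circumvents this by splitting $\lVert E\rVert_{\op}=\max\{Y,Z\}$:
\begin{itemize}
\item A layer can lower a quadratic form by at most $1/s$, so the lower distortion $Y$ is handled layerwise by Bousquet's inequality.
\item The upper distortion $Z$ is handled by resampling single coordinates $(h_\gamma(i),\sigma_{\gamma i})$. The fresh-bucket term averages to $(Z+2)/m$.
\item The original-bucket term is controlled, via an algebraic bucket-influence inequality, by two quantities. One is an interaction energy $\mathcal K$, bounded by matrix Bernstein as $\lVert\mathcal K\rVert_q\lesssim 1/m+q/s^2$. The other is a truncated bucket excess $\mathcal Q$, reduced by Rademacher contraction to $\lVert E\rVert_{\op}/s$.
\item This gives a self-referential moment bound, which Young's inequality closes.
\end{itemize}
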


Nelson and Nguyen~\cite{NelsonNguyen2013} introduced Oblivious Sparse Norm-Approximating Projections (OSNAPs), a class of sparse embeddings in which every column has exactly $s$ signed entries of magnitude $s^{-1/2}$ and the support satisfies a suitable negative-correlation condition. They gave two concrete support distributions. In the first, the $s$ nonzero locations in each column form a uniformly random $s$-subset of the $m$ rows. In the second, the rows are partitioned into $s$ groups and one location is selected in each group. SparseStack is the fully independent instance of the second distribution, with the groups identified with its $s$ CountSketch layers. The SparseStack construction was analyzed earlier by Kane and Nelson in their work on sparse Johnson--Lindenstrauss transforms~\cite{KaneNelson2012}. The high-probability version of the Nelson--Nguyen conjecture for SparseStack was also listed as an open problem in~\cite{AmselEtAl2026}.

\paragraph{Technical overview.}
We prove Theorem~\ref{thm:main} using the SparseStack construction.
The proof has two stages. First, let \(U\) have orthonormal columns
spanning \(V\), and set
\[
E = U^\top \Phi^\top \Phi U - I_d.
\]
Then \(\lVert E\rVert_{\mathrm{op}}\) is exactly the maximum absolute
deviation of \(\lVert \Phi x\rVert_2^2\) from \(1\) over unit vectors
\(x \in V\). We control
\(\mathbb{E}\lVert E\rVert_{\mathrm{op}}\) by a trace-moment argument: for an
even \(\ell = \Theta(\log d)\), we bound
\(\mathbb{E}\operatorname{tr}(E^\ell)\).
The main challenge is that, in the resulting expansion, the randomness of the sketch is intertwined with the geometry of the subspace $V$.
We reorganize these terms
and introduce combinatorial objects so that these two effects can be bounded together.

We then upgrade the mean bound to a high-probability guarantee. For a fixed subspace \(V\), we call the largest decrease of \(\|\Phi x\|_2^2\) below \(1\), over unit vectors \(x\in V\), the lower distortion, and the largest increase above \(1\) the upper distortion. In our proof, lower distortion can be controlled one layer at a time, while upper distortion requires resampling individual random choices in SparseStack. Combining the two bounds shows that $\Phi$ preserves the squared norm of every vector in $V$ with the desired probability.

\paragraph{Further related work.}
Clarkson and Woodruff introduced sparse oblivious subspace embeddings in input-sparsity-time algorithms~\cite{ClarksonWoodruff2013}. For the CountSketch distribution, Nelson and Nguyen's optimized analysis gives $m=O(d^2/(\eps^2\delta))$ with failure probability at most $\delta$~\cite{NelsonNguyen2013}. For the OSNAP family, Nelson and Nguyen obtained $m=O(d\log^8 d/\eps^2)$ and $s=O(\log^3 d/\eps)$. They also gave the incomparable tradeoff $m=O(d^{1+\gamma}/\eps^2)$ and $s=O(1/\eps)$ for every fixed $\gamma>0$~\cite{NelsonNguyen2013}. Bourgain, Dirksen, and Nelson subsequently obtained $m=O(d\log^2 d/\eps^2)$ with $s=O(\log^4 d/\eps^2)$~\cite{BourgainDirksenNelson2015}, while Cohen reached the conjectured column sparsity with one remaining logarithmic factor in the embedding dimension~\cite{Cohen2016}.

Recent work first attained the optimal embedding dimension with larger column sparsity~\cite{HogsgaardEtAl2024,ChenakkodEtAl2024}, then reduced the column sparsity to near-optimal~\cite{ChenakkodDerezinskiDong2025}, and finally reached both target parameters up to subpolylogarithmic factors~\cite{ChenakkodDerezinskiDong2026}. Table~\ref{tab:sparse-ose-progress} summarizes these bounds.

\begin{table}[H]
\centering
\caption{Selected sparse oblivious subspace-embedding upper bounds.}
\label{tab:sparse-ose-progress}
\scriptsize
\setlength{\tabcolsep}{2.8pt}
\renewcommand{\arraystretch}{1.2}
\begin{tabularx}{\textwidth}{@{}>{\raggedright\arraybackslash}p{0.18\textwidth}>{\centering\arraybackslash}p{0.200\textwidth}>{\centering\arraybackslash}p{0.200\textwidth}>{\centering\arraybackslash}p{0.105\textwidth}>{\raggedright\arraybackslash}X@{}}
\toprule
Result & Embedding dimension $m$ & Column sparsity $s$ & Failure probability & Main distinction \\
\midrule
\rowcolor{gray!10}
{\bfseries Nelson--Nguyen conjecture~\cite{NelsonNguyen2013}}
    & {\boldmath\(O((d+\log(1/\delta))/\eps^2)\)}
    & {\boldmath\(O(\log(d/\delta)/\eps)\)}
    & {\boldmath\(\delta\)}
    & \textbf{Conjectured target tradeoff.} \\
\addlinespace[1pt]
CountSketch~\cite{NelsonNguyen2013,ClarksonWoodruff2013,MengMahoney2013}
    & $O(d^2/(\eps^2\delta))$
    & $1$
    & $\delta$
    & One nonzero per column; linear $1/\delta$ dependence in $m$. \\
Nelson--Nguyen~\cite{NelsonNguyen2013}
    & $O(d\log^8 d/\eps^2)$
    & $O(\log^3 d/\eps)$
    & $\le1/3$
    & Original OSNAP bound near both targets. \\
Bourgain--Dirksen--Nelson~\cite{BourgainDirksenNelson2015}
    & $O(d\log^2 d/\eps^2)$
    & $O(\log^4 d/\eps^2)$
    & constant
    & Improved embedding dimension in the sparse-JL line. \\
Cohen~\cite{Cohen2016}
    & $O(d\log(d/\delta)/\eps^2)$
    & $O(\log(d/\delta)/\eps)$
    & $\delta$
    & Conjectured column sparsity with one extra logarithm in $m$. \\
H{\o}gsgaard et al.\ (2024) \cite{HogsgaardEtAl2024}
    & $O(d/\varepsilon^2)$
    & \makecell{
        $O((d/\log(1/\varepsilon)$\\[-1pt]
        $\quad{}+d^{2/3}\log^{1/3}d)/\varepsilon)$
    }
    &
    $2^{-d^{2/3}}$
    &
First optimal $m$ with column sparsity $o(d/\varepsilon)$.
\\
Chenakkod et al.\ (2024)~\cite{ChenakkodEtAl2024}
    & $O((d+\log(1/\delta))/\eps^2)$
    & $O(\log^4(d/\delta)/\eps^6)$
    & $\delta$
    & First optimal $m$ with polylogarithmic column sparsity. \\
Chenakkod--Derezi\'nski--Dong (2025)~\cite{ChenakkodDerezinskiDong2025}
    & $O((d+\log(1/\delta))/\eps^2)$
    & \makecell[c]{$O(\log^2(d/(\eps\delta))/\eps)$\\[-1pt]$+\,O(\log^3(d/(\eps\delta)))$}
    & $\delta$
    & Optimal $m$ with near-optimal column sparsity. \\
Chenakkod--Derezi\'nski--Dong (2026)~\cite{ChenakkodDerezinskiDong2026}
    & \makecell[c]{$O((d+\log(d/\delta))$\\[-1pt]$\log^{o(1)}(d/\delta)/\eps^2)$}
    & $O(\log^{1+o(1)}(d/\delta)/\eps)$
    & $\delta$
    & Both targets up to subpolylogarithmic factors. \\
Heidary (2026)~\cite{Heidary2026} (concurrent)
    & $O((d+\log(1/\delta))/\eps^2)$
    & $O(\log(d/\delta)/\eps)$
    & $\delta$
    & Finite tensor-product bounds for high trace moments. \\
\rowcolor{gray!16}
\textbf{This paper}
    & {\boldmath\(O((d+\log(1/\delta))/\eps^2)\)}
    & {\boldmath\(O(\log(d/\delta)/\eps)\)}
    & {\boldmath\(\delta\)}
    & \textbf{Combinatorial mean bound and mean-to-moments concentration.} \\
\bottomrule
\end{tabularx}
\end{table}

Some earlier results are applicable directly to SparseStack. They either analyzed the exact SparseStack distribution or considered broader sparse embedding families that contained it. H{\o}gsgaard et al.~\cite{HogsgaardEtAl2024} and Chenakkod, Derezi\'nski, and Dong~\cite{ChenakkodDerezinskiDong2026} analyzed SparseStack directly. Chenakkod et al.~\cite{ChenakkodEtAl2024} studied the same distribution as an OSNAP with jointly independent subcolumns. The broader families generalized different aspects of the construction. Nelson and Nguyen~\cite{NelsonNguyen2013} and Cohen~\cite{Cohen2016} analyzed OSNAP families with fixed column sparsity and a global negative-correlation condition on the support indicators. Chenakkod, Derezi\'nski, and Dong~\cite{ChenakkodDerezinskiDong2025} retained the one-nonzero-per-subcolumn structure but required only $O(\log(d/(\varepsilon\delta)))$-wise independence in the signs and locations. Bourgain, Dirksen, and Nelson~\cite{BourgainDirksenNelson2015} studied a different class of sparse Johnson--Lindenstrauss transforms with exactly $s$ nonzeros per column and negatively correlated support indicators within each column. They assumed that supports were independent across columns and that signs were independent Rademacher variables.

Tropp's comparison theorem~\cite{Tropp2026} gave a lower distortion guarantee result for complex Steinhaus SparseStack. In contrast, Theorem~\ref{thm:main} establishes the full two-sided distortion guarantee for the real Rademacher SparseStack. Tikhomirov developed a level-set entropy method for the upper spectral edge of sparse random matrices
\cite{Tikhomirov2026LevelSetEntropy}. The method applies to a broader
class of models with negatively associated support masks, including
Rademacher SparseStack. For SparseStack, it gives a constant-factor upper-distortion bound
with high probability using $m = O\left(d(\log\log d)^2\right)$, and $s=O(\log m)$.

On the lower-bound side, Nelson and Nguyen established the optimal general dependence of the embedding dimension and initiated lower bounds relating column sparsity and embedding dimension~\cite{NelsonNguyen2014}. Li and Liu proved sharp limitations for one-sparse embeddings and subsequently strengthened these tradeoffs~\cite{LiLiu2022,LiLiu2026}.

The trace-moment expansion approach follows sparse Johnson--Lindenstrauss and OSNAP analyses~\cite{KaneNelson2012,BravermanOstrovskyRabani2010,NelsonNguyen2013}. The walk encodings are related to sparse random-matrix moment methods~\cite{BauerGolinelli2001,HainzlPanafieu2026,BenaychGeorgesEtAl2020}. The Eulerian contraction used in Section~\ref{sec:gram} isolates a mechanism already present in the OSNAP moment argument. The form used in this paper allows vertex-dependent label restrictions and achieves the factor $d$ per connected component explicitly.

\paragraph{Concurrent work.}
Independently, Heidary~\cite{Heidary2026} established the same
parameter tradeoff for the fully independent SparseStack distribution. The proof couples each signed one-hot selector to a vector with independent
three-point entries and directly controls arbitrary even trace moments through
a finite tensor-product operator representation. Our proof instead bounds the
mean spectral error using injective Eulerian Gram sums and a cycle-rank-weighted
count of canonical walks, and then upgrades the mean bound using layerwise
Bousquet concentration and coordinate resampling.

\section{Preliminaries}\label{sec:preliminaries}

\subsection{Notation}

For a positive integer \(k\), we write
\[
[k] := \{1,\ldots,k\}.
\]
For a finite set \(A\), its cardinality is denoted by \(|A|\).
\(\mathbf{1}\{\mathcal{E}\}\) denotes the indicator of an event or
statement \(\mathcal{E}\).

For vectors \(x,y\), we write \(\langle x,y\rangle\) for the Euclidean
inner product and \(\|x\|_2\) for the Euclidean norm. For a matrix \(M\),
we write \(M^\top\) for its transpose, \(\operatorname{tr}(M)\) for its
trace, and \(\|M\|_{\mathrm{op}}\) for its operator norm. The
\(r\times r\) identity matrix is denoted by \(I_r\). If \(M\) is
symmetric, then \(\lambda_{\min}(M)\) and \(\lambda_{\max}(M)\) denote
its smallest and largest eigenvalues, respectively, and \(M\succeq 0\)
means that \(M\) is positive semidefinite.

For a real-valued random variable \(X\) and \(q\geq 1\), let
\[
\|X\|_q := \bigl(\mathbb{E}|X|^q\bigr)^{1/q}.
\]
For \(a\in\mathbb{R}\), define its positive and negative parts by
\[
(a)_+ := \max\{a,0\},
\qquad
(a)_- := \max\{-a,0\}.
\]

A partition \(\pi\) of a set \(\Omega\) is a collection of pairwise
disjoint nonempty subsets whose union is \(\Omega\). The elements of
\(\pi\) are called its blocks. For \(z\in\Omega\), we write
$
[z]_\pi
$
for the unique block of \(\pi\) containing \(z\). In other words, \([z]_\pi\) is the equivalence class of \(z\) under the equivalence
relation induced by \(\pi\). We write \(|\pi|\) for the number of
blocks of \(\pi\).

For a graph or multigraph \(H\), we write \(V(H)\) and \(E(H)\) for its
vertex and edge sets, \(\deg_H(v)\) for the degree of a vertex \(v\),
and \(\operatorname{cc}(H)\) for the number of connected components of
\(H\). A loop contributes two to the degree of its incident
vertex.

\subsection{SparseStack construction}

For positive integers $s$ and $B$, define $\operatorname{SparseStack}(s,B)$ as follows.
For each layer $\gamma\in[s]$, independently sample $h_\gamma(i)$ uniformly from $[B]$ for every $i\in[n]$, and sample independent Rademacher signs $\sigma_{\gamma i}\in\{-1,1\}$. Define $S\in\R^{sB\times n}$ by
\begin{equation*}
    S_{(\gamma,b),i}
    =\sigma_{\gamma i}\one\{h_\gamma(i)=b\},
    \qquad
    \Phi=s^{-1/2}S.
\end{equation*}
Figure~\ref{fig:sparsestack-schematic} illustrates the construction.

\begin{figure}[H]
  \centering
  \resizebox{\linewidth}{!}{\begin{tikzpicture}[font=\small]

    \node[panel title] at (1.30,5.72) {(a) SparseStack};

    \node[font=\large] at (0.82,2.52) {$S=$};
    \draw[dim] (1.30,5.13) -- node[fill=white,inner sep=1.5pt,above=2pt] {$n$ columns} (5.55,5.13);
    \draw[dim] (0.17,0.22) -- (0.17,4.86);
    \node[anchor=east,align=right,fill=white,inner sep=1.5pt] at (0.00,2.54)
      {$s$ layers\\[-1pt]$sB$ rows};

    \node[layer]          (L1) at (3.425,4.48) {$S_{1}\in\mathbb{R}^{B\times n}$};
    \node                 at (3.425,3.58) {$\vdots$};
    \node[selected layer] (Lg) at (3.425,2.52) {$S_{\gamma}\in\mathbb{R}^{B\times n}$};
    \node                 at (3.425,1.48) {$\vdots$};
    \node[layer]          (Ls) at (3.425,0.61) {$S_{s}\in\mathbb{R}^{B\times n}$};

    \node[panel title] at (8.05,5.72) {(b) One layer};

    \coordinate (Msw) at (8.05,0.70);
    \coordinate (Mnw) at (8.05,4.67);
    \coordinate (Mne) at (14.90,4.67);

    \draw[zoom line] (Lg.north east) -- (Mnw);
    \draw[zoom line] (Lg.south east) -- (Msw);

    \draw[matrix frame] (Msw) rectangle (Mne);

    \node at (8.62,4.92) {$1$};
    \node at (9.62,4.92) {$2$};
    \node at (10.62,4.92) {$3$};
    \node at (12.10,4.92) {$i$};
    \node at (13.35,4.92) {$\cdots$};
    \node at (14.45,4.92) {$n$};

    \node[anchor=east,fill=white,inner sep=1pt] at (7.82,4.18) {$1$};
    \node[anchor=east,fill=white,inner sep=1pt] at (7.82,3.35) {$2$};
    \node[anchor=east,fill=white,inner sep=1pt] at (7.82,2.80) {$\vdots$};
    \node[anchor=east,fill=white,inner sep=1pt] at (7.82,2.08) {$h_{\gamma}(i)$};
    \node[anchor=east,fill=white,inner sep=1pt] at (7.82,1.52) {$\vdots$};
    \node[anchor=east,fill=white,inner sep=1pt] at (7.82,1.08) {$B$};

    \draw[guide] (12.10,4.67) -- (12.10,2.36);
    \draw[guide] (12.10,1.80) -- (12.10,0.70);
    \draw[guide] (8.05,2.08) -- (11.75,2.08);
    \draw[guide] (12.45,2.08) -- (14.90,2.08);

    \node[nz]       at (8.62,4.18) {$+1$};
    \node[nz]       at (10.62,3.35) {$+1$};
    \node[nz]       at (14.45,3.35) {$-1$};
    \node[nz]       at (9.62,0.98) {$-1$};
    \node[focus nz] at (12.10,2.08) {$\sigma_{\gamma i}$};

    \draw[dim] (8.05,0.32) -- node[fill=white,inner sep=1.5pt,below=2pt] {$n$ columns} (14.90,0.32);
    \draw[dim] (15.25,0.70) -- node[fill=white,inner sep=1.5pt,right=3pt] {$B$ rows} (15.25,4.67);
    \node at (11.475,-0.4)
      {$S_{(\gamma,b),i} =\sigma_{\gamma i}\one\{h_\gamma(i)=b\}$};

    \end{tikzpicture}
  }
  \caption{SparseStack $S$ is formed by vertically stacking $s$ independent $B\times n$ CountSketch layers. The highlighted layer is enlarged on the right. In layer $\gamma$, column $i$ has one nonzero, located in row $h_{\gamma}(i)$ and equal to $\sigma_{\gamma i}$.}
  \label{fig:sparsestack-schematic}
\end{figure}

Every column has one nonzero in each layer and therefore exactly $s$ nonzeros. For a layer--bucket pair $\beta=(\gamma,b)$, let
\begin{equation*}
    I_\beta=\{i\in[n]:h_\gamma(i)=b\}.
\end{equation*}

\section{Proof overview}
\label{sec:overview}

Fix a $d$-dimensional subspace $V\subseteq\R^n$, and let $U\in\R^{n\times d}$ have orthonormal columns spanning $V$. Let
\[
    E=U^\top\Phi^\top\Phi U-\Id.
\]
The subspace-embedding guarantee is equivalent to $\lVert E\rVert_{\op}\le\eps$.

The proof separates the mean bound from a mean-to-moments
concentration argument. First, a trace-moment argument shows that, with
\[
    R:=\lVert E\rVert_{\op},
    \qquad
    \mu:=\E R,
\]
we have $\mu=O(\eps)$. Independently, we prove the modular bound
\[
    \lVert R\rVert_q
    =O\left(
        \mu+\sqrt{\frac qm}+\frac qs
    \right),
    \qquad q\ge4\log(2d^2).
\]
Combining the two bounds and taking
$q=\Theta(\log(d/\delta))$ gives the desired failure probability.

For the mean bound, let $0<\eps\le1$, let $p>0$ be an auxiliary
parameter, and assume that
\begin{equation}\label{eq:technical-regime}
    p\ge\log d,
    \qquad
    \frac{p}{8\eps}\le s\le\frac{p}{4\eps},
    \qquad
    B\ge\frac{8d}{\eps p},
    \qquad
    m=sB.
\end{equation}
The final section verifies these assumptions for concrete parameter choices used in Theorem~\ref{thm:main}.
The mean-to-moments proposition below holds for arbitrary positive
integers $s$ and $B$ and does not require \eqref{eq:technical-regime}.

\subsection{Mean spectral error}
Our first goal is to control the mean spectral error.
\begin{proposition}[Mean spectral error]\label{prop:mean-error}
There exists a universal constant $C_1$ such
that
\[
    \E\lVert E\rVert_{\op}\le C_1\eps.
\]
\end{proposition}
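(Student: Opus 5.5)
We bound the mean through the trace moment. Take $\ell$ to be an even integer with $\ell\approx p$ (so $\ell\ge\log d$). Since $E$ is symmetric, $\lVert E\rVert_{\op}^\ell\le\tr(E^\ell)$, and therefore
\[
\E\lVert E\rVert_{\op}\le(\E\tr(E^\ell))^{1/\ell}.
\]
It is then enough to show $\E\tr(E^\ell)\le d\,(C\eps)^\ell$, because $d^{1/\ell}=O(1)$ once $\ell\ge\log d$. Write $u_i\in\R^d$ for the $i$th row of $U$, so that $\sum_i u_iu_i^\top=\Id$. Expanding,
\[
E=\frac1s\sum_{\beta=(\gamma,b)}\ \sum_{i\ne j\in I_\beta}\sigma_{\gamma i}\sigma_{\gamma j}\,u_iu_j^\top ,
\]
where the diagonal terms $i=j$ cancel exactly against $\Id$: each column has exactly one nonzero per layer, so $\sum_\beta\sum_{i\in I_\beta}u_iu_i^\top=s\Id$. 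Then $\E\tr(E^\ell)$ becomes a sum over closed walks of length $\ell$. Each step $t$ carries a bucket label $\beta_t$ and an ordered pair $(i_t,j_t)$ with $i_t\ne j_t$, both hashed to $\beta_t$. The walk contributes the factor $\prod_t\langle u_{j_t},u_{i_{t+1}}\rangle$, a sign expectation, and a hash-collision probability.

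Next I organize the sum by combinatorial type. The type records the partition of the $2\ell$ coordinate slots into distinct indices, and the partition of the $\ell$ steps into distinct layer--bucket pairs, split further by layer. The sign expectation vanishes unless each index appears an even number of times within each layer. Given the type, the hash probability is $B^{-(\#\text{distinct (layer, index) incidences} - \#\text{distinct buckets})}$. The number of layer choices is at most $s^{\#\text{layers}}$, and the number of bucket choices is at most $B^{\#\text{buckets}}$ per layer. For the geometric weight, I sum over the actual indices in $[n]$ under injectivity constraints, using only $\sum_i u_iu_i^\top=\Id$ and $\lVert u_i\rVert_2^2\le1$. The graph has one vertex per distinct index class and one edge per $u$-inner product. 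I would prove an Eulerian contraction lemma: an injective sum of a product of Gram entries over a connected Eulerian multigraph is at most $d$ times an explicit combinatorial factor. To get there, first remove the injectivity by inclusion--exclusion over the partition lattice (Möbius), then walk each Euler circuit and contract using $\sum_i u_iu_i^\top=\Id$. With vertex-dependent restrictions this gives $|\text{sum}|\le C^{\#\text{edges}}\,d^{\cc}$. The combined bound for a type then reads
\[
\frac{1}{s^\ell}\cdot s^{\#\text{layers}}\cdot B^{-(\text{excess})}\cdot d^{\cc}\cdot C^{\ell}.
\]
The regime \eqref{eq:technical-regime} makes each extra layer cost $s/\ell\lesssim 1/\eps$ against a matching $1/s$ saving. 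Each extra component or cycle then costs $d/B\le\eps p/8$, and combining these I expect every type to be $\lesssim d(\eps/\ell)^{\ell}\cdot(\text{per-step factors})$.

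The final step counts canonical walks of each type. I would encode the type by a canonical (first-occurrence) labelling and weight it by its cycle rank $r$. The count of types with a given cycle rank should be at most $\ell^{O(r)}C^\ell$. Each unit of cycle rank is paid for by a factor $1/B$ or $1/s$, and the regime turns this into a factor $\le(\eps/p)\cdot p=\eps$ per step, since $\ell\approx p$ absorbs the $\ell^{O(r)}$. Summing over $r$ as a geometric series gives $\E\tr(E^\ell)\le d(C\eps)^\ell$, and hence $\E\lVert E\rVert_{\op}\le C_1\eps$.

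The main obstacle is the balancing in the second and third steps: showing that the $\ell^{O(r)}$ enumeration of walks with high cycle rank is always offset by the hash and layer savings together with the $d^{\cc}$ geometric bound. The delicate case is repeated indices within one bucket, where the contraction lemma must avoid losing factors of $\ell$. I would handle this first by choosing the canonical encoding so that every non-tree (cycle-creating) step provably forces a new coincidence.
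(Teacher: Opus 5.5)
\textbf{The gap is the choice of moment order.} You take $\ell\approx p$, but \eqref{eq:technical-regime} only bounds $p$ from below by $\log d$. For $\ell\gg\log d$ your intermediate target $\E\tr(E^\ell)\le d(C\eps)^\ell$ is false in that regime.

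Here is a counterexample. Take $n=d=2$, $U=I_2$, $\eps\le 1/8$, $p\ge 16/\eps$, $s\in[p/(8\eps),p/(4\eps)]$ and $B=1$. This choice of $B$ is allowed because $8d/(\eps p)\le 1$. Then $E=\bar\epsilon\bigl(\begin{smallmatrix}0&1\\1&0\end{smallmatrix}\bigr)$ with $\bar\epsilon=\frac1s\sum_{\gamma}\sigma_{\gamma1}\sigma_{\gamma2}$. Hence $\E\lVert E\rVert_{\op}\le s^{-1/2}\le\eps$, consistent with the proposition. But for even $\ell\approx p\le s$, counting perfect pairings gives $(\E\tr E^\ell)^{1/\ell}\gtrsim\sqrt{\ell/s}\gtrsim\sqrt{\eps}$, which is not $O(\eps)$.

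The reason is that the regime only forces $m=sB\gtrsim\max\{d/\eps^2,p/\eps\}$. An $\ell$-th moment of order $\eps$, however, needs the fluctuation $\sqrt{\ell/m}$ to be $O(\eps)$.

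Combinatorially, your accounting has the roles reversed. Once the layer count, hash probability and Gram bound are combined, a partition contributes at most $s^{-\ell}d^{1-g_\pi}p^{|\pi|}$. So the only saving attached to a unit of cycle rank is $d^{-1}$; there is no extra $1/B$ or $1/s$. The condition $p\ge\ell$ pays only for the at most $\ell$ ``old-edge'' choices made when an edge is retraversed. It cannot pay for the number of ways to route a $2$-core of cycle rank $g$, which grows with $\ell$ per unit of $g$ (the paper's encoding gives $\ell^g[2(g+1)]^{2\ell}$). Absorbing that count with $d^{-g}$ requires $d\ge 16^\ell$, i.e.\ $\ell=O(\log d)$.

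The paper's route is as follows:
\begin{itemize}
\item It takes $\ell$ even with $\ell\le\log_{16}d$, and $\ell=2$ when $d<16^4$.
\item It proves $|\Cont(\pi)|\le d^{1-g_\pi}p^{|\pi|}$ via the component-incidence multigraph $\cT_\pi$.
\item It injects admissible partitions into canonical walks and bounds $\sum_W d^{-g(W)}p^{e(W)}\le(18p)^\ell$. Here $p\ge\log d\ge\ell$ handles the tree parts and $d^{-g}\le 16^{-g\ell}$ handles the cores.
\item This gives $\E\lVert E\rVert_{\op}\le d^{1/\ell}\cdot 18p/s=O(\eps)$.
\end{itemize}
Moments of order $\log(d/\delta)$ are never extracted from the trace expansion; they come from the separate mean-to-moments argument.

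Your contraction step also needs repair, for two reasons.
\begin{itemize}
\item Möbius inversion over the partition lattice of a group of $t$ vertices has total absolute coefficient $t!$, not $C^t$. The paper instead uses a random colouring $\theta:[n]\to[t]$, costing $(2e)^t$. That is precisely why it needs the contraction bound with arbitrary label sets $A_v$.
\item Contracting along an Euler circuit with $\sum_iu_iu_i^\top=\Id$ breaks down as soon as the circuit revisits a vertex. The paper handles this with two edge-disjoint spanning trees (Nash--Williams) and reductions across two-edge cuts.
\end{itemize}
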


\paragraph{Trace expansion.}
Using the notation introduced in Section~\ref{sec:preliminaries},
where $S$ is the unnormalized SparseStack matrix, $s$ is the number of layers,
$h_\gamma$ are the hash functions, and $\sigma_{\gamma i}$ are the signs, let
\[
    G=sE=U^\top(S^\top S-sI_n)U.
\]
To control $\E\lVert E\rVert_{\op}$, we choose an even integer $\ell=\Theta(\log d)$ and show that
\[
    \E\tr(G^\ell)\le d\,(Cp)^\ell
\]
where $C$ is a universal constant. Since $G$ is real symmetric, this implies
\[
    \E\lVert E\rVert_{\op}
    \le \frac1s\bigl(\E\tr G^\ell\bigr)^{1/\ell}
    =O\!\left(\frac{p}{s}\right) = O(\varepsilon).
\]

Write $u_i^\top$ for the $i$th row of $U$. Expanding $\tr(G^\ell)$, each term is indexed by a \emph{trace tuple} $\tau=((\gamma_t,i_t,j_t))_{t=1}^{\ell}$, with $i_t\ne j_t$:
\[
X_\tau
=
\underbrace{
    \prod_{t=1}^{\ell}
        \sigma_{\gamma_t i_t}\sigma_{\gamma_t j_t}
}_{\text{sign factor}}
\cdot
\underbrace{
    \prod_{t=1}^{\ell}
        \one\!\left\{
            h_{\gamma_t}(i_t)=h_{\gamma_t}(j_t)
        \right\}
}_{\text{hash factor}}
\cdot
\underbrace{
    \prod_{t=1}^{\ell}
        \left\langle u_{j_t},u_{i_{t+1}}\right\rangle
}_{\text{Gram factor}},
\qquad i_{\ell+1}=i_1.
\]
Each $X_\tau$ contains three factors corresponding to the random signs, the required hash collisions, and inner products of the rows of $U$.
The sign factor has expectation zero unless every layer--coordinate pair $(\gamma,i)$ occurs an even number of times among the $2\ell$ trace positions. When this condition holds, every sign has an even power, so the sign factor is equal to $1$. Thus, for the surviving terms, it remains to analyze the hash and Gram factors.

To organize these terms, let $\pi(\tau)$ be a \emph{partition} obtained by partitioning the $2\ell$ \emph{trace positions} of $\tau$
\[
    (\gamma_1,i_1),(\gamma_1,j_1),\ldots,
    (\gamma_\ell,i_\ell),(\gamma_\ell,j_\ell)
\]
according to the equality relation. We call the equivalence classes the blocks of $\pi(\tau)$.
We may
reorganize the trace expansion as
\[
    \E\tr(G^\ell)
    =
    \sum_{\pi}
    \underbrace{
        \sum_{\tau:\,\pi(\tau)=\pi}
        \E X_\tau
    }_{\displaystyle \Cont(\pi)}.
\]
This reorganization groups many differently labeled tuples into a single partition and exposes the underlying combinatorial structure. After discarding the tuples eliminated by the random signs, only partitions with every block of even size will survive. We call them \emph{admissible partitions}.

\paragraph{Contribution of a fixed partition} For a fixed admissible partition \(\pi\), we separately consider the hash and Gram factors by introducing two multigraphs, both with the blocks of \(\pi\) as vertices. For each \(t\), the \textit{collision multigraph} \(\mathcal{C}_\pi\) has an edge between the blocks containing the two trace positions \((\gamma_t,i_t)\) and \((\gamma_t,j_t)\). This edge represents the condition
$ h_{\gamma_t}(i_t)=h_{\gamma_t}(j_t). $

The \textit{Gram multigraph} \(\mathcal{G}_\pi\) has an edge between the blocks containing \((\gamma_t,j_t)\) and \((\gamma_{t+1},i_{t+1})\), with $i_{\ell+1}=i_1$. This edge represents the factor
 $\langle u_{j_t},u_{i_{t+1}}\rangle. $

Let \(\operatorname{cc}(H)\) denote the number of connected components of a graph \(H\). The degree of a block in either \(\mathcal{C}_\pi\) or \(\mathcal{G}_\pi\) equals the number of trace positions in that block. \space
Only partitions whose blocks all have even size survive, because otherwise the sign factor has expectation zero. For such partitions, every vertex of both multigraphs has even degree.

Once the partition \(\pi\) is fixed, a trace tuple \(\tau\) inducing \(\pi\) is obtained by labeling each block \(D\in\pi\) with a layer--coordinate pair
\[
 (\lambda(D),a(D))\in[s]\times[n],
 \]
 with distinct blocks receiving distinct pairs. Every trace position in \(D\) then carries the label \((\lambda(D),a(D))\). Each edge of \(\mathcal C_\pi\) joins two blocks whose trace positions share the same layer index \(\gamma_t\). Hence \(\lambda\) is equal at the endpoints of every edge and therefore constant on each connected component of \(\mathcal C_\pi\). Each component may be assigned any layer in \([s]\), independently of the others, giving
\(
 s^{\operatorname{cc}(\mathcal C_\pi)}
 \)
 possible layer assignments. Moreover, since distinct blocks must represent distinct pairs \((\lambda(D),a(D))\), whenever two blocks are assigned to the same layer, they must receive different coordinates. We call this the \emph{per-layer injectivity constraint}.

Now consider a connected component of \(\mathcal{C}_\pi\) containing \(b\) blocks. These blocks have distinct coordinate labels in a common layer. Each edge contributes an indicator that its endpoint coordinate labels hash to the same bucket. By connectivity, their product equals one exactly when all \(b\) coordinate labels hash to one bucket. This event has probability \(B^{1-b}\). Multiplying over the components gives
\[
B^{\operatorname{cc}(\cC_\pi)-|\pi|}.
\]

The remaining task is to sum the Gram factors represented by
\(\mathcal{G}_\pi\). We call such an expression an \emph{Eulerian Gram sum}: it is a sum
over labelings of the vertices of an Eulerian multigraph, with each
labeling weighted by the product of the Gram factors
\(\langle u_{a(v)},u_{a(w)}\rangle\) over its edges. First consider the unrestricted sum over all coordinate
assignments \(a:\pi\to[n]\). Lemma~\ref{lem:restricted-contraction} gives
\[
\left|
\sum_{a:\pi\to[n]}
\prod_{\{D,D'\}\in E(\mathcal{G}_\pi)}
\langle u_{a(D)},u_{a(D')}\rangle
\right|
\le d^{\operatorname{cc}(\mathcal{G}_\pi)}.
\]

The tuples inducing \(\pi\), however, correspond only to assignments that are
injective within each layer. The unrestricted bound does not immediately apply
to this restricted sum, because the Gram products can have either sign:
removing non-injective assignments may remove cancellations. Our first main
analytic lemma shows that enforcing the constraint costs only an exponential
factor in the number of blocks:
\[
\left|
\sum_{\substack{
a:\pi\to[n]\\
a\text{ is injective within each layer}
}}
\prod_{\{D,D'\}\in E(\mathcal{G}_\pi)}
\langle u_{a(D)},u_{a(D')}\rangle
\right|
\le
(2e)^{|\pi|}d^{\operatorname{cc}(\mathcal{G}_\pi)}.
\]

In particular, the per-layer injectivity constraint costs a factor of at most
\((2e)^{|\pi|}\), while preserving the exponent
\(\operatorname{cc}(\mathcal G_\pi)\) of \(d\). Combining the number of layer
assignments, the hash-collision probability, and the restricted Gram-sum bound,
we obtain
\[
|\Cont(\pi)|
\le
s^{\operatorname{cc}(\mathcal C_\pi)}
B^{\operatorname{cc}(\mathcal C_\pi)-|\pi|}
(2e)^{|\pi|}
d^{\operatorname{cc}(\mathcal G_\pi)}.
\]
Using \eqref{eq:technical-regime}, we can show that
\[
|\Cont(\pi)|
\le
d^{\operatorname{cc}(\mathcal C_\pi)
+\operatorname{cc}(\mathcal G_\pi)-|\pi|}
p^{|\pi|}.
\]

To interpret this exponent combinatorially, define the bipartite multigraph
\(\mathcal T_\pi\) as follows. Its left vertices are the connected components
of \(\mathcal C_\pi\), its right vertices are the connected components of
\(\mathcal G_\pi\), and each block of \(\pi\) forms an edge joining the two
components that contain it. Thus, \(\mathcal T_\pi\) is a loopless bipartite multigraph, although it may
have parallel edges.
The graph \(\mathcal T_\pi\) is connected, with
\(|\pi|\) edges and
\[
\operatorname{cc}(\mathcal C_\pi)
+\operatorname{cc}(\mathcal G_\pi)
\]
vertices. Its cycle rank is therefore
\[
g_\pi
=
|\pi|
-\operatorname{cc}(\mathcal C_\pi)
-\operatorname{cc}(\mathcal G_\pi)
+1.
\]

Consequently,
\[
    |\Cont(\pi)|
    \le d^{1-g_\pi}p^{|\pi|}.
\]
\paragraph{Canonical walks and the weighted count}
It remains to sum these bounds over all admissible partitions \(\pi\).
The factor \(d^{-g_\pi}\) penalizes cyclic structure, but exploiting this
penalty requires counting partitions while keeping track of their cycle rank.
We do this by encoding each partition as a canonical walk whose support has
cycle rank \(g_\pi\).

Read the \(2\ell\) trace positions in cyclic order:
\[
(\gamma_1,i_1),(\gamma_1,j_1),
(\gamma_2,i_2),(\gamma_2,j_2),
\ldots,
(\gamma_\ell,i_\ell),(\gamma_\ell,j_\ell).
\]
Each position belongs to a block of \(\pi\), and each block is an edge of
\(\mathcal T_\pi\). For every \(t\), the blocks containing
\((\gamma_t,i_t)\) and \((\gamma_t,j_t)\) lie in the same connected component
of \(\mathcal C_\pi\), so the corresponding edges of \(\mathcal T_\pi\) share
a left endpoint. Similarly, the blocks containing \((\gamma_t,j_t)\) and
\((\gamma_{t+1},i_{t+1})\) lie in the same connected component of
\(\mathcal G_\pi\), so their edges share a right endpoint, where \(t+1\) is
interpreted cyclically. Thus, the sequence of corresponding edges forms a
closed walk of length \(2\ell\) in \(\mathcal T_\pi\).

We encode the walk canonically by relabeling its vertices and edges in order
of first appearance. The resulting encoding forgets the original names of the
blocks and connected components, recording only the order in which vertices
and edges are visited and revisited. The \emph{support} of a walk \(W\) is the
multigraph formed by the distinct vertices and edges visited by \(W\). Let
\(v(W)\) and \(e(W)\) denote the numbers of vertices and edges in its support.
Since the support is connected, its cycle rank is
\[
g(W)=e(W)-v(W)+1.
\]

The edge corresponding to a block \(D\) is traversed once for each trace
position in \(D\), and hence exactly \(|D|\) times. Since every block of an
admissible partition has even size, every edge in the support is traversed a
positive even number of times. Conversely, the canonical walk recovers the
partition of the trace positions: two positions lie in the same block if and
only if the corresponding steps traverse the same support edge. Thus, the map
from admissible partitions to canonical walks is injective. Moreover, the
support of the walk is \(\mathcal T_\pi\), and therefore
\[
e(W)=|\pi|,
\qquad
g(W)=g_\pi.
\]

Let \(\mathcal W_\ell\) be the class of all canonical walks of length
\(2\ell\) with loopless support such that every support edge is traversed a
positive even number of times. Every admissible partition gives a distinct
walk in \(\mathcal W_\ell\), although not every walk in \(\mathcal W_\ell\)
need arise from an admissible partition. Since the weights
\(d^{-g(W)}p^{e(W)}\) are nonnegative, enlarging the sum gives
\[
\mathbb E\operatorname{tr}(G^\ell)
\le
d\sum_{W\in\mathcal W_\ell}
d^{-g(W)}p^{e(W)}.
\]

The problem is now to bound a weighted count of canonical walks, organized by
the cycle rank of their support. When \(g(W)=0\), the support is a tree, and
the corresponding walks can be counted directly. When \(g(W)\ge1\), the
cycles introduce additional possibilities in the traversal. Each independent
cycle, however, contributes a factor \(d^{-1}\) to the weight. We will show
that, when \(\ell=O(\log d)\), this penalty compensates for the additional
possibilities, and hence
\[
\sum_{W\in\mathcal W_\ell} d^{-g(W)}p^{e(W)}
\le (Cp)^\ell.
\]

First suppose that \(g(W)=0\), so the support is a tree. Root the tree at the
initial vertex of the walk, and record whether each step increases or decreases
the distance from the root. The resulting sequence is a Dyck word: the distance
is initially zero, remains nonnegative, and returns to zero after \(2\ell\)
steps. Once the Dyck word is fixed, every step toward the root is determined by
the current vertex. At each step away from the root, the walk either traverses
an existing child edge or discovers a new one. Every support edge is discovered
exactly once in this way, so the weight \(p^{e(W)}\) contributes one factor of
\(p\) for each discovery. Using \(p\ge\ell\), the total weighted contribution
of the tree-supported walks is at most
\[
(Cp)^\ell
\]
for a universal constant \(C\).

Now suppose that \(g(W)\ge1\). Fix a spanning tree of the support. Since the
support has \(v(W)\) vertices and \(e(W)\) edges, exactly
\[
e(W)-(v(W)-1)=g(W)
\]
edges lie outside the spanning tree. Thus, a cyclic support consists of a tree
together with \(g(W)\) additional edges. The tree part is handled by the
argument above, while the additional choices associated with the non-tree
edges are controlled by the factor \(d^{-g(W)}\).

The weighted walk-count argument in Section~\ref{sec:walk-count} makes this precise. It shows that, provided
\(\ell\le c\log d\) and \(p\ge\ell\),
\[
\sum_{W\in\mathcal W_\ell}
d^{-g(W)}p^{e(W)}
\le (Cp)^\ell
\]
for universal constants \(c,C>0\). Consequently,
\[
\mathbb E\operatorname{tr}(G^\ell)
\le
d(Cp)^\ell.
\]
Since \(G\) is symmetric and \(\ell\) is even,
\[
\mathbb E\|E\|_{\mathrm{op}}
=
\frac1s\mathbb E\|G\|_{\mathrm{op}}
\le
\frac1s
\left(\mathbb E\operatorname{tr}(G^\ell)\right)^{1/\ell}
\le
C d^{1/\ell}\frac{p}{s}.
\]
Choosing \(\ell=\Theta(\log d)\) makes \(d^{1/\ell}\) bounded by a universal
constant, while \(p/s = O(\varepsilon)\). Hence
\[
\mathbb E\|E\|_{\mathrm{op}} = O(\varepsilon)
\]
as desired.

\subsection{Spectral concentration}

The concentration argument depends on the actual mean spectral error
and does not use the parameter regime in
\eqref{eq:technical-regime}.

\begin{proposition}[Mean-to-moments concentration]
\label{prop:mean-to-moments}
Let $n\ge d\ge2$, let $s,B$ be positive integers, set $m=sB$, and let
$\Phi\sim\operatorname{SparseStack}(s,B)$. For a fixed
$U\in\R^{n\times d}$ with $U^\top U=I_d$, set
\[
    E=U^\top\Phi^\top\Phi U-I_d,
    \qquad
    R=\lVert E\rVert_{\op},
    \qquad
    \mu=\E R.
\]
There exists a universal constant $C_{\mathrm{conc}}$ such that, for every
real $q\ge4\log(2d^2)$,
\begin{equation}\label{eq:mean-to-moments}
    \lVert R\rVert_q
    \le
    C_{\mathrm{conc}}\left(
        \mu+\sqrt{\frac qm}+\frac qs
    \right).
\end{equation}
Moreover, for every $t\ge0$,
\begin{equation}\label{eq:mean-to-tail}
    \Pp\!\left\{
        R>
        C_{\mathrm{conc}}\left(
            \mu
            +\sqrt{\frac{\log(2d)+t}{m}}
            +\frac{\log(2d)+t}{s}
        \right)
    \right\}
    \le e^{-t}.
\end{equation}
\end{proposition}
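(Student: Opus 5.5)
We split $R=\max\{R_-,R_+\}$, where
\[
R_-:=\bigl(\lambda_{\min}(U^\top\Phi^\top\Phi U)-1\bigr)_-,\qquad
R_+:=\bigl(\lambda_{\max}(U^\top\Phi^\top\Phi U)-1\bigr)_+ ,
\]
are the lower and upper distortions, and bound $\lVert R_\pm\rVert_q$ separately by $C(\mu+\sqrt{q/m}+q/s)$. The tail bound \eqref{eq:mean-to-tail} then follows from \eqref{eq:mean-to-moments} by Markov's inequality with $q=4\log(2d^2)+2t$ (after adjusting constants, using $\log(2d^2)\le2\log(2d)$). Write $U^\top\Phi^\top\Phi U=\frac1s\sum_{\gamma=1}^s A_\gamma$, with $A_\gamma=U^\top S_\gamma^\top S_\gamma U\succeq0$ independent across layers and $\E A_\gamma=I_d$.

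\emph{Lower distortion.} Here
\[
R_-=\sup_{\lVert x\rVert=1}\frac1s\sum_\gamma\bigl(1-x^\top A_\gamma x\bigr),
\]
a supremum of sums of independent summands over the layers. Each summand is bounded above by $1$, since $A_\gamma\succeq0$, but it is unbounded below. We therefore use a one-sided (Bousquet/Klein--Rio type) concentration inequality for suprema of empirical processes with summands bounded above, truncating the lower tail if necessary; for lower deviations, positivity makes the truncation harmless. The variance proxy is
\[
\sup_x\sum_\gamma\operatorname{Var}\bigl(x^\top A_\gamma x\bigr)/s^2 .
\]
A direct CountSketch computation for a unit vector $y=Ux$ gives $\operatorname{Var}(\lVert S_\gamma y\rVert^2)\le 2/B$, so this proxy is at most $2/(sB)=2/m$. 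Bousquet then gives
\[
\lVert R_-\rVert_q\lesssim \E R_-+\sqrt{q/m}+\sqrt{q\,\E R_-/s}+q/s,
\]
and AM--GM absorbs the cross term. A Bernstein-type moment version suffices; no union bound over $d$ is needed, although the constraint $q\gtrsim\log d$ is available.

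\emph{Upper distortion.} This is the main obstacle, because the summands $x^\top A_\gamma x$ can be as large as $\lVert A_\gamma\rVert$, which may be of order $d$, so the layerwise boundedness fails. Instead of layers, we use the finer independent coordinates $(h_\gamma(i),\sigma_{\gamma i})$ and an Efron--Stein/exponential Efron--Stein moment inequality (Boucheron--Bousquet--Lugosi--Massart) for the function
\[
F=\lambda_{\max}\bigl(U^\top\Phi^\top\Phi U\bigr).
\]
Resampling coordinate $(\gamma,i)$ moves the vector $\sigma_{\gamma i}u_i$ from one bucket of layer $\gamma$ to another. With $v$ a top eigenvector, the one-sided increment satisfies
\[
(F-F^{(\gamma,i)})_+\le \frac1s\Bigl(2\,|\langle u_i,v\rangle|\,\bigl\lVert P_{\gamma,h_\gamma(i)}Uv\bigr\rVert+\lVert u_i\rVert^2\langle u_i,v\rangle^2/\lVert u_i\rVert^2\Bigr),
\]
in terms of the mass of the bucket containing $i$. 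Summing the squared increments over $i$ and $\gamma$ gives
\[
V_+\lesssim \frac{F}{s}\,\max_{\gamma,b}\frac{\lVert P_{\gamma,b}Uv\rVert^2}{1}+\frac{F}{m}\quad\text{(roughly)}.
\]
Using $\sum_i\langle u_i,v\rangle^2=1$ and the bucket norms bounded by $sF$, we aim for $V_+\lesssim (1+R_+)(1/m+R_+/s)$-type self-bounding behavior. The self-bounding moment inequality
\[
\lVert(F-\E F)_+\rVert_q\lesssim\sqrt{q\,\lVert V_+\rVert_{q/2}}
\]
then yields a quadratic inequality in $\lVert R_+\rVert_q$, solved as
\[
\lVert R_+\rVert_q\lesssim\mu+\sqrt{q/m}+q/s .
\]
The delicate point is bounding the per-coordinate increment without picking up the full $d$. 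We expect this to require separating heavy coordinates (those with $\lVert u_i\rVert^2>1/B$ or a similar threshold, of which there are few since $\sum_i\lVert u_i\rVert^2=d$) from light ones. The hypothesis $q\ge4\log(2d^2)$ is used to pay a union or trace-moment factor $d^{2/q}=O(1)$ when converting operator-norm quantities to traces.
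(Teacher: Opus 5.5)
Your architecture is the paper's. You split $R=\max\{Y,Z\}$. You handle the lower distortion layerwise with Bousquet's inequality: summands at most $1/s$, weak variance at most $2/m$, AM--GM on the cross term. No truncation is needed, only the one-sided bound plus a zero index so the supremum is nonnegative. You handle the upper distortion by resampling single choices $(h_\gamma(i),\sigma_{\gamma i})$ with polynomial Efron--Stein, absorb a self-referential term, and get the tail by Markov. The lower-distortion part is fine. The gap is the upper-distortion variance proxy, which is the heart of the proposition and which you only state as an aim (``we aim for'', ``we expect'').

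The increment bound you display cannot deliver the rate. It keeps the self-term $\langle u_i,v\rangle^2/s$ and the full load of $i$'s bucket. After squaring and summing over $(\gamma,i)$ it therefore contains $\frac1s\sum_i\langle u_i,v\rangle^4$, and the cross term is controlled only through the largest bucket mass. Both are of order $1/s$ in simple configurations, e.g.\ when the top eigenvector has a large component along one or two high-leverage rows, such as a colliding pair. Your route then gives at best $\mathcal V^+\lesssim(1+Z)/s$, and Efron--Stein returns a term $\sqrt{q/s}$. With $q\asymp\log(d/\delta)$ and $s\asymp\log(d/\delta)/\eps$ this is $\sqrt{\eps}$ rather than $\eps$. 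Separating heavy from light rows does not cure this, because the loss comes from ignoring signs and cancellations inside buckets, not from the number of heavy rows. Your displayed increment also omits the reinsertion term in the new bucket, which can be positive.

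Here is the chain of ingredients the paper uses, which your proposal does not identify.
\begin{itemize}
\item The self-term cancels exactly because $(t_i')^2=t_i^2$. Hence $Z-Z^{(\alpha)}\le 2\bigl(t_iA^{(-i)}_{\gamma,h_i}-t_i'A^{(-i)}_{\gamma,h_i'}\bigr)$ with leave-one-out loads. Averaging the reinsertion part over the uniform new bucket and sign gives at most $(Z+2)/m$.
\item In the original bucket you must keep the positive part. The algebraic inequality $\sum_{i\in I_\beta}\bigl(t_i(A_\beta-t_i)\bigr)_+^2\lesssim K_\beta+\min\{(Q_\beta)_+^2,(Q_\beta)_+/s\}$, valid because $\sum_{i\in I_\beta}t_i^2\le 1/s$, reduces everything to the hash-only interaction energy $\mathcal K$ and the truncated excess $\mathcal Q$.
\item $\mathcal K$ is bounded by matrix Bernstein for the $d^2\times d^2$ PSD matrices $\sum_{i\ne j}\one\{h_\gamma(i)=h_\gamma(j)\}(u_i\otimes u_j)(u_i\otimes u_j)^\top$. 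This gives $\lVert\mathcal K\rVert_{q/2}\lesssim 1/m+q/s^2$, and it is where $q\ge 4\log(2d^2)$ is used.
\item Conditionally on the hashes, $\psi(z)\le z^2$ and $\E_\sigma Q_\beta^2=2K_\beta$ control the mean of $\mathcal Q$. The $2/s$-Lipschitz property of $\psi$, Rademacher contraction, and $\sum_\beta Q_\beta(v)=v^\top Ev$ control its fluctuations by $\frac1s\lVert\lVert E\rVert_{\op}\rVert_q$.
\end{itemize}
Together these give $\lVert\mathcal V^+\rVert_{q/2}\lesssim 1/m+q/s^2+r_q/s$. This is a moment bound, not a pointwise self-bounding inequality in $Z$, since $\mathcal K$ depends on the hashes alone. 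The recursion $r_q\lesssim\mu+\sqrt{q/m}+q/s+\sqrt{q\,r_q/s}$ then closes by Young's inequality.
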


We prove the moment bound by treating the lower and upper distortions
separately. The tail bound then follows from the moment bound by Markov's
inequality.

For \(a\in\mathbb R\), write \((a)_+=\max\{a,0\}\), and define
\[
Y:=(-\lambda_{\min}(E))_+,
\qquad
Z:=(\lambda_{\max}(E))_+.
\]
Since \(E\) is symmetric,
\[
\|E\|_{\op}=\max\{Y,Z\}.
\]

These quantities have a direct geometric interpretation. For every unit vector
\(x\in\mathbb R^d\),
\[
x^\top E x=\|\Phi Ux\|_2^2-1.
\]
Consequently,
\[
Y=\max_{\|x\|_2=1}\bigl(1-\|\Phi Ux\|_2^2\bigr)_+,
\qquad
Z=\max_{\|x\|_2=1}\bigl(\|\Phi Ux\|_2^2-1\bigr)_+.
\]
Thus, \(Y\) is the maximum decrease in squared norm over unit vectors in \(V\),
while \(Z\) is the maximum increase. We call \(Y\) and \(Z\) the
\emph{lower distortion} and \emph{upper distortion}, respectively.

We prove separate moment bounds for \(Y\) and \(Z\) using different
arguments. \space
A layer may increase the squared norm by an arbitrarily large amount, but it cannot contribute more than \(1/s\) to its decrease.
For the lower distortion, this one-sided boundedness permits an application of Bousquet's
inequality~\cite{Bousquet2002}. No comparable upper bound is available for the
upper distortion, because many coordinates may collide in one bucket and
reinforce one another. We therefore control the lower distortion layer by
layer, while for the upper distortion we resample one individual random choice at a time.

\paragraph{Lower distortion.}
Let
\[
    T_\gamma=U^\top S_\gamma^\top S_\gamma U.
\]
Then
\[
    I_d+E=\frac1s\sum_{\gamma=1}^s T_\gamma,
    \qquad
    T_\gamma\succeq0,
    \qquad
    \E T_\gamma=I_d.
\]
For a unit vector \(v\), define the centered layerwise lower deviation
\[
    L_{\gamma,v}
    :=
    \frac{1-v^\top T_\gamma v}{s}.
\]
The lower distortion can then be written as
\[
    Y
    =
    \max\left\{
        0,\,
        \sup_{\lVert v\rVert_2=1}
        \sum_{\gamma=1}^s L_{\gamma,v}
    \right\}.
\]
For each \(v\), the variables \(L_{\gamma,v}\) are independent
and centered. Moreover, since \(T_\gamma\succeq0\),
\[
    L_{\gamma,v}\le\frac1s.
\]
This one-sided bound permits the use of Bousquet's inequality.

A direct calculation using the independence of the signs and the
collision probability \(1/B\) gives the variance bound
\[
    \sup_{\lVert v\rVert_2=1}
    \sum_{\gamma=1}^s
    \E L_{\gamma,v}^2
    =O\!\left(\frac1m\right).
\]
Since the variables \(L_{\gamma,v}\) are independent and centered, satisfy
\(L_{\gamma,v}\le 1/s\), and obey the variance bound above, Bousquet's inequality~\cite{Bousquet2002} gives
\[
    \lVert Y\rVert_q
    \le
    \E Y
    +O\!\left(
        \sqrt{\frac qm}
        +\sqrt{\frac{q\,\E Y}{s}}
        +\frac qs
    \right).
\]
Since $\E Y\le\mu$, the inequality
\[
    2\sqrt{\frac{\mu q}{s}}
    \le \mu+\frac qs
\]
gives
\[
    \lVert Y\rVert_q
    =
    O\!\left(
        \mu+\sqrt{\frac qm}+\frac qs
    \right).
\]

\paragraph{Upper distortion and coordinate resampling.}
For the upper distortion, there is no useful deterministic upper bound on the
contribution of a single layer. We instead measure the sensitivity of \(Z\) to
the individual random choices. Index these choices by
\[
    \alpha=(\gamma,i),
    \qquad
    X_\alpha=(h_\gamma(i),\sigma_{\gamma i}).
\]
Let \(X_\alpha'\) be an independent copy of \(X_\alpha\), and let
\(E^{(\alpha)}\) and \(Z^{(\alpha)}\) denote the error matrix and upper
distortion obtained by replacing \(X_\alpha\) with \(X_\alpha'\). Define
\[
    \mathcal V^+
    =
    \sum_\alpha
    \E_\alpha'
    \left[(Z-Z^{(\alpha)})_+^2\right],
\]
where \(\E_\alpha'\) averages over \(X_\alpha'\), with all the original random
choices held fixed. The quantity \(\mathcal V^+\) is a one-sided variance
proxy: it measures the total squared decrease in the current upper distortion
under individual resampling. Polynomial Efron--Stein
inequality~\cite{BoucheronEtAl2005} gives
\[
    \bigl\|(Z-\E Z)_+\bigr\|_q
    =
    O\!\left(
        \sqrt{q\,\lVert\mathcal V^+\rVert_{q/2}}
    \right).
\]
It therefore remains to bound \(\lVert\mathcal V^+\rVert_{q/2}\).

Suppose that \(Z>0\), and fix a unit eigenvector \(x\) corresponding to
\(\lambda_{\max}(E)\). Then
\[
x^\top Ex=\lambda_{\max}(E)=Z.
\]
By the Rayleigh--Ritz variational principle,
\[
Z^{(\alpha)}
=
\bigl(\lambda_{\max}(E^{(\alpha)})\bigr)_+
\ge x^\top E^{(\alpha)}x.
\]
Therefore,
\[
(Z-Z^{(\alpha)})_+
\le
\bigl(x^\top(E-E^{(\alpha)})x\bigr)_+.
\]
Thus, the decrease in upper distortion under resampling is controlled by the
change in the quadratic form along the fixed top eigenvector \(x\) of the
original matrix.

Recall that \(u_i^\top\) is the \(i\)th row of \(U\).
For a unit vector \(v\in\mathbb{R}^d\), define
\[
    t_{\gamma i}(v)
    :=
    \frac{\sigma_{\gamma i}\langle u_i,v\rangle}{\sqrt{s}},
    \qquad
    A_\beta(v)
    :=
    \sum_{i:\,h_\gamma(i)=b}t_{\gamma i}(v).
\]
Therefore, $t_{\gamma i}(v)$ represents a signed contribution and $A_\beta(v)$ represents a bucket load.
We use $t_{\gamma i}$ and $A_\beta$ as shorthand notations for $t_{\gamma i}(x)$ and $A_\beta(x)$ respectively, where $x$ is fixed as described above. Thus,
\(
A_{(\gamma,b)}=(\Phi Ux)_{(\gamma,b)}.
\)

Consider \(\alpha=(\gamma,i)\), and write
\[
h_i:=h_\gamma(i),
\qquad
h_i':=h_\gamma'(i)
\]
for the original and resampled buckets of coordinate \(i\). Similarly, let
\(t_i\) and \(t_i'\) denote its signed contributions before and after
resampling. For each bucket \(b\), define the leave-one-out load
\[
A_{\gamma,b}^{(-i)}
:=
A_{(\gamma,b)}-t_i\mathbf 1\{h_i=b\}.
\]
Thus, \(A_{\gamma,h_i}^{(-i)}\) and \(A_{\gamma,h_i'}^{(-i)}\) are the
leave-one-out loads in the original and resampled buckets, respectively.
 Since resampling changes
only this coordinate and \((t_i')^2=t_i^2\),
\[
Z-Z^{(\alpha)}
\le
x^\top(E-E^{(\alpha)})x
=
2\left(
t_iA_{\gamma,h_i}^{(-i)}
-
t_i'A_{\gamma,h_i'}^{(-i)}
\right).
\]
Hence, writing \((a)_-:=(-a)_+\),
\[
\begin{aligned}
\left(Z-Z^{(\alpha)}\right)_+^2
&\le
4\left(
t_iA_{\gamma, h_i}^{(-i)}
-
t_i'A_{\gamma, h_i'}^{(-i)}
\right)_+^2 \\
&\le
8\left(t_iA_{\gamma, h_i}^{(-i)}\right)_+^2
+
8\left(t_i'A_{\gamma, h_i'}^{(-i)}\right)_-^2.
\end{aligned}
\]
Therefore,
\[
\mathcal V^+
\le
8\underbrace{
\sum_{\alpha}
\left(t_iA_{\gamma, h_i}^{(-i)}\right)_+^2
}_{J_{\mathrm{original}}}
+
\,\,
8
\underbrace{
\sum_{\alpha}\E_\alpha'
\left[
\left(t_i'A_{\gamma, h_i'}^{(-i)}\right)_-^2
\right]
}_{J_{\mathrm{resampled}}}.
\]

The two terms correspond to the two interactions through which resampling can
decrease the quadratic form in the fixed direction \(x\). In the original
bucket, \(t_iA_{\gamma, h_i}^{(-i)}>0\) means that coordinate \(i\) reinforces the
leave-one-out load, so removing it decreases the quadratic form. In the
resampled bucket, \(t_i'A_{\gamma, h_i'}^{(-i)}<0\) means that the reinserted
coordinate cancels part of the leave-one-out load, which also decreases the
quadratic form.

\paragraph{The resampled bucket.}
Fix \(\alpha=(\gamma,i)\) and condition on the current sketch. Then \(x\) and
the leave-one-out bucket loads are fixed, while the resampled bucket is uniform
on \([B]\) and the resampled sign is independent. The negative part in
\(J_{\mathrm{resampled}}\) selects the case in which the reinserted coordinate
partially cancels the load in its resampled bucket. Averaging over the resampled sign
and bucket, and then summing over all coordinates and layers, gives
\[
    J_{\mathrm{resampled}}
    \le
    \frac{1}{m}
    \left(
        \sum_\beta A_\beta(x)^2+1
    \right)
    =
    \frac{Z+2}{m}.
\]
Indeed, since \(x\) is a unit top eigenvector of \(E\) and \(Z>0\),
\[
    \sum_\beta A_\beta(x)^2
    =
    \|\Phi Ux\|_2^2
    =
    1+x^\top Ex
    =
    1+Z.
\]
Thus, uniform resampling averages over all buckets rather than favoring those
with unusually large leave-one-out loads.

\paragraph{The original bucket.}
The original-bucket contribution requires a bucketwise analysis.
For a unit vector \(v\), define
\[
J_\beta(v)
:=
\sum_{i\in I_\beta}
\left(
t_{\gamma i}(v)
\bigl(A_\beta(v)-t_{\gamma i}(v)\bigr)
\right)_+^2,
\qquad \beta=(\gamma,b).
\]
For the maximizing direction \(x\) fixed above,
\[
J_{\mathrm{original}}
=
\sum_\beta J_\beta(x).
\]
The expression inside the positive part is the interaction of coordinate \(i\)
with the other coordinates in its bucket:
\[
t_{\gamma i}(v)
\bigl(A_\beta(v)-t_{\gamma i}(v)\bigr)
=
\sum_{\substack{j\in I_\beta\\j\ne i}}
t_{\gamma i}(v)t_{\gamma j}(v).
\]
Summing these interactions over \(i\in I_\beta\), define
\[
\begin{aligned}
Q_\beta(v)
&:=
\sum_{i\in I_\beta}
t_{\gamma i}(v)
\bigl(A_\beta(v)-t_{\gamma i}(v)\bigr) \\
&=
A_\beta(v)^2
-
\sum_{i\in I_\beta}t_{\gamma i}(v)^2 \\
&=
2\sum_{\substack{i,j\in I_\beta\\i<j}}
t_{\gamma i}(v)t_{\gamma j}(v).
\end{aligned}
\]
Thus, \(Q_\beta(v)\) is the signed bucket excess within bucket \(\beta\):
positive values indicate net reinforcement, while negative values indicate net
cancellation.

However, \(Q_\beta(v)\) alone does not control \(J_\beta(v)\).
The quantity \(Q_\beta(v)\) is the signed sum of the coordinatewise
interactions, whereas \(J_\beta(v)\) takes the positive part of each
coordinatewise interaction and squares it before summing. Thus, positive and
negative coordinatewise interactions may cancel in \(Q_\beta(v)\), even when
\(J_\beta(v)\) is large. To control the interactions that may be hidden by
this cancellation, define
\[
\begin{aligned}
K_\beta(v)
&:=
\left(
\sum_{i\in I_\beta}t_{\gamma i}(v)^2
\right)^2
-
\sum_{i\in I_\beta}t_{\gamma i}(v)^4 \\
&=
2\sum_{\substack{i,j\in I_\beta\\i<j}}
t_{\gamma i}(v)^2t_{\gamma j}(v)^2.
\end{aligned}
\]
Unlike \(Q_\beta(v)\), the quantity \(K_\beta(v)\) sums squared pairwise
interactions and therefore admits no sign cancellation.

We can then show that
\[
J_\beta(v)
=
O\!\left(
K_\beta(v)+\psi\bigl(Q_\beta(v)\bigr)
\right)
\]
where
\[
\psi(z)
=
\min\left\{
(z)_+^2,\,
\frac{(z)_+}{s}
\right\}.
\]

Since \(J_{\mathrm{original}}\) is evaluated at the maximizing direction \(x\),
we introduce the global quantities
\[
\mathcal K
:=
\sup_{\|v\|_2=1}
\sum_\beta K_\beta(v),
\qquad
\mathcal Q
:=
\sup_{\|v\|_2=1}
\sum_\beta\psi\bigl(Q_\beta(v)\bigr).
\]
The bucketwise bound then implies
\[
J_{\mathrm{original}}
=
O\!\left(\mathcal K+\mathcal Q\right).
\]
Combining this with the resampled-bucket bound, we obtain
\[
\mathcal V^+
=
O\!\left(
\mathcal K
+
\mathcal Q
+
\frac{1+Z}{m}
\right).
\]

It remains to bound \(\mathcal K\) and \(\mathcal Q\). The quantity
\(\mathcal K\) depends only on the hash functions, since the signs disappear
after squaring. We express it in terms of a sum of independent
positive semidefinite random matrices and apply matrix Bernstein to obtain
\[
\|\mathcal K\|_q
=
O\!\left(
\frac1m+\frac{q}{s^2}
\right).
\]

To control \(\mathcal Q\), we condition on the hash functions. The quantities
associated with distinct buckets then depend on disjoint sets of independent
signs. The bound
\[
\psi(z)\le z^2
\]
controls the conditional expectation in terms of \(\mathcal K\), while the
\(O(1/s)\)-Lipschitz property of \(\psi\), together with Rademacher
contraction, controls the fluctuations around that expectation. Finally, the
identity
\[
\sum_\beta Q_\beta(v)
=
\|\Phi Uv\|_2^2-1
=
v^\top Ev
\]
relates the resulting Rademacher process to the spectral error. These
ingredients yield
\[
\|\mathcal Q\|_q
=
O\!\left(
\|\mathcal K\|_q
+
\frac1s
\bigl\|\|E\|_{\mathrm{op}}\bigr\|_q
\right).
\]

\paragraph{Closing the moment bound.}
Let
\[
    r_q
    :=
    \bigl\lVert\lVert E\rVert_{\op}\bigr\rVert_q.
\]
Combining the variance-proxy bound with the bounds for
\(\mathcal K\) and \(\mathcal Q\), and using \(m\ge s\), gives
\[
    \lVert\mathcal V^+\rVert_{q/2}
    =
    O\!\left(
        \frac1m+\frac q{s^2}+\frac{r_q}{s}
    \right).
\]
The \(L_q\) Efron--Stein inequality and $\E Z\le\mu$ yield
\[
    \lVert Z\rVert_q
    =
    O\!\left(
        \mu
        +\sqrt{\frac qm}
        +\frac qs
        +\sqrt{\frac{q r_q}{s}}
    \right).
\]

Since
\[
    \lVert E\rVert_{\op}
    =
    \max\{Y,Z\}
    \le Y+Z,
\]
combining this bound with the lower-distortion bound gives
\[
    r_q
    =
    O\!\left(
        \mu
        +\sqrt{\frac qm}
        +\frac qs
        +\sqrt{\frac{q r_q}{s}}
    \right).
\]
The last term is self-referential, but Young's inequality gives
\[
    C\sqrt{\frac{q r_q}{s}}
    \le
    \frac12 r_q+C'\frac qs.
\]
Absorbing \(r_q/2\) into the left-hand side, we obtain
\[
    r_q
    =
    O\!\left(
        \mu+\sqrt{\frac qm}+\frac qs
    \right).
\]

This proves the moment bound in
Proposition~\ref{prop:mean-to-moments}. To obtain the tail bound, choose
\[
    q=4\log(2d^2)+\log \frac{1}{\delta} =\Theta\!\left(\log\frac d\delta\right).
\]
Then \(q\ge 4\log(2d^2)\), so the moment bound applies. Applying Markov's inequality to the nonnegative random variable \(\lVert E\rVert_{\op}^q\), and using
\(r_q^q=\E\lVert E\rVert_{\op}^q\), we have
\[
    \Pp\!\left\{\lVert E\rVert_{\op}>e \cdot r_q\right\}
    =
    \Pp\!\left\{
        \lVert E\rVert_{\op}^q>(e \cdot r_q)^q
    \right\}
    \le
    \frac{\E\lVert E\rVert_{\op}^q}{(e \cdot r_q)^q}
    =
    e^{-q} \le \delta.
\]
Moreover, our choices of \(m\) and \(s\) give
\[
    \sqrt{\frac qm}=O(\eps),
    \qquad
    \frac qs=O(\eps).
\]
Therefore
\[
    \Pp\!\left\{\lVert E\rVert_{\op}> O(\eps)\right\} \le \delta.
\]
Scaling $\eps$ by a constant factor gives the desired tail bound.

\subsection{Proof organization}

Sections~\ref{sec:mean-error}--\ref{sec:walk-count} prove the mean bound using Eulerian Gram sums and a weighted count of canonical walks.
Section~\ref{sec:shrinkage} proves the lower-distortion bound.
Sections~\ref{sec:expansion-sensitivity}--\ref{sec:expansion-recursion}
control upper distortion and prove the operator-norm moment and tail bounds.
Finally, Section~\ref{sec:main-proof} chooses the parameters and proves
Theorem~\ref{thm:main}. The appendices contain the full Eulerian contraction
proof, the bounded-dimensional trace proof, the algebraic bucket inequality,
and the standard probability tools.

\begin{figure}[H]
  \centering
  \begin{tikzpicture}[
    scale=0.985,
    transform shape,
    font=\normalsize,
    roadmap box/.style={
      draw=black!65,
      fill=layerfill,
      rounded corners=3.5pt,
      line width=0.60pt,
      align=center,
      inner xsep=7.5pt,
      inner ysep=6.5pt
    },
    roadmap result/.style={
      roadmap box,
      draw=accent,
      fill=zoomfill,
      line width=0.95pt
    },
    roadmap line/.style={
      draw=accent,
      line width=0.75pt
    },
    roadmap arrow/.style={
      roadmap line,
      -{Latex[length=2.0mm,width=1.45mm]}
    }
  ]
    \thickmuskip=3mu plus 1mu minus 1mu\relax

    \node[
      roadmap result,
      text width=0.97\linewidth
    ] (mean) {
      \textbf{Mean bound}\\[3pt]
      $\displaystyle
        \mathbb{E}\,\operatorname{tr}(G^\ell)
        \le d(Cp)^\ell
      $\\[1pt]
      $\displaystyle
        \mu:=\mathbb{E}\,\lVert E\rVert_{\mathrm{op}}
        =
        O(\varepsilon)
      $\\[4pt]
      {\small
        Trace moments, Eulerian Gram sums,
        and weighted canonical walks
      }\\[-1pt]
      {\small
        Proposition~\ref{prop:mean-error};
        Sections~\ref{sec:mean-error}--\ref{sec:walk-count}
      }
    };

    \coordinate (split) at ([yshift=-7mm]mean.south);

    \node[
      roadmap result,
      text width=0.45\linewidth,
      minimum height=36mm,
      anchor=north east
    ] (lower) at ([xshift=-2.5mm,yshift=-7mm]split) {
      \textbf{Lower distortion}\\[4pt]
      $\displaystyle
        \lVert Y\rVert_q
        =
        O\!\left(
          \mu
          +
          \sqrt{\frac{q}{m}}
          +
          \frac{q}{s}
        \right)
      $\\[7pt]
      {\small Bousquet's inequality}\\[-1pt]
      {\small Section~\ref{sec:shrinkage}}
    };

    \node[
      roadmap result,
      text width=0.45\linewidth,
      minimum height=36mm,
      anchor=north west
    ] (upper) at ([xshift=2.5mm,yshift=-7mm]split) {
      \textbf{Upper distortion}\\[3pt]
      $\displaystyle
        \begin{aligned}
          \lVert Z\rVert_q
          =
          O\!\Bigg(&
            \mu
            +
            \sqrt{\frac{q}{m}}
            +
            \frac{q}{s}
            +
            \sqrt{
              \frac{q}{s}\,
              \bigl\lVert
                \lVert E\rVert_{\mathrm{op}}
              \bigr\rVert_q
            }
          \Bigg)
        \end{aligned}
      $\\[5pt]
      {\small
        Coordinate resampling and polynomial Efron--Stein
      }\\[-1pt]
      {\small
        Sections~\ref{sec:expansion-sensitivity}
        --\ref{sec:expansion-recursion}
      }
    };

    \coordinate (join) at
      ([yshift=-7mm]$(lower.south)!0.5!(upper.south)$);

    \node[
      roadmap result,
      text width=0.97\linewidth,
      anchor=north
    ] (concentration) at ([yshift=-7mm]join) {
      \textbf{Moment bound}\\[3pt]
      $\displaystyle
        \bigl\lVert
          \lVert E\rVert_{\mathrm{op}}
        \bigr\rVert_q
        =
        O\!\left(
          \mu
          +
          \sqrt{\frac{q}{m}}
          +
          \frac{q}{s}
        \right)
      $\\[4pt]
      {\small
        Proposition~\ref{prop:mean-to-moments};
        operator-norm moment bound
      }
    };

    \node[
      roadmap result,
      text width=0.97\linewidth,
      below=8mm of concentration
    ] (theorem) {
      \textbf{Theorem~\ref{thm:main}}\\[5pt]
      $\displaystyle
        m
        =
        O\!\left(
          \frac{d+\log(1/\delta)}{\varepsilon^2}
        \right),
        \qquad
        s
        =
        O\!\left(
          \frac{\log(d/\delta)}{\varepsilon}
        \right),
        \qquad
        q=\Theta\!\left(\log\frac d\delta\right)
      $\\[5pt]
      $\displaystyle
        \Pp\!\left\{
          \lVert E\rVert_{\mathrm{op}}>\varepsilon
        \right\}
        \le \delta
      $\\[4pt]
      {
      \small
      Section~\ref{sec:main-proof}
      }
    };

    \draw[roadmap line]
      (mean.south) -- (split);

    \draw[roadmap arrow]
      (split) -| (lower.north);

    \draw[roadmap arrow]
      (split) -| (upper.north);

    \draw[roadmap line]
      (lower.south) |- (join);

    \draw[roadmap line]
      (upper.south) |- (join);

    \draw[roadmap arrow]
      (join) -- (concentration.north);

    \draw[roadmap arrow]
      (concentration.south) -- (theorem.north);

  \end{tikzpicture}

  \caption{Proof roadmap for Theorem~\ref{thm:main}.}
  \label{fig:proof-roadmap}
\end{figure}

\section{Mean spectral error}\label{sec:mean-error}

This section proves Proposition~\ref{prop:mean-error}.
Let
\begin{equation*}
    G=sE=U^\top(S^\top S-sI_n)U.
\end{equation*}
The proof of the mean bound proceeds by expanding the trace
moment \(\E\tr(G^\ell)\). The resulting sum is controlled using two lemmas, proved separately: one bounds injective sums of Gram products indexed by Eulerian multigraphs, and the other bounds a weighted count of canonical walks. Their proofs are deferred to Sections~\ref{sec:gram} and~\ref{sec:walk-count}. The moment order used here is denoted by $\ell = O(\log d)$ and depends only on $d$.

\subsection{Trace expansion}

For $i\ne j$, the $(i,j)$ entry of $S^\top S$ is
\[
    (S^\top S)_{ij}
    =\sum_{\gamma=1}^s
    \one\{h_\gamma(i)=h_\gamma(j)\}\sigma_{\gamma i}\sigma_{\gamma j}.
\]
The diagonal entries of $S^\top S-sI_n$ vanish. Consequently, the expansion of $G$ contains only terms indexed by $i\ne j$. Since the $i$th row of $U$ is $u_i^\top$, we have
\begin{equation}\label{eq:G-rank-one}
    G=
    \sum_{\gamma=1}^s\sum_{i\ne j}
    \one\{h_\gamma(i)=h_\gamma(j)\}
    \sigma_{\gamma i}\sigma_{\gamma j}
    u_i u_j^\top.
\end{equation}
Expanding the $\ell$-fold product $G^\ell$ from \eqref{eq:G-rank-one}, choosing one summand from each factor gives a term
\[
    \prod_{t=1}^{\ell}
    \sigma_{\gamma_t i_t}\sigma_{\gamma_t j_t}\,
    u_{i_t}u_{j_t}^\top,
\]
indexed by the tuple of choices
\[
    \tau=((\gamma_t,i_t,j_t))_{t=1}^\ell,
    \qquad i_t\ne j_t.
\]
Let $\tT_\ell$ denote the set of all such tuples.

For rank-one matrices,
\[
    (u_i u_j^\top)(u_k u_v^\top)
    =\langle u_j,u_k\rangle u_i u_v^\top.
\]
Applying this identity repeatedly and taking the trace gives, for each tuple $\tau=((\gamma_t,i_t,j_t))_{t=1}^\ell\in\tT_\ell$,
\begin{equation}\label{eq:rank-one-trace}
    \tr\!\left[(u_{i_1}u_{j_1}^\top)\cdots(u_{i_\ell}u_{j_\ell}^\top)\right]
    =\prod_{t=1}^\ell\langle u_{j_t},u_{i_{t+1}}\rangle,
    \qquad i_{\ell+1}=i_1.
\end{equation}
Combining this with the sign factor from each term of $G^\ell$, and writing $\E_\sigma,\E_h$ for expectation over the signs and hashes respectively, define for $\tau\in\tT_\ell$
\begin{align*}
    \Sign(\tau)
    &:=\E_\sigma\prod_{t=1}^\ell
       \sigma_{\gamma_t i_t}\sigma_{\gamma_t j_t}
    &&\text{(sign factor)},\\
    \Hash(\tau)
    &:=\E_h\prod_{t=1}^\ell
       \one\{h_{\gamma_t}(i_t)=h_{\gamma_t}(j_t)\}
    &&\text{(hash factor)},\\
    \Gram(\tau)
    &:=\prod_{t=1}^\ell\langle u_{j_t},u_{i_{t+1}}\rangle
    &&\text{(the product in \eqref{eq:rank-one-trace})}.
\end{align*}
The signs and hash functions are independent, so taking the expectation of $\tr G^\ell$ term by term gives
\begin{equation}\label{eq:tuple-sum}
    \E\tr G^\ell
    =\sum_{\tau\in\tT_\ell}
      \Sign(\tau)\Hash(\tau)\Gram(\tau).
\end{equation}

\subsection{Partitions and the two Eulerian multigraphs}

Our goal is to reorganize the sum over terms $\tau=((\gamma_t,i_t,j_t))_{t=1}^\ell$ in \eqref{eq:tuple-sum} into a sum over a much smaller set of objects.
This organization compresses many differently labeled tuples into a single combinatorial object and exposes the underlying collision and interaction structure.

Every term $\tau$ can be encoded by two pieces of data: a partition of $2\ell$ trace positions recording which positions share a layer--coordinate pair, and assignments of a layer and a coordinate to each block, subject to an injectivity constraint described below.

Consider the $2\ell$ trace positions
\[
    \cO_\ell=\{L_1,R_1,\ldots,L_\ell,R_\ell\}.
\]
A partition of $\cO_\ell$ groups these positions into disjoint nonempty blocks. For example, when $\ell=2$, $\pi= \{ \{L_1,L_2\}, \{R_1,R_2\}\}$ is a partition of $\cO_\ell$ consisting of two blocks.

A term $\tau$ determines a partition $\pi(\tau)$ of $\cO_\ell$: position $L_t$ corresponds to $(\gamma_t,i_t)$, position $R_t$ corresponds to $(\gamma_t,j_t)$, and two positions lie in the same block of $\pi(\tau)$ exactly when they are assigned the same pair. For a position $z\in\cO_\ell$, write $[z]_\pi$ for the block of $\pi$ containing $z$.
Because $i_t\ne j_t$ for every $t$, every partition arising this way keeps $L_t$ and $R_t$ in different blocks:
\begin{equation}\label{eq:no-diagonal-block}
    [L_t]_\pi\ne[R_t]_\pi
    \qquad (t\in[\ell]).
\end{equation}

Conversely, $\tau$ can be recovered from $\pi(\tau)$ together with the layer and coordinate assigned to each block: these are the layer and coordinate assignments defined below, together with the required injectivity constraint, once the collision and Gram multigraphs have been defined.

Below we define two multigraphs, the collision multigraph and the Gram multigraph, using only the partition. They let us bound the sum of the hash and Gram factors over all trace tuples $\tau$ inducing that partition.
For a graph $H$, we use $\cc(H)$ to denote the number of connected components of $H$.

\begin{definition}[Collision and Gram multigraphs]\label{def:AB-graphs}
For each $t\in[\ell]$:
\begin{itemize}
    \item the collision multigraph $\cC_\pi$ has one edge $\{[L_t]_\pi,[R_t]_\pi\}$;
    \item the Gram multigraph $\cG_\pi$ has one edge $\{[R_t]_\pi,[L_{t+1}]_\pi\}$, where $L_{\ell+1}=L_1$.
\end{itemize}
Parallel edges and loops are allowed, but Equation~\eqref{eq:no-diagonal-block} says that $\cC_\pi$ has no loops.
\end{definition}

For every block $D\in\pi$, every trace position $z\in D$ is an endpoint of exactly one edge of $\cC_\pi$ and exactly one edge of $\cG_\pi$: if $z=L_t$ or $z=R_t$, it is an endpoint of the edge $\{[L_t]_\pi,[R_t]_\pi\}$ of $\cC_\pi$ and of an edge of $\cG_\pi$ involving $L_t$ or $R_t$. Summing over the $|D|$ trace positions in $D$ gives
\begin{equation}\label{eq:equal-degrees}
    \deg_{\cC_\pi}(D)=\deg_{\cG_\pi}(D)=|D|.
\end{equation}
Let $\cP_\ell$ be the set of partitions satisfying \eqref{eq:no-diagonal-block} and whose blocks all have even size. We call them \emph{admissible partitions}. If $\pi\in\cP_\ell$, both $\cC_\pi$ and $\cG_\pi$ are Eulerian and have no isolated vertices. Both graphs have vertex set $\pi$, so
\begin{equation*}
    |V(\cC_\pi)|=|V(\cG_\pi)|=|\pi|.
\end{equation*}
Since $\cC_\pi$ is loopless and has no isolated vertices, every connected component contains at least two vertices, and therefore
\begin{equation}\label{eq:component-block-count}
    2\cc(\cC_\pi)\le |\pi|.
\end{equation}

Fix a partition $\pi$ satisfying \eqref{eq:no-diagonal-block}. Let $\cL_\pi$ be the set of layer assignments $\lambda:\pi\to[s]$ such that
\begin{equation}\label{eq:lambda-constraint}
    \lambda([L_t]_\pi)=\lambda([R_t]_\pi)
    \qquad (t\in[\ell]).
\end{equation}
Equivalently, $\lambda$ is constant on every connected component of $\cC_\pi$, so
\begin{equation}\label{eq:lambda-count}
    |\cL_\pi|=s^{\cc(\cC_\pi)}.
\end{equation}
Note that different components of $\cC_\pi$ may correspond to the same layer.

For $\lambda\in\cL_\pi$, let $\cI_{\pi,\lambda}$ be the set of coordinate assignments $a:\pi\to[n]$ satisfying
\begin{equation}\label{eq:row-injective}
    D\ne D'
    \text{ and }
    \lambda(D)=\lambda(D')
    \quad\Longrightarrow\quad
    a(D)\ne a(D').
\end{equation}
This implies that the ordered pairs $(\lambda(D),a(D))$, for $D\in\pi$, are distinct. Coordinate labels may repeat across different layers.

\begin{claim}\label{clm:exact-partition-parametrization}
Fix a partition $\pi$ satisfying \eqref{eq:no-diagonal-block}. The map that assigns to each block its common layer and coordinate is a bijection
\[
    \{\tau\in\tT_\ell:\tau\text{ induces the partition }\pi\}
    \longleftrightarrow
    \bigcup_{\lambda\in\cL_\pi}
    \{\lambda\}\times\cI_{\pi,\lambda}.
\]
Under the inverse map, $(\lambda,a)$ reconstructs the tuple by
\begin{equation}\label{eq:tuple-reconstruction}
    \gamma_t=\lambda([L_t]_\pi)=\lambda([R_t]_\pi),
    \qquad
    i_t=a([L_t]_\pi),
    \qquad
    j_t=a([R_t]_\pi).
\end{equation}
\end{claim}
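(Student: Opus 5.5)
The claim is a bookkeeping bijection. I will prove it by writing down the forward map, checking that it lands in the stated union, and then checking that the reconstruction \eqref{eq:tuple-reconstruction} is a two-sided inverse.

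\textbf{Forward map.} Let \(\tau=((\gamma_t,i_t,j_t))_{t=1}^\ell\in\tT_\ell\) induce \(\pi\). By definition of \(\pi(\tau)\), all positions in a block \(D\) carry the same pair, so \((\lambda(D),a(D))\) is well defined as that common pair. For each \(t\), the positions \(L_t\) and \(R_t\) carry \((\gamma_t,i_t)\) and \((\gamma_t,j_t)\). These share the layer \(\gamma_t\), which gives \eqref{eq:lambda-constraint}, so \(\lambda\in\cL_\pi\).

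Now suppose \(D\ne D'\) with \(\lambda(D)=\lambda(D')\). Distinct blocks of \(\pi(\tau)\) carry distinct pairs, because blocks are exactly the classes of equal pairs. Since the layers agree, the coordinates must differ, which is \eqref{eq:row-injective}. Hence \(a\in\cI_{\pi,\lambda}\).

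\textbf{Inverse map.} Given \(\lambda\in\cL_\pi\) and \(a\in\cI_{\pi,\lambda}\), define \(\tau\) by \eqref{eq:tuple-reconstruction}. The layer \(\gamma_t\) is well defined by \eqref{eq:lambda-constraint}. To see \(\tau\in\tT_\ell\), note that \([L_t]_\pi\ne[R_t]_\pi\) by \eqref{eq:no-diagonal-block}, and these two blocks share a layer, so \eqref{eq:row-injective} gives \(i_t\ne j_t\).

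\textbf{Main step: \(\tau\) induces exactly \(\pi\).} Two positions in the same block receive the same pair by construction. Two positions in different blocks \(D\ne D'\) receive distinct pairs: if \(\lambda(D)\ne\lambda(D')\) the layers differ, and otherwise \(a(D)\ne a(D')\) by \eqref{eq:row-injective}. So the equality relation on positions is exactly membership in the same block of \(\pi\), and therefore \(\pi(\tau)=\pi\).

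\textbf{Two-sided inverse.} Applying the forward map to the reconstructed \(\tau\) returns \((\lambda,a)\), since each block's common pair is \((\lambda(D),a(D))\) by construction. Conversely, reconstructing from the forward image of \(\tau\) returns the original entries \(\gamma_t,i_t,j_t\). Both compositions are therefore the identity, which proves the bijection.

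The only step needing care is the well-definedness of \(\lambda\) on blocks versus layers, which is exactly what \eqref{eq:lambda-constraint} provides; I expect no real obstacle.
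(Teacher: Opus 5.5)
Your proof is correct and follows the paper's argument essentially step for step. The forward map lands in $\bigcup_{\lambda\in\cL_\pi}\{\lambda\}\times\cI_{\pi,\lambda}$ because $L_t$ and $R_t$ share a layer and distinct blocks carry distinct pairs, and the reconstruction is off-diagonal and induces exactly $\pi$ by \eqref{eq:no-diagonal-block}, \eqref{eq:lambda-constraint}, and \eqref{eq:row-injective}; your explicit check that both compositions are the identity just spells out what the paper summarizes as ``the two constructions are inverse.''
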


\begin{proof}
Given a tuple inducing $\pi$, assign to each block its common layer and coordinate. The resulting layer assignment belongs to $\cL_\pi$ by \eqref{eq:lambda-constraint}. If two distinct blocks have the same layer, then their coordinates must be distinct; otherwise their ordered pairs would coincide and the tuple would induce a coarser partition. Hence the coordinate assignment belongs to $\cI_{\pi,\lambda}$.

Conversely, fix $\lambda\in\cL_\pi$ and $a\in\cI_{\pi,\lambda}$, and define the tuple by \eqref{eq:tuple-reconstruction}. The off-diagonal condition follows from \eqref{eq:no-diagonal-block}, \eqref{eq:lambda-constraint}, and \eqref{eq:row-injective}. Positions in the same block receive the same pair $(\gamma,i)$, whereas positions in distinct blocks receive distinct pairs: either their layers differ, or \eqref{eq:row-injective} forces their coordinates to differ. Thus the reconstructed tuple induces $\pi$, and the two constructions are inverse.
\end{proof}
The claim holds for every partition $\pi$ satisfying \eqref{eq:no-diagonal-block}, not only for the admissible partitions in $\cP_\ell$: it only records how a tuple decomposes into a partition and a labeling, without yet using the sign factor. The restriction to $\cP_\ell$ enters below, once we show that the sign factor vanishes unless every block has even size.

For a coordinate assignment $a$, define the partition-level Gram factor
\begin{equation*}
    \Gram_\pi(a)
    :=\prod_{t=1}^\ell
    \bigl\langle u_{a([R_t]_\pi)},u_{a([L_{t+1}]_\pi)}\bigr\rangle.
\end{equation*}
Note that for the tuple corresponding to $(\lambda,a)$, the pairs $(\lambda(D),a(D))$ are distinct across different blocks, so the corresponding Rademacher variables are independent.
 Thus the sign factor satisfies
\begin{equation}\label{eq:sign-expectation}
\Sign(\tau)
=
\E_\sigma
\prod_{t=1}^\ell
\sigma_{\lambda([L_t]_\pi),a([L_t]_\pi)}
\sigma_{\lambda([R_t]_\pi),a([R_t]_\pi)}
=
\prod_{D\in\pi}\E_\sigma\sigma_{\lambda(D),a(D)}^{|D|}
=
\begin{cases}
1,&\pi\in\cP_\ell,\\
0,&\text{otherwise}.
\end{cases}
\end{equation}
For a partition satisfying \eqref{eq:no-diagonal-block}, define the partition-level hash factor
\begin{equation*}
    \Hash_\pi(\lambda,a)
    :=\E_h\prod_{t=1}^\ell
    \one\!\left\{
        h_{\lambda([L_t]_\pi)}(a([L_t]_\pi))
        =h_{\lambda([R_t]_\pi)}(a([R_t]_\pi))
    \right\}.
\end{equation*}
Using Claim~\ref{clm:exact-partition-parametrization} in \eqref{eq:tuple-sum}, followed by \eqref{eq:sign-expectation}, gives the exact identity
\begin{equation}\label{eq:partition-exact}
    \E\tr G^\ell
    =\sum_{\pi\in\cP_\ell}
      \sum_{\lambda\in\cL_\pi}
      \sum_{a\in\cI_{\pi,\lambda}}
      \Hash_\pi(\lambda,a)\Gram_\pi(a).
\end{equation}

\subsection{The contribution of one partition}
In this section, we bound the contribution of a fixed admissible partition $\pi \in \cP_\ell$ to the right-hand side of \eqref{eq:partition-exact}.
We first evaluate the hash factor exactly.

\begin{claim}[Exact hash expectation]\label{clm:hash-expectation}
For every admissible $\pi\in\cP_\ell$, every $\lambda\in\cL_\pi$, and every $a\in\cI_{\pi,\lambda}$,
\begin{equation}\label{eq:hash-expectation}
    \Hash_\pi(\lambda,a)=B^{\cc(\cC_\pi)-|\pi|}.
\end{equation}
\end{claim}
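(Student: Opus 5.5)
The plan is to factor the hash expectation over the connected components of the collision multigraph $\cC_\pi$ and use the independence structure of SparseStack.

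Fix $\pi\in\cP_\ell$, $\lambda\in\cL_\pi$ and $a\in\cI_{\pi,\lambda}$. Each factor in $\Hash_\pi(\lambda,a)$ corresponds to an edge $\{[L_t]_\pi,[R_t]_\pi\}$ of $\cC_\pi$. By \eqref{eq:lambda-constraint} its two endpoints carry the same layer $\gamma_t$. The indicator therefore reads $\one\{h_{\gamma_t}(a([L_t]_\pi))=h_{\gamma_t}(a([R_t]_\pi))\}$, which is a function of the random variables $h_{\lambda(D)}(a(D))$ attached to the blocks $D$.

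\emph{Step 1: independence.} Introduce $X_D:=h_{\lambda(D)}(a(D))$ for $D\in\pi$. By \eqref{eq:row-injective}, the pairs $(\lambda(D),a(D))$ are pairwise distinct. Since the SparseStack hash values $h_\gamma(i)$ are independent and uniform on $[B]$ over all pairs $(\gamma,i)$, the family $(X_D)_{D\in\pi}$ is i.i.d.\ uniform on $[B]$. This holds even when two components of $\cC_\pi$ share a layer, because injectivity is imposed within each layer.

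\emph{Step 2: connectivity.} The product of indicators equals $\prod_{\text{edges }\{D,D'\}}\one\{X_D=X_{D'}\}$. On a connected component $K$ of $\cC_\pi$, this product is $1$ exactly when $X_D$ is constant on $K$. Indeed, equality propagates along paths, and conversely constancy makes every edge indicator $1$. Loops do not occur, and parallel edges only repeat an indicator, so neither causes trouble.

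\emph{Step 3: computation.} The components have disjoint vertex sets, so by independence the expectation factors as a product over components $K$ of $\Pp\{X_D\text{ constant on }K\}=B\cdot B^{-|K|}=B^{1-|K|}$. Since $\sum_K|K|=|\pi|$, multiplying over components gives $B^{\cc(\cC_\pi)-|\pi|}$.

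The only delicate point is Step 1: justifying independence of the $X_D$ across components assigned to the same layer. This is exactly where the per-layer injectivity \eqref{eq:row-injective} is used. Everything else is routine.
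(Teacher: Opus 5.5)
Your proof is correct and follows essentially the same route as the paper: connectivity forces the hash value to be constant on each $\cC_\pi$-component, per-layer injectivity \eqref{eq:row-injective} makes the relevant hash values independent and uniform, and each component contributes $B^{1-|K|}$. The one difference is presentational. You get independence in a single step from the pairwise distinctness of all pairs $(\lambda(D),a(D))$, whereas the paper argues it separately within a component and across components that share a layer.
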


\begin{proof}
Consider one connected component of $\cC_\pi$ containing $b$ blocks. Along every edge of this component, the two endpoint blocks use the same layer and are required to hash to the same bucket. Connectivity therefore requires all $b$ hash values in the component to be equal. The $b$ coordinate labels are distinct because all blocks in the component use the same layer, so distinctness follows from \eqref{eq:row-injective}. Their hash values are independent and uniform in $[B]$, so the probability that they are all equal is $B^{1-b}$.

If several components use the same layer, their sets of coordinate labels are disjoint by \eqref{eq:row-injective}. Hence the hash variables used by different components are still independent. Components may land in the same bucket; no inequality between their bucket values is required. Multiplying $B^{1-b}$ over all $\cc(\cC_\pi)$ components, whose block counts sum to $|\pi|$, gives $B^{\cc(\cC_\pi)-|\pi|}$.
\end{proof}

Substituting \eqref{eq:hash-expectation} into \eqref{eq:partition-exact} gives
\begin{equation}\label{eq:partition-after-hash}
    \E\tr G^\ell
    =\sum_{\pi\in\cP_\ell}
      B^{\cc(\cC_\pi)-|\pi|}
      \sum_{\lambda\in\cL_\pi}
      \sum_{a\in\cI_{\pi,\lambda}}
      \Gram_\pi(a).
\end{equation}
The following lemma controls the innermost sum.

\begin{lemma}[Injective Eulerian Gram sum]\label{lem:injective-gram}
Let $u_1,\ldots,u_n\in\R^d$ satisfy $\sum_i u_i u_i^\top=\Id$. Let $H$ be an undirected multigraph with no isolated vertices and with every degree even. Let $\mathcal P$ be any partition of $V(H)$. Then
\begin{equation}\label{eq:injective-gram}
\left|
    \sum_{\substack{a:V(H)\to[n]\\
    a\text{ injective on every }P\in\mathcal P}}
    \prod_{\{v,w\}\in E(H)}
    \langle u_{a(v)},u_{a(w)}\rangle
\right|
\le (2e)^{|V(H)|}d^{\cc(H)}.
\end{equation}
\end{lemma}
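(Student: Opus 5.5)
Our plan is to reduce the injectivity-restricted sum to a signed combination of \emph{unrestricted} Eulerian Gram sums over quotient multigraphs, and then apply the unrestricted contraction bound (Lemma~\ref{lem:restricted-contraction}, stated in the overview): $|\sum_{a}\prod_{\{v,w\}}\langle u_{a(v)},u_{a(w)}\rangle|\le d^{\cc(H')}$ for every Eulerian multigraph $H'$ without isolated vertices.

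\textbf{Step 1: Möbius inversion.} For each part $P\in\mathcal P$ we enforce injectivity on $P$ using Möbius inversion on the lattice of set partitions of $P$. We use
\[
\one\{a \text{ injective on } P\}=\sum_{\rho\in\Pi(P)}\mu(\hat0,\rho)\,\one\{a \text{ constant on every block of }\rho\}.
\]
Here $\mu(\hat0,\rho)=\prod_{C\in\rho}(-1)^{|C|-1}(|C|-1)!$. Taking the product over $P\in\mathcal P$, the restricted sum becomes $\sum_{\rho}\mu(\rho)\,\Sigma(H/\rho)$. In this sum, $\rho$ ranges over partitions of $V(H)$ refining $\mathcal P$, $H/\rho$ is the multigraph obtained by identifying the vertices in each block (edges kept, possibly becoming loops), and $\Sigma(\cdot)$ is the unrestricted Gram sum. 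This is the point where dropping injectivity is paid for.

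\textbf{Step 2: Bounding each quotient term.} Contracting vertices preserves even degrees, since degrees add, and creates no isolated vertices. Thus $H/\rho$ is Eulerian and the unrestricted lemma gives $|\Sigma(H/\rho)|\le d^{\cc(H/\rho)}$. The difficulty is that merging vertices from different components of $H$ \emph{decreases} the number of components, so $d^{\cc(H/\rho)}\le d^{\cc(H)}$. The exponent of $d$ therefore never increases, and we get
\[
|\text{restricted sum}|\le d^{\cc(H)}\sum_{\rho}|\mu(\rho)|.
\]

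\textbf{Step 3: The combinatorial sum.} The remaining sum factorizes over $P\in\mathcal P$. For one part of size $k$,
\[
\sum_{\rho\in\Pi([k])}\prod_{C\in\rho}(|C|-1)!
\]
counts permutations of $[k]$ by their cycle decomposition, so it equals $k!$. That is too large, and it is the main obstacle: naive Möbius inversion loses a factorial rather than an exponential. To fix this we would avoid full Möbius inversion and instead use an inductive inclusion–exclusion that adds vertices one at a time. When a new vertex $v$ in part $P$ is required to differ from the earlier vertices of $P$, we subtract the terms where $a(v)=a(w)$ for each earlier $w\in P$. The trouble is that this iteration reintroduces the factorial count.

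A better option is a probabilistic or averaging device. Write the injective indicator as an expectation over random colorings: assign each vertex of $P$ a uniformly random color in $[|P|]$, and restrict $a$ so that vertices of different colors use coordinates from disjoint random classes. Each such product restriction can be absorbed by replacing $u_i$ with $u_i\one\{i\in \text{class}\}$, and these vectors still satisfy $\sum u_iu_i^\top\preceq I_d$. The unrestricted contraction argument should extend to this sub-isotropic, vertex-dependent setting; this is the "vertex-dependent label restrictions" the paper mentions. The probability that a random coloring is rainbow on $P$ is $k!/k^k\ge e^{-k}$, which is where the factor $e^{|V(H)|}$ comes from. The extra factor $2^{|V(H)|}$ would absorb the sign bookkeeping from the remaining collisions: within one color class we still need injectivity, and we handle it by recursion or by a $\pm$ splitting.

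Concretely, we would first prove the contraction lemma for vertex-dependent label sets, bounded by $d^{\cc}$. We would then derive the injective bound by a color-coding inclusion–exclusion whose total absolute weight is at most $(2e)^{|V(H)|}$. Step 3 is the hard part of the argument.
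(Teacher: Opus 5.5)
\textbf{There is a genuine gap at Step~3.} Your Steps~1--2 are correct: the M\"obius formula holds, the quotients $H/\rho$ are Eulerian with no isolated vertices, and $\cc(H/\rho)\le\cc(H)$. But, as you note, they only give the factor $\prod_{P\in\mathcal P}|P|!$. This is superexponential (already $13!>(2e)^{13}$), so it does not prove the stated bound. Everything therefore rests on Step~3, and Step~3 is not yet an argument.

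Coloring the \emph{vertices} of $P$ at random cannot detect collisions. The probability $k!/k^k$ that a random vertex coloring is rainbow does not depend on $a$ at all. The obvious repairs that hash coordinates into classes and match them to a rainbow coloring only reach weight $k^k$. For a fixed bijection $c:P\to[k]$ one has $\one\{a\text{ injective on }P\}=k^k\,\E_\theta\prod_{v\in P}\one\{\theta(a(v))=c(v)\}$. Writing the rainbow event $\{\theta\circ a|_P\text{ bijective}\}$ as a sum over all $k!$ bijections again gives total weight $\tfrac{k^k}{k!}\cdot k!=k^k$. Your remaining device, handling ``collisions within a color class by recursion or a $\pm$ splitting,'' is unspecified, and a M\"obius-type recursion would bring the factorial back.

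The missing idea is to color the \emph{coordinates}, and to expand the covering event by inclusion--exclusion over subsets of colors rather than by summing over bijections. Let $|P|=t$ and let $\theta:[n]\to[t]$ be uniform. If the labels repeat on $P$, the event $\theta(a(P))=[t]$ is impossible; if they are distinct, it has probability $t!/t^t$. Hence
\[
\one\{a\text{ injective on }P\}=\frac{t^t}{t!}\,\E_\theta\sum_{F\subseteq[t]}(-1)^{t-|F|}\prod_{v\in P}\one\bigl\{a(v)\in\theta^{-1}(F)\bigr\}.
\]
Apply this independently to every block and fix all $\theta_P,F_P$. Each resulting term is a Gram sum with label restrictions $a(v)\in A_v$. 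The restricted contraction lemma bounds each such term by $d^{\cc(H)}$; this is the vertex-dependent version you anticipated, and it is proved via $\sum_{i\in A}u_iu_i^\top\preceq I_d$.

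The total absolute weight is then $\prod_P 2^{|P|}|P|^{|P|}/|P|!\le(2e)^{|V(H)|}$. No collisions are left to handle, because covering $[t]$ with $t$ labels already forces injectivity. The factor $2^{|P|}$ is simply the number of subsets $F$, not residual sign bookkeeping.
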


We defer the proof of Lemma~\ref{lem:injective-gram} to Section~\ref{sec:gram}. For a fixed $\lambda$, group the vertices of $\cG_\pi$ by their actual layer. The condition $a\in\cI_{\pi,\lambda}$ says exactly that the coordinate labels are injective inside each layer group. By \eqref{eq:equal-degrees}, $\cG_\pi$ is Eulerian and has $\cc(\cG_\pi)$ connected components. Therefore
\begin{equation*}
    \left|
      \sum_{a\in\cI_{\pi,\lambda}}\Gram_\pi(a)
    \right|
    \le (2e)^{|\pi|} d^{\cc(\cG_\pi)}.
\end{equation*}
Define $\Cont(\pi)$ to be the contribution of the partition $\pi$ in \eqref{eq:partition-after-hash}. Using \eqref{eq:lambda-count},
\begin{equation}\label{eq:fixed-partition-raw}
    |\Cont(\pi)|
    \le s^{\cc(\cC_\pi)} B^{\cc(\cC_\pi)-|\pi|}(2e)^{|\pi|} d^{\cc(\cG_\pi)}.
\end{equation}

By \eqref{eq:technical-regime},
\[
    s\le\frac{p}{4\eps},
    \qquad
    B\ge\frac{8d}{\eps p}.
\]
Since $\cc(\cC_\pi)-|\pi|\le0$, $|\pi|-2\cc(\cC_\pi)\ge0$ by
\eqref{eq:component-block-count}, and $0<\eps\le1$, we obtain
\begin{equation}\label{eq:parameter-collapse}
\begin{aligned}
    &(2e)^{|\pi|}d^{\cc(\cG_\pi)}
      s^{\cc(\cC_\pi)}B^{\cc(\cC_\pi)-|\pi|} \\
    &\qquad\le
      \frac{(2e)^{|\pi|}}
      {4^{\cc(\cC_\pi)}8^{|\pi|-\cc(\cC_\pi)}}
      d^{\cc(\cG_\pi)+\cc(\cC_\pi)-|\pi|}
      p^{|\pi|}\eps^{\,|\pi|-2\cc(\cC_\pi)} \\
    &\qquad\le
      d^{\cc(\cG_\pi)+\cc(\cC_\pi)-|\pi|}p^{|\pi|}.
\end{aligned}
\end{equation}
We used
\[
    \frac{(2e)^{|\pi|}}
    {4^{\cc(\cC_\pi)}8^{|\pi|-\cc(\cC_\pi)}}
    =\left(\frac e4\right)^{|\pi|}2^{\cc(\cC_\pi)}
    \le\left(\frac{e}{2\sqrt2}\right)^{|\pi|}
    \le1.
\]

If $\ell$ is even, then $G$ is symmetric and $\tr G^\ell\ge0$ for every outcome. Hence
\begin{equation}\label{eq:triangle-after-exact}
    \E\tr G^\ell
    =\left|\sum_{\pi\in\cP_\ell}\Cont(\pi)\right|
    \le\sum_{\pi\in\cP_\ell}|\Cont(\pi)|.
\end{equation}

The power of $d$ in
\eqref{eq:parameter-collapse} is controlled by the combination
$\cc(\cG_\pi)+\cc(\cC_\pi)-|\pi|$. The following construction combines the components of
$\cC_\pi$ and $\cG_\pi$ into a single bipartite multigraph, where this
combination can be expressed using the numbers of vertices and edges.

\begin{samepage}
\begin{definition}[Component-incidence multigraph]\label{def:incidence}
The multigraph $\cT_\pi$ is the bipartite multigraph defined as follows.
\begin{itemize}
    \item Its left vertices are the connected components of $\cC_\pi$.
    \item Its right vertices are the connected components of $\cG_\pi$.
    \item Each block $D\in\pi$ gives one edge, its block-edge, joining the $\cC_\pi$-component and the $\cG_\pi$-component that contain $D$.
\end{itemize}
Parallel edges are allowed.
\end{definition}
\end{samepage}

Every block of $\pi$ becomes exactly one edge of $\cT_\pi$. Figure~\ref{fig:incidence} illustrates this contraction for the partition
\[
    D_1=\{L_1,L_2\},\quad
    D_2=\{R_1,R_3\},\quad
    D_3=\{R_2,L_3\},\quad
    D_4=\{L_4,R_5\},\quad
    D_5=\{R_4,L_5\}.
\]

\begin{figure}[t]
\centering
\begin{tikzpicture}[
  x=1cm,y=1cm,
  block/.style={draw,rounded corners=1.5pt,minimum width=1.55cm,minimum height=.62cm,align=center,font=\scriptsize,inner sep=1.5pt},
  component/.style={draw,dashed,rounded corners=5pt,line width=.45pt},
  internal/.style={line width=.45pt},
  contraction/.style={-{Latex[length=2mm]},line width=.75pt},
  tvertex/.style={draw,rounded corners=3pt,minimum width=2.75cm,minimum height=.95cm,align=center,inner sep=2pt,font=\scriptsize},
  tedge/.style={line width=.75pt},
  edge label/.style={font=\scriptsize,fill=white,inner sep=1pt},
  panel title/.style={font=\small,align=center}
]
\node[panel title] at (-4.8,4.55) {connected components of $\cC_\pi$};
\node[panel title] at (4.65,4.55) {connected components of $\cG_\pi$};

\node[component,minimum width=4.5cm,minimum height=2.2cm] (Ac1) at (-4.8,2.75) {};
\node[font=\small,anchor=south west,fill=white,inner sep=.5pt] at ([xshift=1mm,yshift=.4mm]Ac1.north west) {$\mathsf C_1$};
\node[block] (aD1) at (-5.95,2.80) {$D_1$\\[-1pt]$L_1,L_2$};
\node[block] (aD2) at (-4.00,3.18) {$D_2$\\[-1pt]$R_1,R_3$};
\node[block] (aD3) at (-4.00,2.25) {$D_3$\\[-1pt]$R_2,L_3$};
\draw[internal] (aD1)--(aD2);
\draw[internal] (aD1)--(aD3);
\draw[internal] (aD2)--(aD3);

\node[component,minimum width=3.9cm,minimum height=1.65cm] (Ac2) at (-4.8,.20) {};
\node[font=\small,anchor=south west,fill=white,inner sep=.5pt] at ([xshift=1mm,yshift=.4mm]Ac2.north west) {$\mathsf C_2$};
\node[block] (aD4) at (-5.70,.20) {$D_4$\\[-1pt]$L_4,R_5$};
\node[block] (aD5) at (-3.90,.20) {$D_5$\\[-1pt]$R_4,L_5$};
\draw[internal] (aD4) to[bend left=18] (aD5);
\draw[internal] (aD4) to[bend right=18] (aD5);

\node[component,minimum width=4.5cm,minimum height=2.2cm] (Bc1) at (4.65,2.75) {};
\node[font=\small,anchor=south west,fill=white,inner sep=.5pt] at ([xshift=1mm,yshift=.4mm]Bc1.north west) {$\mathsf G_1$};
\node[block] (bD1) at (3.55,3.15) {$D_1$\\[-1pt]$L_1,L_2$};
\node[block] (bD2) at (5.55,3.15) {$D_2$\\[-1pt]$R_1,R_3$};
\node[block] (bD4) at (4.55,2.20) {$D_4$\\[-1pt]$L_4,R_5$};
\draw[internal] (bD1)--(bD2)--(bD4)--(bD1);

\node[component,minimum width=2.05cm,minimum height=1.65cm] (Bc2) at (3.45,.20) {};
\node[font=\small,anchor=south west,fill=white,inner sep=.5pt] at ([xshift=1mm,yshift=.4mm]Bc2.north west) {$\mathsf G_2$};
\node[block] (bD3) at (3.45,.15) {$D_3$\\[-1pt]$R_2,L_3$};
\draw[internal] ([xshift=-1mm]bD3.north west)
  .. controls +(0,.32) and +(0,.32) .. ([xshift=1mm]bD3.north east);

\node[component,minimum width=2.05cm,minimum height=1.65cm] (Bc3) at (5.85,.20) {};
\node[font=\small,anchor=south west,fill=white,inner sep=.5pt] at ([xshift=1mm,yshift=.4mm]Bc3.north west) {$\mathsf G_3$};
\node[block] (bD5) at (5.85,.15) {$D_5$\\[-1pt]$R_4,L_5$};
\draw[internal] ([xshift=-1mm]bD5.north west)
  .. controls +(0,.32) and +(0,.32) .. ([xshift=1mm]bD5.north east);

\node[font=\small,align=center] at (0,-1.10) {contract each dashed connected component\\to one vertex};
\draw[contraction] (0,-1.55)--(0,-2.03);
\node[panel title] at (0,-2.32) {the component-incidence multigraph $\cT_\pi$};

\node[tvertex] (TA1) at (-3.55,-3.90) {$\mathsf C_1$\\$\{D_1,D_2,D_3\}$};
\node[tvertex] (TA2) at (-3.55,-5.70) {$\mathsf C_2$\\$\{D_4,D_5\}$};
\node[tvertex] (TB2) at (3.55,-3.10) {$\mathsf G_2$\\$\{D_3\}$};
\node[tvertex] (TB1) at (3.55,-4.60) {$\mathsf G_1$\\$\{D_1,D_2,D_4\}$};
\node[tvertex] (TB3) at (3.55,-6.10) {$\mathsf G_3$\\$\{D_5\}$};

\draw[tedge] (TA1.east) to[bend left=16] node[edge label,pos=.43,auto] {$D_1$} (TB1.west);
\draw[tedge] (TA1.east) to[bend right=9] node[edge label,pos=.52,auto] {$D_2$} (TB1.west);
\draw[tedge] (TA1.east)--node[edge label,pos=.70,auto] {$D_3$}(TB2.west);
\draw[tedge] (TA2.east)--node[edge label,pos=.51,auto] {$D_4$}(TB1.west);
\draw[tedge] (TA2.east)--node[edge label,pos=.55,auto] {$D_5$}(TB3.west);
\end{tikzpicture}
\caption{Construction of $\cT_\pi$ from an admissible partition. In the upper panel, each dashed region is a connected component of $\cC_\pi$ or $\cG_\pi$. Contracting those components produces the lower panel, where every block $D_j$ is exactly one edge of $\cT_\pi$.}
\label{fig:incidence}
\end{figure}

The multigraph $\cT_\pi$ is connected. Indeed, traverse in order
the block-edges corresponding to
\[
    L_1,R_1,L_2,R_2,\ldots,L_\ell,R_\ell.
\]
The edges corresponding to $L_t$ and $R_t$ meet at their common
$\cC_\pi$-component, while those corresponding to $R_t$ and $L_{t+1}$ meet
at their common $\cG_\pi$-component, where $L_{\ell+1}=L_1$. This gives a
closed walk containing every block-edge of $\cT_\pi$, because every block
contains at least one trace position. Thus $\cT_\pi$ is connected.

The multigraph has $|\pi|$ edges and $\cc(\cC_\pi)+\cc(\cG_\pi)$ vertices. Define the cycle rank
\begin{equation*}
    g_\pi:=g(\cT_\pi)
    :=|E(\cT_\pi)|-|V(\cT_\pi)|+1
    =|\pi|-(\cc(\cC_\pi)+\cc(\cG_\pi))+1.
\end{equation*}
For a connected graph, the cycle rank is the dimension of its cycle space, equivalently the number of independent cycles. In particular,
\begin{equation*}
    \cc(\cC_\pi)+\cc(\cG_\pi)-|\pi|=1-g_\pi.
\end{equation*}
Combining with \eqref{eq:fixed-partition-raw} and \eqref{eq:parameter-collapse},
\begin{equation}\label{eq:fixed-partition-final}
    |\Cont(\pi)|
    \le d^{1-g_\pi}p^{|\pi|}.
\end{equation}
The power of $d$ has two sources: the Gram sum contributes $d^{\cc(\cG_\pi)}$, while the hash probability contributes $d^{\cc(\cC_\pi)-|\pi|}$. The component-incidence multigraph combines them into $d^{1-g_\pi}$, so every independent cycle contributes one additional factor $1/d$.

\subsection{Canonical walks}

We encode each component-incidence multigraph $\cT_\pi$ by a
canonical walk. This bounds the sum over admissible partitions by a weighted
walk count whose supports split into attached trees and a $2$-core.

\begin{definition}[Canonical walk]\label{def:canonical-walk}
A canonical walk of length $2\ell$ is a closed sequence
\[
    W=(v_0,e_1,v_1,e_2,v_2,\ldots,e_{2\ell},v_{2\ell})
\]
with the following properties.
\begin{enumerate}[label=(\roman*)]
    \item $v_{2\ell}=v_0$.
    \item Each vertex name and edge name is a positive integer.
    \item Every edge name has one fixed unordered pair of endpoint names. Whenever $e_t$ appears, its endpoints are $v_{t-1}$ and $v_t$.
\end{enumerate}
Different edges may have the same endpoints, so parallel edges are allowed. The walk is canonical when $v_0=1$, new vertex names appear as $2,3,\ldots$ in order of first appearance, and new edge names appear as $1,2,\ldots$ in order of first appearance.

Its support contains the vertices and edges that occur in the sequence. Define its number of support edges and its cycle rank by
\begin{align*}
    e(W)&=\#\,\{\text{edges in support of } W\},  \\
    g(W)&=\#\,\{\text{edges in support of } W\} -
    \#\,\{\text{vertices in support of } W\}+1.
\end{align*}
Let $\cW_\ell$ be the set of canonical walks of length $2\ell$ whose support contains no self-loops and in which every edge name occurs a positive even number of times.
\end{definition}

An admissible partition $\pi\in\cP_\ell$ gives such a walk.
Read the trace positions cyclically in the order
\[
    L_1,R_1,L_2,R_2,\ldots,L_\ell,R_\ell,
\]
and replace each position by the block-edge of $\cT_\pi$ corresponding to
its block. Consecutive block-edges meet: those corresponding to $L_t$ and
$R_t$ share their $\cC_\pi$-component, while those corresponding to $R_t$
and $L_{t+1}$ share their $\cG_\pi$-component, with indices interpreted
cyclically. Thus this edge sequence forms a closed walk of length $2\ell$
on $\cT_\pi$.

The block-edge corresponding to a block $D$ is traversed exactly $|D|$
times, which is a positive even number because $\pi$ is admissible. Since
$\cT_\pi$ is bipartite, its support has no self-loops. Relabeling vertices
and edges in order of first appearance therefore produces a canonical walk
$W\in\cW_\ell$.

The map $\pi\mapsto W$ is injective: two trace positions belong to the same
block of $\pi$ exactly when the corresponding traversals have the same edge
name, so the edge-name equality classes recover $\pi$. Every block-edge
occurs, and hence the support of $W$ is $\cT_\pi$. Therefore
\[
    e(W)=|\pi|,
    \qquad
    g(W)=g_\pi.
\]

To obtain an upper bound, we may discard the bipartite structure and the alternation constraint, since doing so only enlarges the class of walks. Equations \eqref{eq:triangle-after-exact} and \eqref{eq:fixed-partition-final} therefore yield the following proposition.

\begin{proposition}[Trace tuples reduce to weighted canonical walks]\label{prop:trace-to-walks}
For every even integer $\ell\ge2$,
\begin{equation*}
    \E\tr G^\ell
    \le d\sum_{W\in\cW_\ell}d^{-g(W)}p^{e(W)}.
\end{equation*}
\end{proposition}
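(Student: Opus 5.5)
The proof is a bookkeeping argument that chains together the results already established in this section. Start from \eqref{eq:triangle-after-exact}. Since $\ell$ is even, $\tr G^\ell\ge 0$ pointwise, and so
\[
\E\tr G^\ell\le\sum_{\pi\in\cP_\ell}|\Cont(\pi)|.
\]
Bound each summand by \eqref{eq:fixed-partition-final}, which gives $|\Cont(\pi)|\le d^{1-g_\pi}p^{|\pi|}$. This bound was obtained by combining Claim~\ref{clm:hash-expectation}, Lemma~\ref{lem:injective-gram} applied to $\cG_\pi$ (with the layer classes of $\lambda$ as the partition $\mathcal P$), the count \eqref{eq:lambda-count}, and the parameter collapse \eqref{eq:parameter-collapse}. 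After this step we have
\[
\E\tr G^\ell\le d\sum_{\pi\in\cP_\ell}d^{-g_\pi}p^{|\pi|}.
\]

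Next, transfer the sum from partitions to walks using the map $\pi\mapsto W(\pi)$ constructed just before the statement. For this map we need three facts.
\begin{enumerate}
\item $W(\pi)\in\cW_\ell$. The walk is closed of length $2\ell$, it is canonical by the first-appearance relabeling, and its support is loopless because $\cT_\pi$ is bipartite. Each edge is traversed $|D|$ times, which is positive and even by admissibility.
\item The map is injective. The edge-name equality classes on the $2\ell$ steps recover exactly the blocks of $\pi$.
\item The weights are preserved. The support of $W(\pi)$ is all of $\cT_\pi$, because every block-edge is traversed, so $e(W)=|\pi|$ and $g(W)=g_\pi$.
\end{enumerate}

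With these facts, $\sum_{\pi}d^{-g_\pi}p^{|\pi|}=\sum_{W\in W(\cP_\ell)}d^{-g(W)}p^{e(W)}$. Every weight $d^{-g(W)}p^{e(W)}$ is nonnegative, so enlarging the index set to all of $\cW_\ell$ can only increase the sum. This yields the stated inequality.

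The only point that needs care is the third fact, that the support of $W(\pi)$ is exactly $\cT_\pi$. We need both that every vertex of $\cT_\pi$ is visited and that the canonical relabeling does not merge or split vertices. Every component of $\cC_\pi$ or $\cG_\pi$ contains some block, and that block's edge is traversed, so every vertex is visited. Relabeling is a bijection on the names, so nothing is merged or split. Hence $v(W)=\cc(\cC_\pi)+\cc(\cG_\pi)$, and the two cycle ranks agree.
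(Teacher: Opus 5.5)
Your proposal is correct and follows the paper's proof essentially step for step. You bound $\E\tr G^\ell$ by $\sum_{\pi\in\cP_\ell}|\Cont(\pi)|$ via \eqref{eq:triangle-after-exact}, apply \eqref{eq:fixed-partition-final}, and then pass to $\cW_\ell$ through the injective, weight-preserving map $\pi\mapsto W$ using nonnegativity of the weights, with your check that every vertex of $\cT_\pi$ is visited (so that $g(W)=g_\pi$) spelling out a point the paper only asserts.
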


\begin{proof}
By \eqref{eq:triangle-after-exact} and \eqref{eq:fixed-partition-final},
\[
    \E\tr G^\ell
    \le\sum_{\pi\in\cP_\ell}|\Cont(\pi)|
    \le\sum_{\pi\in\cP_\ell}d^{1-g_\pi}p^{|\pi|}.
\]
The map $\pi\mapsto W$ constructed above sends each $\pi\in\cP_\ell$ to a canonical walk $W\in\cW_\ell$ with $g(W)=g_\pi$ and $e(W)=|\pi|$, and is injective. Summing the same summand $d^{1-g(W)}p^{e(W)}$ over the larger set $\cW_\ell$ in place of the image of $\cP_\ell$ can only increase the total, so
\[
    \sum_{\pi\in\cP_\ell}d^{1-g_\pi}p^{|\pi|}
    \le\sum_{W\in\cW_\ell}d^{1-g(W)}p^{e(W)}
    =d\sum_{W\in\cW_\ell}d^{-g(W)}p^{e(W)}.
\]
\end{proof}

The following combinatorial lemma bounds this weighted sum on canonical walks. The proof is given in Section~\ref{sec:walk-count}.

\begin{lemma}[Weighted canonical-walk bound]\label{lem:walk-count}
Let $C_3=18$ and $C_4=\log 16$. For $d\ge2$, even integer $\ell$ satisfying $2 \le \ell \le \log d/C_4$, and $x\ge\ell$,
\begin{equation*}
    \sum_{W\in\cW_\ell}d^{-g(W)}x^{e(W)}
    \le (C_3x)^\ell.
\end{equation*}
\end{lemma}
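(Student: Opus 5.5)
We propose to prove Lemma~\ref{lem:walk-count} by an encoding argument. Each walk $W\in\cW_\ell$ is reconstructed from a short certificate, and we bound the number of certificates with given $e=e(W)$ and $g=g(W)$. Since every support edge is traversed at least twice, $e\le\ell$. Call a step a \emph{first traversal} if its edge name is new. There are exactly $e$ first traversals. Among these, $v-1=e-g$ lead to a new vertex (\emph{tree steps}) and $g$ close onto an already visited vertex (\emph{cycle steps}). All remaining $2\ell-e$ steps are \emph{repeat steps}, along an already named edge.

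The encoding records three things. First, it records the set of positions of the first traversals, with at most $\binom{2\ell}{e}$ choices. Second, it records which $g$ of the $e$ first traversals are cycle steps, with at most $\binom{e}{g}$ choices, together with the target of each cycle step among the at most $e+1$ visited vertices. Third, it records the information needed to resolve repeat steps. A tree step needs no data, since the new vertex gets the next canonical name.

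The delicate part is the repeat steps. Naively each costs a factor of up to $e$, giving $e^{2\ell-e}$, which is far too large. We will use the standard Füredi–Komlós/Nelson–Nguyen device. Along a depth-first spanning tree (the tree-step edges), a repeat step that returns toward the root along the tree edge by which the current vertex was discovered is forced, provided no non-tree edge is available at the current vertex. Otherwise, the step is \emph{ambiguous}, and we record the chosen edge among the edges incident to the current vertex. We will bound the number of ambiguous steps by $O(g+(2\ell-2e))$, where $2\ell-2e$ is the excess of traversals beyond two per edge. Each ambiguous step costs at most a factor of $2\ell$.

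Putting this together, the count of walks with parameters $(e,g)$ is at most
\[
4^{\ell}\cdot\binom{e}{g}(2\ell)^{g}\cdot(2\ell)^{C(g+\ell-e)},
\]
where the $4^\ell$ absorbs the Dyck-type pattern and position choices. Weighting by $d^{-g}x^{e}$ and writing $e=\ell-k$ gives a term of order
\[
4^\ell x^\ell\cdot x^{-k}(2\ell)^{Ck}\cdot\bigl((2\ell)^{C+1}/d\bigr)^g .
\]
This is where both hypotheses enter. The condition $x\ge\ell$ must absorb $(2\ell)^{Ck}$ against $x^{-k}$, and the condition $\ell\le\log d/\log16$ must make $(2\ell)^{C+1}/d$ small, since $d\ge16^\ell$ dominates any polynomial in $\ell$. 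To make the first absorption work with constant $C$, we must actually show that each excess traversal costs only $O(\ell)$ and not $\ell^{C}$. We therefore aim for the sharper statement that each excess repeat step or cycle step costs at most $2\ell$ or so. Summing the resulting geometric series over $k,g\ge0$ then yields $(C_3x)^\ell$.

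The main obstacle is the ambiguity accounting: showing that forced backtracking is unambiguous except at $O(g+\ell-e)$ steps, each resolved with $O(\ell)$ data. Here we would first try the classical argument. Maintain the stack of the current DFS path. A repeat step is forced unless it either traverses a non-tree edge, of which there are $g$, or traverses an edge for the third or later time, which happens at most $2\ell-2e$ times in total. The remaining difficulty is to check that "second traversal of a tree edge going up" is always the unique stack pop. This can fail only when the walk leaves a vertex via a previously visited edge that is not the stack top, and such an event must itself be charged to a cycle edge or an excess traversal.

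The weight $d^{-g}\le16^{-\ell g}$ is generous. So if the charging of the $g$ cycle-related ambiguities costs $(2\ell)^{O(1)}$ each, this is still harmless, and only the $\ell-e$ excess ambiguities require the sharp $O(\ell)$ cost per step.
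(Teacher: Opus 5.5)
There is a genuine gap. The step you defer as ``the main obstacle'' is where the whole proof lives, and the charging you sketch for it cannot work. Under your rule, every repeat step that is not a return along the discovery edge is ambiguous and costs a factor $2\ell$. Take the support to be a single cycle of length $\ell$, and let the walk go around it twice in the same direction. Then $e=\ell$, so $\ell-e=0$, and $g=1$. Yet steps $\ell+1,\dots,2\ell-1$ are second traversals of tree edges leading \emph{away} from the root, so none of them is a stack pop. A single cycle therefore produces $\ell-1$ ambiguous steps, not $O(g+\ell-e)=O(1)$, and your code pays $(2\ell)^{\ell-1}$ for them. This is superexponential in $\ell$, while $d^{-1}$ may be as weak as $16^{-\ell}$, so no constant $C_3$ survives.

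There is a second, independent leak, and it already appears on tree supports. Your count of excess-driven ambiguous steps, $2\ell-2e$, is twice too large in the exponent. Charging $2\ell$ to each gives $(2\ell)^{2(\ell-e)}$ against the available $x^{-(\ell-e)}$, which leaves $(4\ell)^{\ell-e}$ at $x=\ell$. Only outward repeat steps may carry $O(\ell)$ data. Inward steps are forced, and in a tree there are exactly $\ell-e$ outward repeats.

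The missing idea, which is how the paper proceeds, is to make steps near cycles \emph{cheap} rather than forced.
\begin{itemize}
\item Prune the attached trees to the $2$-core $H^{(2)}$, and use a three-letter mask to separate core steps from outward and inward forest steps.
\item In the forest part, inward steps are forced. Each outward step is either $\mathsf{new}$ (weight $x$) or one of at most $\ell$ old child edges, for a total of $(\ell+x)^{\ell_f}\le(2x)^{\ell_f}$. This is where $x\ge\ell$ enters.
\item In the core, record \emph{every} step. Since $\sum_v(\deg v-2)=2g-2$, each core vertex has degree at most $2g$. So each core step has at most $2g+2$ options (two when $g=1$ and the core is a cycle), plus a factor $\ell^g$ for the targets of the $g$ closing steps.
\end{itemize}
The resulting factor $[2(g+1)]^{2\ell}\ell^g$ is exponential in $\ell$. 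It is absorbed precisely because $d^{-g}\le16^{-\ell g}$. So the hypothesis $\ell\le\log d/\log 16$ is needed at this exponential strength, not merely to dominate a polynomial in $\ell$ as your sketch assumes. Your target of only $(2\ell)^{O(g)}$ cost per unit of cycle rank is stronger than what is needed, and nothing in the proposal establishes it.
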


This gives the required moment bound.

\begin{proposition}[Logarithmic trace moment]\label{prop:trace-moment}
For the constants $C_3,C_4$ in Lemma~\ref{lem:walk-count}, every $d\ge2$ and every even integer $\ell\ge2$ satisfying $\ell\le \log d/C_4$, we have
\begin{equation*}
    \E\tr G^\ell\le d(C_3p)^\ell.
\end{equation*}
In addition, for every $d\ge2$,
\begin{equation*}
    \E\tr G^2\le2dp^2.
\end{equation*}
\end{proposition}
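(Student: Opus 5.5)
The first claim follows directly from Proposition~\ref{prop:trace-to-walks} and Lemma~\ref{lem:walk-count}. For an even $\ell$ with $2\le\ell\le\log d/C_4$, Proposition~\ref{prop:trace-to-walks} gives
\[
\E\tr G^\ell\le d\sum_{W\in\cW_\ell}d^{-g(W)}p^{e(W)}.
\]
We apply Lemma~\ref{lem:walk-count} with $x=p$. Its hypothesis $x\ge\ell$ holds because \eqref{eq:technical-regime} gives $p\ge\log d$, and $\ell\le\log d/C_4\le\log d$ since $C_4=\log16>1$. The lemma bounds the weighted sum by $(C_3p)^\ell$, which yields $\E\tr G^\ell\le d(C_3p)^\ell$.

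The bound for $\ell=2$ must be handled separately. When $d$ is small, $\log d/C_4<2$ and the lemma does not apply, so I would enumerate $\cW_2$ directly. A canonical walk of length $4$ in which every support edge is traversed a positive even number of times has either one edge, traversed four times, or two edges, each traversed twice.
\begin{itemize}
\item In the one-edge case, the walk is $1,2,1,2,1$ along a single edge. Its support is a tree, so $g=0$ and $e=1$.
\item In the two-edge case, the walk has the form $1\to a\to b\to c\to 1$ using two edge names, each twice, with no loops. The possibilities are:
\begin{itemize}
\item two distinct edges forming a tree, traversed as $1,2,1,3,1$ or $1,2,3,2,1$, with $g=0$;
\item two parallel edges between vertices $1$ and $2$, traversed as $e_1,e_2,e_1,e_2$, $e_1,e_1,e_2,e_2$, or $e_1,e_2,e_2,e_1$. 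Canonical naming fixes the first occurrence of each edge. Each of these has $g=1$.
\end{itemize}
\end{itemize}
Summing the weights $d^{-g}p^{e}$ gives
\[
\sum_{W\in\cW_2}d^{-g(W)}p^{e(W)}=p+2p^2+3p^2/d.
\]
This is too crude for the target. With $d\ge2$ and $p\ge\log 2$, the sum is about $3.5p^2+p$, which exceeds $2p^2$.

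Instead, I would go back to \eqref{eq:triangle-after-exact} and \eqref{eq:fixed-partition-final} and sum over admissible partitions $\pi\in\cP_2$ directly. These are partitions of $\{L_1,R_1,L_2,R_2\}$ into even blocks with $L_t$ and $R_t$ in different blocks, which leaves only $\{\{L_1,L_2\},\{R_1,R_2\}\}$ and $\{\{L_1,R_2\},\{R_1,L_2\}\}$.
\begin{itemize}
\item For $\{\{L_1,L_2\},\{R_1,R_2\}\}$, the graph $\cC_\pi$ is one component made of a double edge. The graph $\cG_\pi$ has edges $\{R_1,L_2\}$ and $\{R_2,L_1\}$, both joining the two blocks, so it is also connected. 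Then $g=2-2+1=1$, and the contribution is at most $p^2$.
\item For $\{\{L_1,R_2\},\{R_1,L_2\}\}$, the graph $\cC_\pi$ is connected and $\cG_\pi$ consists of two loops, one per block. Then $\cc(\cG_\pi)=2$, $g=0$, and the contribution is at most $dp^2$.
\end{itemize}
The total is at most $dp^2+p^2\le 2dp^2$.

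The only step needing care is this explicit enumeration, in particular computing $\cG_\pi$ correctly for each of the two partitions. The rest is a direct citation of the earlier propositions and lemmas.
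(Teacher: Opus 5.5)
Your proposal is correct and follows the paper's proof: the general bound is Proposition~\ref{prop:trace-to-walks} combined with Lemma~\ref{lem:walk-count} at $x=p\ge\log d\ge\ell$. The $\ell=2$ bound is exactly the argument of Appendix~\ref{app:small-d}, which applies \eqref{eq:fixed-partition-final} to the only two admissible partitions (with $g_\pi=1$ and $g_\pi=0$) to get $p^2+dp^2\le 2dp^2$, and your check that the cruder walk sum $p+2p^2+3p^2/d$ over $\cW_2$ does not suffice is also correct, though not needed.
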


\begin{proof}
Proposition~\ref{prop:trace-to-walks} gives
\[
    \E\tr G^\ell
    \le d\sum_{W\in\cW_\ell}d^{-g(W)}p^{e(W)}.
\]
Since $p\ge\log d\ge\ell$, Lemma~\ref{lem:walk-count} applies with $x=p$ and gives
\[
    \E\tr G^\ell\le d(C_3p)^\ell.
\]
The second-moment bound is proved in Appendix~\ref{app:small-d}.
\end{proof}

\subsection{Completion of the mean bound}

\begin{proof}[Proof of Proposition~\ref{prop:mean-error}]
Assume first that $d\ge16^4$. Let $L=\log_{16}d$, and let $\ell_0$ be the largest even integer at most $L$. Then $\ell_0\ge4$ and
\[
    L<\ell_0+2\le\frac32\ell_0,
    \qquad
    d^{1/\ell_0}=16^{L/\ell_0}<64.
\]
Proposition~\ref{prop:trace-moment} and $p/s\le8\eps$ therefore give
\[
\begin{aligned}
    \E\lVert E\rVert_{\op}
    &\le \bigl(\E\lVert E\rVert_{\op}^{\ell_0}\bigr)^{1/\ell_0}
     \le \bigl(\E\tr E^{\ell_0}\bigr)^{1/\ell_0}\\
    &=\frac1s\bigl(\E\tr G^{\ell_0}\bigr)^{1/\ell_0}
     <64C_3\frac ps
     \le512C_3\eps.
\end{aligned}
\]

If $2\le d<16^4$, the second-moment bound in Proposition~\ref{prop:trace-moment} gives
\[
    \E\lVert E\rVert_{\op}
    \le\frac1s\bigl(\E\tr G^2\bigr)^{1/2}
    \le\sqrt{2d}\,\frac ps
    <2048\sqrt2\,\eps.
\]
Since $C_3=18$, choosing $C_1=9216$ proves the proposition.
\end{proof}

The next two sections prove the lemmas used above.

\section{Injective Eulerian Gram sums}\label{sec:gram}

The goal of this section is to prove Lemma~\ref{lem:injective-gram}. One challenge is that partitions impose injectivity constraints on certain groups of vertices. We first prove a restricted Eulerian contraction bound with arbitrary vertex-dependent label sets, and then use inclusion--exclusion to enforce these injectivity constraints and obtain Lemma~\ref{lem:injective-gram}.

Throughout this section, $u_1,\ldots,u_n\in\R^d$ form a Parseval frame:
\begin{equation*}
    \sum_{i=1}^n u_i u_i^\top=\Id.
\end{equation*}
Loops in an undirected multigraph count twice toward degree.

\subsection{Restricted Eulerian contraction}

An oriented generalized edge $e=(v,w)$ carries a matrix $M_e$ with $\lVert M_e\rVert_{\op}\le1$ and contributes $\langle u_{a(v)},M_eu_{a(w)}\rangle$. Reversing the orientation replaces $M_e$ by $M_e^\top$. Ordinary Gram factors correspond to $M_e=\Id$.

\begin{lemma}[Restricted Eulerian contraction]\label{lem:restricted-contraction}
Let $H$ be an undirected multigraph with no isolated vertices and with every degree even. For each vertex $v$, choose an arbitrary set $A_v\subseteq[n]$. Then
\begin{equation*}
\left|
    \sum_{a(v)\in A_v \\ (v\in V(H))}
    \prod_{e=(v,w)\in E(H)}
    \langle u_{a(v)},M_eu_{a(w)}\rangle
\right|
\le d^{\cc(H)}.
\end{equation*}
\end{lemma}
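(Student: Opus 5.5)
The plan is to prove the bound one connected component at a time, since the sum factorizes over components of $H$; it then suffices to show that a \emph{connected} Eulerian $H$ gives a value of absolute value at most $d$. The argument has a local step, a global step and a base case, and the global step is where I expect the real difficulty.

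\textbf{Local step: each vertex is a contraction.} Fix an Eulerian orientation of $H$, so every vertex $v$ of degree $2k_v$ has $k_v$ incoming and $k_v$ outgoing half-edges. Each vertex then acts as the operator
\[
T_v:=\sum_{i\in A_v}(u_i^{\otimes k_v})(u_i^{\otimes k_v})^\top
\quad\text{on }(\R^d)^{\otimes k_v}.
\]
The Parseval identity gives $\lVert u_i\rVert_2\le1$, hence $u_i^{\otimes(k-1)}(u_i^{\otimes(k-1)})^\top\preceq I$. Since the Kronecker product of PSD matrices is monotone in each factor,
\[
0\preceq T_v\preceq\sum_{i\in A_v}u_iu_i^\top\otimes I\preceq\Id\otimes I .
\]
So every vertex is a PSD contraction, uniformly in the restriction set $A_v$. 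This is the only place the sets $A_v$ enter. The edge matrices $M_e$ are contractions by hypothesis.

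\textbf{Global step: Cauchy--Schwarz doubling (main obstacle).} The value of the component is the full contraction of a tensor network of contractions. Naively contracting one edge at a time loses a factor $d$ for every independent cycle, which is far too weak. I would instead induct on the number of vertices of degree at least $4$.
\begin{enumerate}[label=(\roman*)]
\item Pick such a vertex $v$ and write $T_v=\sum_{i\in A_v}\xi_i\xi_i^\top$ with $\xi_i=u_i^{\otimes k_v}$.
\item The whole sum becomes $\sum_{i\in A_v}\langle \xi_i,\mathcal N\xi_i\rangle$. Here $\mathcal N$ is the contraction of the rest of the network: an operator from the $k_v$ outgoing legs of $v$ back to its $k_v$ incoming legs.
\item Split $\mathcal N$ through the Euler circuit as $\mathcal N=\mathcal N_2^\top\mathcal N_1$. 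Then apply Cauchy--Schwarz in both the index $i$ and the internal labels. This bounds the value by the geometric mean of the values of two ``doubled'' Eulerian networks, obtained by gluing each half to its mirror image across $v$.
\item In each doubled network, the vertex $v$ is replaced by vertices of strictly smaller degree. Its component count is controlled by that of $H$, so by induction each doubled network is at most $d$.
\item The geometric mean of two quantities at most $d$ is at most $d$, which closes the induction.
\end{enumerate}
The main technical work will be bookkeeping: checking that the doubled graphs remain Eulerian, that they have no new isolated vertices, and that the Cauchy--Schwarz step does not increase the number of components.

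\textbf{Base case: every vertex has degree $2$.} A connected $H$ with all degrees $2$ is a single cycle through distinct vertices. The sum over $a(v)\in A_v$ at each vertex inserts $P_v:=\sum_{i\in A_v}u_iu_i^\top$, with $0\preceq P_v\preceq \Id$. The value is therefore $\tr\bigl(\prod_v P_vM_{e_v}\bigr)$ for a cyclic product of contractions on $\R^d$. Since $|\tr X|\le d\lVert X\rVert_{\op}\le d$ for such a product, the bound holds. Loops at a degree-$2$ vertex give $\tr(P_vM_e)\le d$ in the same way. This gives $d^{\cc(H)}$ after multiplying over components.

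If the doubling bookkeeping fails, my fallback is a different reduction. I would use the Euler circuit to write the component's value as $\tr\bigl(\bigotimes_vT_v\cdot\bigotimes_eM_e\cdot\Sigma\bigr)$ for a permutation operator $\Sigma$. I would then exploit that each $T_v$ factors as $\sum_{i\in A_v}\xi_i\xi_i^\top$ to collapse all visits of $v$ to a single rank-one insertion per label. This reduces the problem to a single cyclic trace on $\R^d$ with summed contractions.
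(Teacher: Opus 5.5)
Your local step ($T_v\preceq P_{A_v}\otimes I$) and your degree-two base case are correct. The global step (iii)--(iv) carries the whole difficulty, and it does not go through as stated.

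\textbf{Where the induction fails.} After Cauchy--Schwarz, $v$ does not lose degree. In $\sum_i\lVert\mathcal N_1\xi_i\rVert_2^2=\tr(T_v\mathcal N_1^\top\mathcal N_1)$ it carries $u_i^{\otimes k_v}$ on the original legs and again on the mirrored legs. You can lower it to degree two only by also invoking $T_v\preceq P_{A_v}\otimes I$ against the positive semidefinite operator $\mathcal N_1^\top\mathcal N_1$. But the mirrored half duplicates every vertex in it, including other vertices of degree at least four, so your induction measure need not decrease. You also never say which factorization $\mathcal N=\mathcal N_2^\top\mathcal N_1$ to use, although everything depends on that choice.

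For example, let $H$ be two vertices $v,w$ joined by four parallel edges. With $\mathcal N_2=I$, the doubled network is a triangle on $v,w,w'$ with every edge doubled. Replacing $T_v$ by $P_{A_v}\otimes I$ and summing out $v$ gives two vertices joined by four parallel generalized edges. That is the original configuration, so the recursion makes no progress.

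The claim that the component count is controlled also needs an argument. If some connected piece of a half reaches $v$ only through the other half, then in the doubled network it attaches only to its own mirror image. It then forms a separate component, and the induction hypothesis gives $d^2$, not $d$.

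\textbf{What is missing.} You need a rule for choosing the split so that the doubled objects are genuinely simpler, and a way to handle graphs where no such split exists. The paper splits along edges rather than at a vertex. If a component contains two edge-disjoint spanning trees $\mathcal T_1,\mathcal T_2$, bound every other factor by one and use $2|AB|\le|A|^2+|B|^2$. Each squared tree sum is at most $d$ by peeling leaves with $\sum_{i\in A_v}|\langle u_i,Mx\rangle|^2\le\lVert x\rVert_2^2$, so this case needs no induction. In your example, sharing the label of $w$ and splitting the four edges into two pairs is exactly this.

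When no two edge-disjoint spanning trees exist, the Nash--Williams criterion plus the evenness of every cut in an Eulerian graph gives a two-edge cut. An inclusion-minimal shore of that cut does contain two edge-disjoint spanning trees. The same argument then shows that summing its labels yields a single generalized edge of operator norm at most one, which strictly decreases the number of vertices.

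Your fallback does not close the gap either. Forcing the repeated visits of $v$ along the Euler circuit to share one label is precisely the difficulty. Without that constraint, the bound would already follow from $|\tr(\prod_t P_{A_{v_t}}M_{e_t})|\le d$.
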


The statement includes ordinary Gram products by taking every $M_e=\Id$.

\paragraph{Proof sketch.}
It is enough to treat one connected component. If the graph has two edge-disjoint spanning trees $\cT_1,\cT_2$, all other edge factors have absolute value at most one. For a fixed labeling, let $A$ and $B$ be the products along $\cT_1$ and $\cT_2$. Then
\[
    2|AB|\le |A|^2+|B|^2.
\]
Each squared tree sum is at most $d$: remove leaves one at a time and use
\[
    \sum_{i\in A_v}|\langle u_i,Mx\rangle|^2
    =x^\top M^\top\!\left(\sum_{i\in A_v}u_i u_i^\top\right)Mx
    \le\lVert x\rVert_2^2.
\]
If there are no two edge-disjoint spanning trees, Eulerian parity and the Nash--Williams criterion give a two-edge cut. Choose a smallest shore of such a cut. That shore contains two edge-disjoint spanning trees, so summing all labels on the shore produces a bilinear form $\langle x,My\rangle$ with $\lVert M\rVert_{\op}\le1$. Replace the whole shore by this generalized edge and continue by induction. The components factor, giving $d^{\cc(H)}$. Appendix~\ref{app:contraction} contains the full proof.

\subsection{Inclusion--exclusion and proof of Lemma~\ref{lem:injective-gram}}

\begin{proof}[Proof of Lemma~\ref{lem:injective-gram}]
Fix one block $P\in\mathcal P$ of size $t$. Let $\theta:[n]\to[t]$ be a uniformly random map. For a fixed labeling $a$ on $P$, inclusion--exclusion gives
\begin{equation*}
    \one\bigl\{\{\theta(a(v)):v\in P\}=[t]\bigr\}
    =\sum_{F\subseteq[t]}(-1)^{t-|F|}
      \prod_{v\in P}\one\{\theta(a(v))\in F\}.
\end{equation*}
If the labels $a(v)$ are distinct, the values $\theta(a(v))$ are independent and uniform, and they cover $[t]$ with probability $t!/t^t$. If a label repeats, they cannot cover all $t$ values. Hence
\begin{align}
    \one\bigl\{\{a(v):v\in P\}\text{ are distinct}\bigr\}
    &=\frac{t^t}{t!}
      \E_\theta\one\bigl\{\{\theta(a(v)):v\in P\}=[t]\bigr\}\notag\\
    &=\frac{t^t}{t!}
      \E_\theta\sum_{F\subseteq[t]}(-1)^{t-|F|}
      \prod_{v\in P}\one\{\theta(a(v))\in F\}.
      \label{eq:IE-injectivity}
\end{align}
Apply \eqref{eq:IE-injectivity} independently to every block. After fixing all maps $\theta$ and all subsets $F$, each vertex $v\in P$ is simply restricted to an allowed set
\[
    A_v=\theta^{-1}(F).
\]
Thus each term is a restricted Eulerian Gram sum, so Lemma~\ref{lem:restricted-contraction} bounds its absolute value by $d^{\cc(H)}$.

A block of size $t$ contributes $2^t$ inclusion--exclusion terms and the factor $t^t/t!$. Since $t!\ge(t/e)^t$,
\[
    \prod_{P\in\mathcal P}
    2^{|P|}\frac{|P|^{|P|}}{|P|!}
    \le
    \prod_{P\in\mathcal P}(2e)^{|P|}
    =(2e)^{|V(H)|}.
\]
Taking absolute values and expectations proves \eqref{eq:injective-gram}.
\end{proof}

\section{A weighted count of canonical walks}\label{sec:walk-count}

The goal of this section is to prove Lemma~\ref{lem:walk-count}, which requires only Definition~\ref{def:canonical-walk}. We show that the factor $d^{-g(W)}$ compensates for the additional combinatorial complexity created by cycles in the support of a canonical walk. We first count walks whose support is a tree, where the traversal is encoded by a Dyck word together with the choices that create or revisit edges. For positive cycle rank, we prune all attached trees and encode the remaining $2$-core route. If every core vertex has degree two, the core is a cycle. Otherwise, the cycle rank controls both the maximum core degree and the number of first traversals that close a cycle. Since the moment $\ell$ is $O(\log d)$, the cycle-rank penalty can absorb the resulting core-route count.

We assign to each walk a codeword that records enough information to
reconstruct it. The map from walks to codewords is injective, and the
local code symbols are weighted so that the codeword associated with
$W$ has weight $x^{e(W)}$. Therefore, it suffices to bound the total
weight of all possible codewords.

During encoding and decoding, whenever a vertex or edge is first
discovered, it receives the next available temporary label according to a fixed
convention shared by the encoder and decoder. These temporary labels are
distinct from the canonical vertex and edge names. After reconstructing the
labeled walk, the decoder applies the canonical relabeling from
Definition~\ref{def:canonical-walk}.

Let $\cW_{\ell,g}$ be the walks in $\cW_\ell$ whose support has cycle rank $g$, and let
\begin{equation*}
    A_{\ell,g}(x)=\sum_{W\in\cW_{\ell,g}}x^{e(W)}.
\end{equation*}
Every support edge is used at least twice, so
\begin{equation*}
    e(W)\le \ell,
    \qquad
    0\le g(W)\le e(W)\le \ell.
\end{equation*}
We first count tree supports and then supports with positive cycle rank.

\subsection{Tree supports}

Assume $g=0$. The support is a tree rooted at $v_0$. Record an up-step when the distance from the root increases and a down-step when it decreases. The result is a Dyck word of length $2\ell$, so there are at most $4^\ell$ possibilities.

For a fixed Dyck word, every down-step is forced: it uses the parent edge. At an up-step, the walk either uses a previously discovered child edge or creates a new child edge. There are at most $\ell$ old edges, and there is one new-edge symbol. Give an old-edge choice weight $1$ and a new-edge choice weight $x$. Each support edge is created once, so the product of local weights is $x^{e(W)}$. Therefore, when $x\ge \ell$,
\begin{equation}\label{eq:tree-count}
    A_{\ell,0}(x)
    \le
    \underbrace{\mathstrut 4^\ell}_{\text{Dyck words}}
    \;\cdot\;
    \underbrace{\mathstrut (\ell+x)^\ell}_{\text{up-step records}}
    \le (8x)^\ell.
\end{equation}
This is an injective weighted code. The Dyck word gives the up- and down-steps, an old choice records the name of a previously discovered child edge, a new choice creates a fresh edge and vertex, and a down-step uses the forced parent edge.

\subsection{Pruning attached trees}

Fix $g\ge1$ and let $H$ be the support. Repeatedly delete degree-one vertices. The remaining graph $H^{(2)}$ is a nonempty connected $2$-core. Deleting a leaf removes one vertex and one edge, so it preserves $g$. The deleted edges form rooted trees, each attached to exactly one core vertex.

\begin{figure}[H]
\centering
\begin{tikzpicture}[
  x=.86cm,y=.86cm,
  vertex/.style={
    circle,
    draw,
    fill=white,
    inner sep=0pt,
    minimum size=4.2mm
  },
  core edge/.style={line width=.85pt},
  forest edge/.style={line width=.42pt},
  arrow/.style={
    -{Latex[length=2mm]},
    line width=.75pt
  },
  panel label/.style={
    font=\scriptsize,
    align=center
  },
  operation/.style={
    font=\scriptsize,
    align=center,
    fill=white,
    inner sep=1.2pt
  }
]

\node[vertex] (u)  at (1.70,0) {};
\node[vertex] (v)  at (5.30,0) {};
\node[vertex] (t)  at (3.50,1.20) {};
\node[vertex] (m1) at (2.90,0) {};
\node[vertex] (m2) at (4.10,0) {};
\node[vertex] (b1) at (2.90,-1.20) {};
\node[vertex] (b2) at (4.10,-1.20) {};

\draw[core edge] (u)--(t)--(v);
\draw[core edge] (u)--(m1)--(m2)--(v);
\draw[core edge] (u)--(b1)--(b2)--(v);

\node[vertex] (l1) at (.85,.60) {};
\node[vertex] (l2) at (.15,1.15) {};
\node[vertex] (l3) at (.15,.05) {};

\draw[forest edge] (u)--(l1);
\draw[forest edge] (l1)--(l2);
\draw[forest edge] (l1)--(l3);

\node[vertex] (t1) at (3.50,2.05) {};
\node[vertex] (t2) at (4.15,2.65) {};

\draw[forest edge] (t)--(t1)--(t2);

\node[vertex] (r1) at (6.05,.65) {};
\node[vertex] (r2) at (6.05,1.50) {};

\draw[forest edge] (v)--(r1)--(r2);

\node[panel label] at (3.10,-2.15)
  {support $H$\\attached trees shown by thin edges};

\draw[arrow] (6.70,0)--(8.00,0);

\node[operation] at (7.35, .50)
  {prune leaves};

\node[vertex] (u2)  at (8.65,0) {};
\node[vertex] (v2)  at (12.25,0) {};
\node[vertex] (t2c) at (10.45,1.20) {};
\node[vertex] (m12) at (9.85,0) {};
\node[vertex] (m22) at (11.05,0) {};
\node[vertex] (b12) at (9.85,-1.20) {};
\node[vertex] (b22) at (11.05,-1.20) {};

\draw[core edge] (u2)--(t2c)--(v2);
\draw[core edge] (u2)--(m12)--(m22)--(v2);
\draw[core edge] (u2)--(b12)--(b22)--(v2);

\node[panel label] at (10.45,-2.15)
  {$2$-core $H^{(2)}$\\attached trees removed};

\end{tikzpicture}
\caption{Pruning attached trees. Repeatedly deleting degree-one
vertices removes the attached trees and preserves the cycle rank,
leaving the nonempty connected $2$-core $H^{(2)}$.}
\label{fig:core-pruning}
\end{figure}

Rotate the closed walk so that it starts at its first visit to the core, and record the rotation offset. The rotated walk together with the offset uniquely determines the original walk. Erase every maximal excursion into an attached tree. The remaining route is a closed walk on $H^{(2)}$ of length $2\ell_c$. Let $\ell_f=\ell-\ell_c$. The erased forest part has $\ell_f$ outward steps and $\ell_f$ inward steps.

We use a mask to record whether each of the $2\ell$ positions is a core step, an outward forest step, or an inward forest step. The mask has at most $3^{2\ell}=9^\ell$ possibilities. We include the factor $2\ell$ for the rotation offset in the common bound below.

\begin{claim}[Forest-extension code]\label{clm:forest-extension-code}
Fix a rotated core route of length $2\ell_c$, and let $\ell_f=\ell-\ell_c$. The mask, together with one symbol at every outward forest step, determines at most one rotated walk. The symbol is either
\[
    \mathsf{new},
    \qquad\text{or}\qquad
    \mathsf{old}(e),
\]
where $e$ names a previously created child edge incident to the current vertex. If $\mathsf{new}$ has weight $x$ and every $\mathsf{old}(e)$ has weight $1$, then the total weighted number of masks and forest-extension symbols is at most
\[
    \underbrace{\mathstrut 9^\ell}_{\text{masks}}
    \;\cdot\;
    \underbrace{\mathstrut (\ell+x)^{\ell_f}}_{\text{forest records}}.
\]
\end{claim}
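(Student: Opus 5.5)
The plan is to give an explicit decoder that, from the rotated core route, the mask, and the outward-step symbols, rebuilds the rotated walk step by step. The argument mirrors the tree-support code in \eqref{eq:tree-count}. Since the rotated walk begins at its first visit to the core, its start is a core vertex fixed by the given core route. The mask splits the $2\ell$ positions into core, outward and inward steps. The core steps, read in order, must be exactly the given route of length $2\ell_c$. So the decoder follows the core route at core positions and handles forest positions in between.

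The key structural fact is this. Each maximal excursion into an attached tree leaves from some core vertex and comes back to it. Inside an excursion, the graph distance to the core rises by one at each outward step and falls by one at each inward step. The decoder keeps the current vertex together with a stack of forest edges leading back to the core. An outward step carrying $\mathsf{new}$ makes a fresh edge and a fresh child vertex, both given the next temporary labels, and pushes the edge. An outward step carrying $\mathsf{old}(e)$ crosses the previously created child edge $e$ at the current vertex and pushes it. An inward step is forced: it pops the stack and crosses the parent edge. A core step is only legal when the stack is empty, and it follows the next edge of the core route. Injectivity holds because every choice the walk makes is either forced or written in the code. Note that an attached tree meets the core at one vertex only, so an inward step never has two options. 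Also, the edges out of the current forest vertex away from the core are precisely its child edges created earlier, so $\mathsf{old}(e)$ pins down the step. Together these show that any two rotated walks with the same core route, mask and symbols coincide, giving at most one walk.

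For the weights, each forest support edge is created exactly once, by one $\mathsf{new}$ symbol, so the product of symbol weights equals $x$ raised to the number of forest edges. At an outward step the $\mathsf{old}$ choices number at most the total count of support edges, which is at most $\ell$ because every edge is traversed at least twice. The weighted number of symbols at one outward step is therefore at most $\ell+x$. There are exactly $\ell_f$ outward steps, so the product over them is at most $(\ell+x)^{\ell_f}$. Multiplying by the at most $3^{2\ell}=9^\ell$ masks gives the bound. I expect the one delicate point to be confirming that the mask with the stack rule really does force every inward step and separates forest steps from core steps. This holds because pruning leaves attached trees with a single attachment vertex, so the tree directions are unambiguous.
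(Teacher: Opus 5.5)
Your proposal is correct and follows the paper's proof. Both use a left-to-right decoder: at core positions it follows the core route, at outward positions it reads $\mathsf{new}$ or $\mathsf{old}(e)$, and at inward positions it takes the forced parent edge; the weight bound comes from at most $\ell$ old choices plus one new choice at each of the $\ell_f$ outward steps, times $3^{2\ell}=9^\ell$ masks. Your explicit stack is just a concrete implementation of the paper's remark that the parent edge is forced because every component outside the core is a rooted tree.
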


\begin{proof}
Scan the mask from left to right while following the fixed core route. At a core position, take the next step of the core route. At an outward forest position, a symbol $\mathsf{new}$ creates a fresh edge and vertex, while $\mathsf{old}(e)$ traverses the named child edge. At an inward forest position, the parent edge is forced because every component outside the core is a rooted tree. Thus the data reconstruct at most one rotated walk.

There are at most $3^{2\ell}=9^\ell$ masks. At each of the $\ell_f$ outward positions there are at most $\ell$ old-edge choices of weight $1$ and one new-edge choice of weight $x$. The new-edge symbols are in bijection with the support edges outside the core, so their product weight is exactly the forest contribution to $x^{e(W)}$. This proves the claim.
\end{proof}

Claim~\ref{clm:forest-extension-code} gives
\begin{equation}\label{eq:forest-extension}
    9^\ell(\ell+x)^{\ell_f}
    \le 9^\ell(2x)^{\ell_f},
\end{equation}
since $x\ge\ell$.
If $e_c=|E(H^{(2)})|$, every core edge is used at least twice, so
\begin{equation}\label{eq:core-edge-weight}
    e_c\le \ell_c,
    \qquad
    x^{e_c}\le x^{\ell_c}.
\end{equation}
Combining \eqref{eq:forest-extension} and
\eqref{eq:core-edge-weight}, and using
$\ell_f+\ell_c=\ell$ and $\ell_f\le\ell$, we obtain, for each fixed core route, that the combined contribution of the mask choices,
the weighted forest-extension records, and the core-edge weight is at most
\[
    \underbrace{\mathstrut 9^\ell}_{\text{masks}}
    \;\cdot\;
    \underbrace{\mathstrut (2x)^{\ell_f}}_{\text{forest records}}
    \;\cdot\;
    \underbrace{\mathstrut x^{e_c}}_{\text{core weight}}
    \le 9^\ell 2^{\ell_f}x^{\ell_f+\ell_c}
    \le 9^\ell 2^\ell x^\ell
    =(18x)^\ell.
\]
Including the factor $2\ell$ for the rotation offset gives the common bound
\begin{equation}\label{eq:common-forest-core}
    2\ell
    \cdot
    (18x)^\ell
\end{equation}
In the next sections, we will multiply this common bound by the number of possible encodings of the core route. The following elementary facts control the core encoding.

\begin{claim}[Core structure]\label{clm:core-structure}
Let $v_c=|V(H^{(2)})|$ and $e_c=|E(H^{(2)})|$. Then $v_c\le e_c\le\ell$, and every core vertex has degree at most $2g$. If the starting vertex of the rotated core route is declared discovered, then, among the first traversals of the distinct core edges, exactly $g$ arrive at a previously discovered vertex.
\end{claim}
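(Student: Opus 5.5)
The plan has three parts, one for each assertion of Claim~\ref{clm:core-structure}, and uses only elementary facts about the pruned support. Throughout, $H^{(2)}$ is connected, has minimum degree at least two, and has the same cycle rank $g\ge1$ as $H$, since each leaf deletion removes one vertex and one edge.

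\textbf{Counts.} Every core vertex has degree at least two, so the handshake identity gives $2e_c=\sum_{v}\deg_{H^{(2)}}(v)\ge 2v_c$, that is, $v_c\le e_c$. Every edge of the support is traversed at least twice and there are $2\ell$ traversals in total, so $e_c\le e(W)\le\ell$.

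\textbf{Degree bound.} Fix a core vertex $v$ of degree $k$. Since every degree is at least two, the handshake identity gives
\[
2e_c=\sum_{w}\deg(w)\ge k+2(v_c-1).
\]
Hence $k\le 2(e_c-v_c+1)=2g$.

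\textbf{Cycle-closing first traversals.} I will prove this by a discovery-tree argument on the rotated core route. Declare the starting vertex discovered, and scan the first traversals of the distinct core edges in order.
\begin{itemize}
\item A first traversal that arrives at an undiscovered vertex discovers exactly one new vertex.
\item Every core vertex other than the start is discovered at some point. It is first reached by the route, and that step uses an edge not traversed before: if the edge had been traversed earlier, both of its endpoints would already be known.
\item Each such discovering traversal is therefore a first traversal, and exactly $v_c-1$ first traversals are discovering.
\end{itemize}
The remaining $e_c-(v_c-1)=g$ first traversals arrive at previously discovered vertices. The discovering edges form a spanning tree of the core, which confirms the count.

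\textbf{Main subtlety.} The only delicate point is justifying that the edge a vertex is first reached through is being traversed for the first time, which the argument above handles. The core route is a closed walk on $H^{(2)}$ that covers every core edge, so all core vertices are reached.
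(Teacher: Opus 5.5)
Your proof is correct and follows essentially the same argument as the paper: the degree-sum identity $\sum_v(\deg(v)-2)=2g-2$, which your inequality $2e_c\ge k+2(v_c-1)$ restates, gives the degree bound, and the spanning tree of discovering first traversals gives exactly $g$ closing traversals. Your small variations ($v_c\le e_c$ via the handshake identity instead of $e_c=v_c-1+g$, and $e_c\le e(W)\le\ell$ instead of $e_c\le\ell_c$) are equally valid. Your explicit check that a vertex is first reached along a never-before-traversed edge fills in a step the paper leaves implicit.
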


\begin{proof}
Since the core is connected and has cycle rank $g$,
\[
    \sum_{v\in V(H^{(2)})}(\deg(v)-2)
    =2e_c-2v_c
    =2g-2.
\]
Every core degree is at least two, so each summand is nonnegative and $\deg(v)\le2g$. Also $e_c=v_c-1+g\ge v_c$, while $e_c\le\ell_c\le\ell$ by \eqref{eq:core-edge-weight}.

Consider each core edge at the first time it is traversed. Exactly $v_c-1$ of these traversals reach a new vertex: every core vertex other than the starting vertex is first reached once, and the corresponding edges form a spanning tree. The remaining
\[
    e_c-(v_c-1)=g
\]
first traversals reach an already discovered vertex.
\end{proof}

\subsection{Encoding the degree-two core}

If every core vertex has degree two, then the core is a cycle, including the two-edge cycle formed by two parallel edges, and $g=1$. Store the cycle length and, at every core step, the name of the incident core edge traversed. At the current vertex there are exactly two incident core edges, counting the two parallel edges separately in the two-edge cycle. Hence the number of possible core encodings is at most
\[
    \underbrace{\mathstrut \ell}_{\text{cycle length}}
    \;\cdot\;
    \underbrace{\mathstrut 2^{2\ell_c}}_{\text{core-step records}}
    \le \ell\,4^\ell.
\]
Multiplying by the common bound in \eqref{eq:common-forest-core} gives
\begin{equation}\label{eq:unicyclic}
    A_{\ell,1}(x)
    \le
    \underbrace{\mathstrut 2\ell(18x)^\ell}_{\text{common bound}}
    \;\cdot\;
    \underbrace{\mathstrut \ell\,4^\ell}_{\text{core encodings}}
    =(18x)^\ell\,2\ell^2\,4^\ell.
\end{equation}

This core code is injective. Given the stored cycle length, construct a cycle with a distinguished starting vertex and distinct edge names; when the length is two, retain two distinct parallel edge names. Starting at the distinguished vertex, follow the named incident edge at each core step. This reconstructs the rotated core route. Claim~\ref{clm:forest-extension-code} reconstructs the attached-tree excursions. Finally, undo the recorded rotation and take the canonical representative of the resulting walk.

\subsection{Encoding a branching core}

Suppose that not every core vertex has degree two. Then $g\ge2$. Read the rotated core route from its starting vertex. At each core step, record one of the following:
\begin{enumerate}[label=(\roman*)]
    \item $\mathsf{old}$: traverse a previously exposed core edge incident to the current vertex, recording the name of that edge;
    \item $\mathsf{new}$: traverse a core edge for the first time toward a previously undiscovered vertex;
    \item $\mathsf{close}$: traverse a core edge for the first time toward a previously discovered vertex, recording the name of that endpoint separately.
\end{enumerate}
By Claim~\ref{clm:core-structure}, there are at most $2g$ choices in case~(i), and cases~(ii) and~(iii) each contribute one additional symbol ($\mathsf{new}$ or $\mathsf{close}$). Thus there are at most
\[
    \underbrace{\mathstrut 2g}_{\mathsf{old}}
    \quad+\quad
    \underbrace{\mathstrut 1}_{\mathsf{new}}
    \quad+\quad
    \underbrace{\mathstrut 1}_{\mathsf{close}}
    =2(g+1)
\]
possible step records before the endpoint data in case~(iii) are specified. Moreover, case~(iii) occurs exactly $g$ times, and each recorded endpoint has at most $v_c\le\ell$ choices. The core part of the code therefore has at most
\[
    \underbrace{\mathstrut \ell^g}_{\text{recorded endpoints}}
    \;\cdot\;
    \underbrace{\mathstrut [2(g+1)]^{2\ell_c}}_{\text{core-step records}}
    \le \ell^g[2(g+1)]^{2\ell}
\]
possibilities. Multiplying by the common bound in \eqref{eq:common-forest-core} gives
\begin{equation}\label{eq:Apg-core}
    A_{\ell,g}(x)
    \le
    \underbrace{\mathstrut 2\ell(18x)^\ell}_{\text{common bound}}
    \;\cdot\;
    \underbrace{\mathstrut \ell^g[2(g+1)]^{2\ell}}_{\text{core encodings}}
    =(18x)^\ell\,2\ell^{g+1}[2(g+1)]^{2\ell}
    \qquad (g\ge2).
\end{equation}

The core code is injective. Begin with the starting vertex and process the step records in order. In case~(i), follow the named exposed edge; in case~(ii), add a fresh edge and a fresh vertex; and in case~(iii), add a fresh edge to the named exposed vertex. The labeling convention makes every selected object unambiguous, and parallel edges remain distinct because they have distinct names. This reconstructs the rotated core route. Claim~\ref{clm:forest-extension-code} then reconstructs the attached-tree excursions. Finally, undo the recorded rotation and take the canonical representative of the resulting walk.

An example of the encoding and decoding process is given in Figure~\ref{fig:encoding-example}. In the example, we show how to build a codeword from a canonical walk, and then decode it to recover the original canonical walk. Note that not all codewords can be decoded to a canonical walk.

\begin{figure}[H]
\centering
\setlength{\fboxsep}{5pt}
\fbox{\begin{minipage}{0.92\linewidth}
\footnotesize
\setlength{\abovedisplayskip}{3pt}
\setlength{\belowdisplayskip}{3pt}
\setlength{\jot}{2pt}
\setlength{\tabcolsep}{3pt}
\renewcommand{\arraystretch}{1.08}

\textbf{1. Canonical walk.}\par
\smallskip

\noindent
\begin{minipage}[t]{0.63\linewidth}
\vspace{0pt}
Consider
\[
\begin{aligned}
W={}& v_1 \xrightarrow{e_1} v_2 \xrightarrow{e_2} v_3
      \xrightarrow{e_3} v_4 \xrightarrow{e_4} v_5
      \xrightarrow{e_4} v_4 \xrightarrow{e_3} v_3 \\
   & \xrightarrow{e_5} v_2 \xrightarrow{e_6} v_3
     \xrightarrow{e_2} v_2 \xrightarrow{e_5} v_3
     \xrightarrow{e_6} v_2 \xrightarrow{e_1} v_1 .
\end{aligned}
\]
The vertex and edge names appear in first-encounter order, so $W$ is
canonical. Here
\[
    \ell=6,
    \qquad
    e(W)=6,
    \qquad
    g(W)=6-5+1=2.
\]
\end{minipage}\hfill
\begin{minipage}[t]{0.33\linewidth}
\vspace{0pt}
\centering
\begin{tikzpicture}[
  x=.78cm,y=.78cm,
  vertex/.style={
    circle,
    draw,
    fill=white,
    inner sep=0pt,
    minimum size=16pt
  },
  core edge/.style={line width=.85pt},
  forest edge/.style={line width=.5pt},
  elabel/.style={fill=white,inner sep=1pt,font=\scriptsize}
]
\node[vertex] (v2) at (0,0) {$v_2$};
\node[vertex] (v3) at (3.2,0) {$v_3$};
\node[vertex] (v1) at (-1.2,.95) {$v_1$};
\node[vertex] (v4) at (4.35,.95) {$v_4$};
\node[vertex] (v5) at (4.35,2.30) {$v_5$};

\draw[core edge] (v2) -- node[elabel,auto] {$e_2$} (v3);
\draw[core edge,bend left=26] (v2) to node[elabel,auto] {$e_5$} (v3);
\draw[core edge,bend right=26] (v2) to node[elabel,auto] {$e_6$} (v3);

\draw[forest edge] (v2) -- node[elabel,auto] {$e_1$} (v1);
\draw[forest edge] (v3) -- node[elabel,auto] {$e_3$} (v4);
\draw[forest edge] (v4) -- node[elabel,auto] {$e_4$} (v5);
\end{tikzpicture}
\end{minipage}

\medskip
\textbf{2. Codeword.}
Rotate by one edge-step so that the walk begins at its first core
vertex $v_2$. Along the rotated walk, the encoder labels vertices and
edges in order of first appearance. This yields
\[
(v_2,v_3,v_4,v_5,v_1)
\mapsto
(\widehat v_1,\widehat v_2,\widehat v_3,\widehat v_4,\widehat v_5),
\]
\[
(e_2,e_3,e_4,e_5,e_6,e_1)
\mapsto
(\widehat e_1,\widehat e_2,\widehat e_3,\widehat e_4,\widehat e_5,\widehat e_6).
\]
The codeword is
\[
\mathcal C(W)=
\left[
\begin{array}{rcl}
r
&=&1,\\
M
&=&(\mathsf C,\mathsf O,\mathsf O,\mathsf I,\mathsf I,
     \mathsf C,\mathsf C,\mathsf C,\mathsf C,\mathsf C,
     \mathsf O,\mathsf I),\\
R_c
&=&(\mathsf{new},
     \mathsf{close}(\widehat v_1),
     \mathsf{close}(\widehat v_2),
     \mathsf{old}(\widehat e_1),
     \mathsf{old}(\widehat e_4),
     \mathsf{old}(\widehat e_5)),\\
R_f
&=&(\mathsf{new},\mathsf{new},\mathsf{new}).
\end{array}
\right]
\]
Here $r$ is the rotation offset, $M$ is the mask, and $R_c,R_f$ are the
core and outward-forest records.

\medskip
\textbf{3. Decoding.}
Start at $\widehat v_1$ and scan $M$ from left to right, consuming the
next record from $R_c$ or $R_f$ as appropriate.

\begin{center}
\scriptsize
\begin{tabular}{ccl@{\qquad}ccl}
$t$ & record & reconstructed step & $t$ & record & reconstructed step \\ \hline
1  & $\mathsf C\!:\!\mathsf{new}$
   & $\widehat v_1\xrightarrow{\widehat e_1}\widehat v_2$
   & 7  & $\mathsf C\!:\!\mathsf{close}(\widehat v_2)$
       & $\widehat v_1\xrightarrow{\widehat e_5}\widehat v_2$\\
2  & $\mathsf O\!:\!\mathsf{new}$
   & $\widehat v_2\xrightarrow{\widehat e_2}\widehat v_3$
   & 8  & $\mathsf C\!:\!\mathsf{old}(\widehat e_1)$
       & $\widehat v_2\xrightarrow{\widehat e_1}\widehat v_1$\\
3  & $\mathsf O\!:\!\mathsf{new}$
   & $\widehat v_3\xrightarrow{\widehat e_3}\widehat v_4$
   & 9  & $\mathsf C\!:\!\mathsf{old}(\widehat e_4)$
       & $\widehat v_1\xrightarrow{\widehat e_4}\widehat v_2$\\
4  & $\mathsf I$
   & $\widehat v_4\xrightarrow{\widehat e_3}\widehat v_3$
   & 10 & $\mathsf C\!:\!\mathsf{old}(\widehat e_5)$
       & $\widehat v_2\xrightarrow{\widehat e_5}\widehat v_1$\\
5  & $\mathsf I$
   & $\widehat v_3\xrightarrow{\widehat e_2}\widehat v_2$
   & 11 & $\mathsf O\!:\!\mathsf{new}$
       & $\widehat v_1\xrightarrow{\widehat e_6}\widehat v_5$\\
6  & $\mathsf C\!:\!\mathsf{close}(\widehat v_1)$
   & $\widehat v_2\xrightarrow{\widehat e_4}\widehat v_1$
   & 12 & $\mathsf I$
       & $\widehat v_5\xrightarrow{\widehat e_6}\widehat v_1$
\end{tabular}
\end{center}

Concatenating these steps gives the decoded rotated walk
$\widehat W_{\mathrm{rot}}$. The two $\mathsf{close}$ records are
exactly the $g=2$ first traversals of core edges that arrive at already
exposed core vertices.

\medskip
\textbf{4. Undo the rotation and canonicalize.}
Undoing $r=1$ moves Step~12 to the front and makes $\widehat v_5$ the
starting vertex. The first-encounter canonical relabeling is
\[
(\widehat v_5,\widehat v_1,\widehat v_2,\widehat v_3,\widehat v_4)
\mapsto
(v_1,v_2,v_3,v_4,v_5),
\]
\[
(\widehat e_6,\widehat e_1,\widehat e_2,\widehat e_3,\widehat e_4,\widehat e_5)
\mapsto
(e_1,e_2,e_3,e_4,e_5,e_6).
\]
This recovers exactly the canonical walk from Step~1. The three core
edges contribute the factor $x^{e_c}=x^3$, and the three
$\mathsf{new}$ forest records contribute another factor $x^3$. The total weight is
\[
    x^3\cdot x^3=x^6=x^{e(W)}.
\]
\end{minipage}}
\caption{A complete encoding--decoding example.}
\label{fig:encoding-example}
\end{figure}

\subsection{Using the cycle-rank penalty}
Let
\[
    C_3=18,
    \qquad
    C_4=\log16.
\]

\begin{claim}\label{clm:positive-rank-sum}
For every integer $\ell\ge2$, every $x\ge\ell$, and every $d\ge16^\ell$,
\[
    \sum_{g\ge1}d^{-g}A_{\ell,g}(x)
    \le\frac23(18x)^\ell.
\]
\end{claim}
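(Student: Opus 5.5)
The plan is to bound the sum term by term, using the explicit bounds \eqref{eq:unicyclic} for $g=1$ and \eqref{eq:Apg-core} for $g\ge2$. Since $d\ge16^\ell$, we have $d^{-g}\le16^{-\ell g}$. After dividing by $(18x)^\ell$, it suffices to show
\[
    2\ell^2\,\frac{4^\ell}{16^\ell}
    +\sum_{g\ge2}2\ell^{g+1}\,\frac{[2(g+1)]^{2\ell}}{16^{\ell g}}
    \le\frac23 .
\]
The first summand is the unicyclic part and the sum is the branching part. This is now a purely numerical inequality in $\ell$ and $g$.

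\textbf{The case $g=1$.} The term equals $2\ell^2 4^{-\ell}$. This is $1/2$ at $\ell=2$ and $18/64<1/2$ at $\ell=3$. It decreases for $\ell\ge3$, because the ratio of consecutive terms is $(1+1/\ell)^2/4<1$. So the unicyclic contribution is at most $1/2$ for every $\ell\ge2$.

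\textbf{The case $g\ge2$: restricting the range of $g$.} The key structural input restricts which $g$ can occur, and I expect this to be the main obstacle. A crude estimate at $\ell=2$, $g=2$ gives roughly $0.32$, which together with $1/2$ already exceeds $2/3$. So I would use Claim~\ref{clm:core-structure} more carefully. The support is loopless, so the core has $v_c\ge2$ vertices. Combined with $e_c\le\ell$, this gives $g=e_c-v_c+1\le\ell-1$. Hence $g\ge2$ forces $\ell\ge3$, and the sum over $g$ is finite, running over $2\le g\le\ell-1$.

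\textbf{The case $g\ge2$: estimating the sum.} Write the general term as
\[
    2\ell\bigl(\ell\,16^{-\ell}\bigr)^{g}\,(2(g+1))^{2\ell}.
\]
The ratio of the $(g+1)$-st term to the $g$-th is
\[
    \ell\,16^{-\ell}\bigl((g+2)/(g+1)\bigr)^{2\ell}
    \le\ell\,(9/4)^\ell\,16^{-\ell},
\]
which is tiny for $\ell\ge3$, say below $1/100$. So the sum is at most about $1.02$ times its $g=2$ term, $2\ell^3(36/256)^\ell$. At $\ell=3$ this is $54(9/64)^3\approx0.15$. It decreases in $\ell$, since $((\ell+1)/\ell)^3\cdot9/64<1$. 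Hence the branching contribution is at most $1/6$.

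Adding the two cases gives $1/2+1/6=2/3$, which proves the claim. The remaining work is to check the numerical constants carefully at the boundary values $\ell=2$ and $\ell=3$.
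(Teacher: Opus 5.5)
Your proposal is correct and follows essentially the same argument as the paper. The $g=1$ term is bounded by $2\ell^2 4^{-\ell}\le\tfrac12$, the loopless core forces $2\le g\le\ell-1$ (hence $\ell\ge3$), and a geometric series in $g$ anchored at the $g=2$ term $2\ell^3(9/64)^\ell\le 54(9/64)^3$ stays below $\tfrac16$; your ratio bound $\ell(9/4)^\ell16^{-\ell}$ is weaker than the $\ell\,9^{-\ell}$ that follows from $(g+2)/(g+1)\le 4/3$, but it still suffices.
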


\begin{proof}
For $g=1$, \eqref{eq:unicyclic} gives
\[
    \frac{d^{-1}A_{\ell,1}(x)}{(18x)^\ell}
    \le\frac{2\ell^2}{4^\ell}
    \le\frac12.
\]

For $g\ge2$, the loopless core has at least two vertices, and hence
$g\le e_c-1\le\ell-1$. Define
\[
    a_g:=\frac{4(g+1)^2}{16^g}.
\]
By \eqref{eq:Apg-core},
\begin{equation}\label{eq:normalized-rank-g}
    \frac{d^{-g}A_{\ell,g}(x)}{(18x)^\ell}
    \le2\ell^{g+1}a_g^\ell.
\end{equation}
Moreover,
\[
    \frac{a_{g+1}}{a_g}
    =\frac1{16}\left(\frac{g+2}{g+1}\right)^2
    \le\frac19
    \qquad(g\ge2),
\]
so $a_g\le(9/64)9^{-(g-2)}$. Since $\ell\ge3$ whenever $g\ge2$,
\begin{align*}
    \sum_{g=2}^{\ell-1}2\ell^{g+1}a_g^\ell
    &\le
    2\ell^3\left(\frac9{64}\right)^\ell
    \sum_{k\ge0}\left(\frac{\ell}{9^\ell}\right)^k \\
    &\le
    \frac{54(9/64)^3}{1-1/243}
    <\frac16.
\end{align*}
Here we used that $2\ell^3(9/64)^\ell$ is decreasing for $\ell\ge3$ and
that $\ell/9^\ell\le1/243$. Combining the two ranges proves the claim.
\end{proof}

Under the condition $\ell\le\log d/C_4$, we have $d\ge16^\ell$. Claim~\ref{clm:positive-rank-sum} and the tree bound \eqref{eq:tree-count} give
\[
\begin{aligned}
    \sum_{W\in\cW_\ell}d^{-g(W)}x^{e(W)}
    &\le(8x)^\ell+\frac23(18x)^\ell\\
    &\le(18x)^\ell,
\end{aligned}
\]
since $(8/18)^\ell\le(4/9)^2<1/3$ for $\ell\ge2$. This proves Lemma~\ref{lem:walk-count}.

\section{Controlling lower distortion}\label{sec:shrinkage}

Sections~\ref{sec:shrinkage}--\ref{sec:expansion-recursion} prove
Proposition~\ref{prop:mean-to-moments}. Recall that
\[
    Y=(-\lambda_{\min}(E))_+,
    \qquad
    Z=(\lambda_{\max}(E))_+,
    \qquad
    R:=\lVert E\rVert_{\op}=\max\{Y,Z\},
    \qquad
    \mu:=\E R.
\]
We begin with the lower distortion \(Y\). Its layerwise contributions
are independent and centered, and the positive semidefiniteness of each
layer matrix gives the one-sided bound needed for Bousquet's
inequality. \space
The resulting bound depends on the mean only through $\mu$.

\begin{lemma}[Lower distortion moments]\label{lem:shrinkage}
For every $q\ge2$,
\begin{equation}\label{eq:shrinkage-moment}
    \lVert Y\rVert_q
    \le \E Y
      +O\!\left(
          \sqrt{\frac{q}{m}}
          +\sqrt{\frac{q\,\E Y}{s}}
          +\frac{q}{s}
        \right).
\end{equation}
Consequently,
\begin{equation}\label{eq:shrinkage-simplified}
    \lVert Y\rVert_q
    =O\!\left(
        \mu+\sqrt{\frac qm}+\frac qs
    \right).
\end{equation}
\end{lemma}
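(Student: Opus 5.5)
The plan is to apply Bousquet's inequality to the supremum of the layerwise empirical process introduced in the overview:
\[
    \Sigma:=\sup_{\lVert v\rVert_2=1}\sum_{\gamma=1}^s L_{\gamma,v},
    \qquad
    Y=(\Sigma)_+ .
\]
The summands are indexed by layers: the random elements $X_\gamma=(h_\gamma,\sigma_{\gamma\cdot})$ are independent across $\gamma$, and for each $v$ the function $X_\gamma\mapsto L_{\gamma,v}$ is centered because $\E T_\gamma=I_d$. The map $v\mapsto L_{\gamma,v}$ is continuous on the sphere. So the supremum equals the supremum over a countable dense subset, which settles measurability and puts us in the setting of Bousquet's inequality. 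That inequality needs only a one-sided bound $f\le b$, and we have $L_{\gamma,v}\le 1/s$ because $T_\gamma\succeq0$. No lower bound is required, which is exactly why this route works for $Y$ and not for $Z$.

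The variance computation comes next. Fix a unit $v$ and write $w_i=\langle u_i,v\rangle$, so that $\sum_i w_i^2=1$. Then
\[
    v^\top T_\gamma v=\sum_i w_i^2+\sum_{i\ne j}\sigma_{\gamma i}\sigma_{\gamma j}w_iw_j\one\{h_\gamma(i)=h_\gamma(j)\}.
\]
The first term equals $1$, so $sL_{\gamma,v}$ is minus the off-diagonal chaos. Distinct unordered pairs are uncorrelated because of the signs, so
\[
    \E (sL_{\gamma,v})^2=2\sum_{i\ne j}w_i^2w_j^2\,\Pp\{h_\gamma(i)=h_\gamma(j)\}\le \frac{2}{B}.
\]
Hence $\sup_v\sum_\gamma \E L_{\gamma,v}^2\le s\cdot 2/(s^2B)=2/m$.

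Bousquet's inequality with $b=1/s$ and $\sigma^2=2/m$ gives, for all $t\ge0$,
\[
    \Pp\Bigl\{\Sigma\ge \E\Sigma+\sqrt{2t\bigl(\sigma^2+2b\,\E\Sigma\bigr)}+\tfrac{bt}{3}\Bigr\}\le e^{-t}.
\]
Since $\Sigma\ge\sum_\gamma L_{\gamma,v_0}$ for any fixed $v_0$, we get $\E\Sigma\ge0$. Also $\E\Sigma\le\E Y$, because $\Sigma\le(\Sigma)_+$. Integrating the tail, $\lVert(\Sigma-\E\Sigma)_+\rVert_q=O\bigl(\sqrt{q\sigma^2}+\sqrt{qb\,\E\Sigma}+qb\bigr)$. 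Then
\[
    \lVert Y\rVert_q\le \E\Sigma+\lVert(\Sigma-\E\Sigma)_+\rVert_q .
\]
This uses $Y=(\Sigma)_+\le \E\Sigma+(\Sigma-\E\Sigma)_+$, valid since $\E\Sigma\ge0$. Replacing $\E\Sigma$ by the larger $\E Y$ yields \eqref{eq:shrinkage-moment}.

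For \eqref{eq:shrinkage-simplified}, note $\E Y\le\E R=\mu$ and apply the AM--GM bound $2\sqrt{\mu q/s}\le\mu+q/s$. The only delicate points are the conversion of Bousquet's tail into an $L_q$ bound, which is routine integration with constants absorbed into $O(\cdot)$, and the countable-class reduction. I expect neither to be a genuine obstacle. The real content is the one-sided bound $L_{\gamma,v}\le1/s$ together with the variance estimate $2/m$.
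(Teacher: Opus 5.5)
Your proof is correct and follows the paper's argument. Both apply Bousquet's inequality to the layerwise process with the one-sided bound $L_{\gamma,v}\le 1/s$ and weak variance $2/m$, then convert the tail to moments and finish with $\E Y\le\mu$ and AM--GM. The only difference is cosmetic: the paper adjoins an auxiliary zero index so that the supremum is $Y$ itself, matching its stated form of Bousquet, which assumes $\mathcal Z\ge 0$. You instead apply Bousquet to $\Sigma$ and pass to $Y=(\Sigma)_+$ using $\E\Sigma\ge 0$.
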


\begin{proof}
Let $S_\gamma\in\R^{B\times n}$ be the block of $S$ corresponding to
layer $\gamma$, and let
\begin{equation*}
    T_\gamma=U^\top S_\gamma^\top S_\gamma U.
\end{equation*}

Then
\begin{equation*}
    \Id+E=\frac1s\sum_{\gamma=1}^s T_\gamma,
    \qquad
    \E T_\gamma=\Id,
    \qquad
    T_\gamma\succeq0.
\end{equation*}

\paragraph{Step 1: express lower distortion as a supremum of layerwise deviations.}
For \(\lVert x\rVert_2=1\), define
\begin{equation*}
    L_{\gamma,x}
    :=
    \frac{1}{s}\bigl(1-x^\top T_\gamma x\bigr).
\end{equation*}
Since
\[
    \Id+E=\frac{1}{s}\sum_{\gamma=1}^s T_\gamma,
\]
we have, for every \(\lVert x\rVert_2=1\),
\[
    -x^\top E x
    =
    \sum_{\gamma=1}^s L_{\gamma,x}.
\]

Introduce an auxiliary index \(\ast\) and set
\[
    L_{\gamma,\ast}:=0
    \qquad (\gamma\in[s]).
\]
With
\[
    \mathcal X
    :=
    \bigl\{x\in\mathbb R^d:\lVert x\rVert_2=1\bigr\}
    \cup\{\ast\},
\]
we may write
\begin{equation}\label{eq:shrinkage-supremum}
    Y
    =
    \sup_{x\in\mathcal X}
    \sum_{\gamma=1}^s L_{\gamma,x}.
\end{equation}
The same augmented index set \(\mathcal X\) will be used in the
weak-variance calculation in Step~2 and in the concentration argument
in Step~3.

For each fixed \(x\in\mathcal X\), the random variables
\(L_{\gamma,x}\) are independent across \(\gamma\). They are centered: this is immediate when \(x=\ast\), and for
\(\lVert x\rVert_2=1\),
\[
    \E L_{\gamma,x}
    =
    \frac{1}{s}\bigl(1-x^\top \E T_\gamma x\bigr)
    =
    0
\]
because \(\E T_\gamma=\Id\). Moreover, when
\(\lVert x\rVert_2=1\), the positivity \(T_\gamma\succeq0\) gives
\(x^\top T_\gamma x\ge0\), and hence
\begin{equation*}
    L_{\gamma,x}\le \frac{1}{s}.
\end{equation*}
Thus the indexed deviations are centered and bounded above by \(1/s\), as required for
the one-sided concentration argument in Step~3.

\paragraph{Step 2: bound the variance in each fixed direction.}
Fix \(x\) with \(\lVert x\rVert_2=1\), and write
\[
    a_i:=\langle u_i,x\rangle.
\]
The diagonal terms in \(T_\gamma\) sum to
\(\sum_i u_i u_i^\top=\Id\). Hence
\[
    x^\top(T_\gamma-\Id)x
    =
    \sum_{i\ne j}
    \mathbf 1\{h_\gamma(i)=h_\gamma(j)\}
    \sigma_{\gamma i}\sigma_{\gamma j}a_i a_j
    =
    2\sum_{i<j}
    \mathbf 1\{h_\gamma(i)=h_\gamma(j)\}
    \sigma_{\gamma i}\sigma_{\gamma j}a_i a_j.
\]
This random variable has expectation 0. Moreover, after squaring and averaging
over the Rademacher signs, all mixed terms vanish unless they correspond
to the same unordered pair \(\{i,j\}\). Since two distinct coordinates
collide with probability \(1/B\),
\begin{align*}
    \E\bigl[x^\top(T_\gamma-\Id)x\bigr]^2
    &=
    \frac{4}{B}\sum_{i<j}a_i^2a_j^2 \\
    &\le
    \frac{2}{B}\left(\sum_i a_i^2\right)^2
    =
    \frac{2}{B}.
\end{align*}
Here the last equality follows from
\[
    \sum_i a_i^2
    =
    \sum_i\langle u_i,x\rangle^2
    =
    x^\top\left(\sum_i u_i u_i^\top\right)x
    =
    1.
\]

Since
\[
    L_{\gamma,x}
    =
    -\frac{1}{s}x^\top(T_\gamma-\Id)x,
\]
we conclude that
\[
    \E L_{\gamma,x}^2
    \le
    \frac{2}{s^2B}.
\]
Summing over the \(s\) layers and taking the supremum over
\(\lVert x\rVert_2=1\) gives
\begin{equation*}
    \sup_{\lVert x\rVert_2=1}
    \sum_{\gamma=1}^s \E L_{\gamma,x}^2
    \le
    \frac{2}{sB}
    =
    \frac{2}{m}.
\end{equation*}
The auxiliary zero index introduced in Step~1 has no contribution to this variance parameter.

\paragraph{Step 3: apply Bousquet's inequality.}
Since the map
\[
    x\longmapsto\sum_{\gamma=1}^s L_{\gamma,x}
\]
is continuous on the unit sphere for every realization of the sketch,
the supremum in \eqref{eq:shrinkage-supremum} is unchanged if the sphere
is replaced by a fixed countable dense subset. Therefore the countability
hypothesis in Lemma~\ref{lem:bousquet} is satisfied.

Regard the complete randomness in layer $\gamma$ as the $\gamma$th
observation. These observations are independent and identically distributed,
and each $L_{\gamma,x}$ is obtained by applying the same measurable function
to the $\gamma$th observation.
Bousquet's inequality, stated in Lemma~\ref{lem:bousquet}, applied with upper bound $1/s$ and variance $2/m$, gives for every $t\ge0$,
\begin{equation*}
    \Pp\!\left\{
        Y>\E Y
          +2\sqrt{\left(\frac1m+\frac{\E Y}{s}\right)t}
          +\frac{t}{3s}
    \right\}
    \le e^{-t}.
\end{equation*}
Using $\sqrt{a+b}\le\sqrt a+\sqrt b$ and the tail-to-moment conversion in Lemma~\ref{lem:tail-to-moment}, we obtain for every $q\ge2$,
\[
    \lVert Y\rVert_q
    \le \E Y
      +c\left(
          \sqrt{\frac{q}{m}}
          +\sqrt{\frac{q\,\E Y}{s}}
          +\frac{q}{s}
        \right),
\]
which is \eqref{eq:shrinkage-moment}.

Finally, $\E Y\le\mu$. Substituting this into
\eqref{eq:shrinkage-moment} and using
\[
    2\sqrt{\frac{\mu q}{s}}
    \le \mu+\frac qs
\]
proves \eqref{eq:shrinkage-simplified}.
\end{proof}

\section{Controlling upper distortion}
\label{sec:expansion-sensitivity}

We now turn to the upper distortion
\[
    Z=(\lambda_{\max}(E))_+.
\]
Unlike the lower distortion, the contribution of a single layer to
\(Z\) has no useful deterministic upper bound. We therefore resample
one individual random choice at a time and control the resulting
one-sided variance proxy using polynomial Efron--Stein.

\subsection{Coordinate resampling}
\label{sec:resampling}

The independent random variables are
\[
    X_\alpha=(h_\gamma(i),\sigma_{\gamma i}),
    \qquad
    \alpha=(\gamma,i)\in[s]\times[n].
\]
Let \(X_\alpha'\) be an independent copy of \(X_\alpha\), and let
\(E^{(\alpha)}\) and
\[
    Z^{(\alpha)}
    =
    \bigl(\lambda_{\max}(E^{(\alpha)})\bigr)_+
\]
denote the error matrix and upper distortion after replacing
\(X_\alpha\) by \(X_\alpha'\). Writing \(\E_\alpha'\) for expectation
over the resampled copy conditional on the original inputs, define
\[
    \mathcal V^+
    =
    \sum_\alpha
    \E_\alpha'\!\left[
        \bigl(Z-Z^{(\alpha)}\bigr)_+^2
    \right].
\]
Polynomial Efron--Stein reduces the upper-distortion bound to controlling
\(\mathcal V^+\). We will prove
\[
    \mathcal V^+
    =
    O\!\left(
        \mathcal K+\mathcal Q+\frac{1+Z}{m}
    \right),
\]
where \(\mathcal K\) and \(\mathcal Q\) are introduced in
Section~\ref{sec:old-bucket}.

If \(Z=0\), then
\(Z^{(\alpha)}\ge0\) for every \(\alpha\), and hence
\(\mathcal V^+=0\). We may therefore assume \(Z>0\) and choose a unit
top eigenvector \(x\) of \(E\) satisfying
\[
    x^\top Ex=Z.
\]

For a unit vector \(v\), a layer \(\gamma\), and a coordinate \(j\),
define
\[
    t_{\gamma j}(v)
    =
    \frac{\sigma_{\gamma j}\langle u_j,v\rangle}{\sqrt{s}}.
\]
For a layer--bucket pair \(\beta=(\gamma,b)\), define its bucket load by
\[
    A_\beta(v)
    =
    \sum_{j\in I_\beta}t_{\gamma j}(v).
\]
Thus \(A_\beta(v)\) is the total signed contribution, in direction
\(v\), of the coordinates assigned to bucket \(\beta\). In the
resampling calculation, \(v=x\) is the fixed top eigenvector chosen
above, and we suppress the argument \(x\). The bucket loads give a scalar representation of the Rayleigh quotient:
\[
    v^\top(\Id+E)v
    =
    \lVert\Phi Uv\rVert_2^2
    =
    \sum_\beta A_\beta(v)^2.
\]
Thus resampling $X_\alpha$ changes the Rayleigh
quotient only through the bucket loads affected by removing and
reinserting that coordinate.

Fix $\alpha=(\gamma,i)$. Resampling $X_\alpha$ changes only two bucket loads:
the load of the original bucket, from which the coordinate is removed,
and the load of the resampled bucket, to which it is added. Evaluating
this change in the fixed direction \(x\) will split
\(\mathcal V^+\) into the corresponding original- and
resampled-bucket contributions. For the calculation below, abbreviate
\[
    h_i=h_\gamma(i),
    \qquad
    h_i'=h_\gamma'(i),
    \qquad
    t_i=t_{\gamma i},
    \qquad
    t_i'
    =\frac{\sigma_{\gamma i}'\langle u_i,x\rangle}{\sqrt{s}},
\]
and, for $b\in[B]$, define the leave-one-out load
\[
    A_{\gamma,b}^{(-i)}
    :=A_{(\gamma,b)}-t_i\one\{h_i=b\}.
\]
Thus $A_{\gamma,b}^{(-i)}$ is the load in bucket $b$ after coordinate $i$ has been removed.
Before resampling, the bucket loads in layer $\gamma$ are
\[
    A_{\gamma,b}^{(-i)}+t_i\one\{h_i=b\},
\]
whereas after resampling they are
\[
    A_{\gamma,b}^{(-i)}+t_i'\one\{h_i'=b\}.
\]
Only layer $\gamma$ changes, and $(t_i')^2=t_i^2$. Hence
\begin{align*}
    x^\top(E-E^{(\alpha)})x
    &=\sum_{b=1}^B
      \left[
        \bigl(A_{\gamma,b}^{(-i)}+t_i\one\{h_i=b\}\bigr)^2
        -\bigl(A_{\gamma,b}^{(-i)}+t_i'\one\{h_i'=b\}\bigr)^2
      \right]\\
    &=2\bigl(t_i A_{\gamma,h_i}^{(-i)}-t_i'A_{\gamma,h_i'}^{(-i)}\bigr).
\end{align*}
By the Rayleigh--Ritz variational principle,
\[
    Z^{(\alpha)}\ge x^\top E^{(\alpha)}x,
\]
so
\begin{equation*}
    Z-Z^{(\alpha)}
    \le 2\bigl(t_i A_{\gamma,h_i}^{(-i)}-t_i'A_{\gamma,h_i'}^{(-i)}\bigr).
\end{equation*}

For $z\in\R$, write $(z)_-=\max\{-z,0\}$. Since
\[
    (u-v)_+^2\le2(u)_+^2+2(v)_-^2,
\]
we obtain
\begin{equation}\label{eq:variance-split}
    \mathcal V^+
    \le8J_{\mathrm{original}}+8J_{\mathrm{resampled}},
\end{equation}
where
\begin{equation*}
    J_{\mathrm{original}}
    =\sum_\alpha\bigl(t_i A_{\gamma,h_i}^{(-i)}\bigr)_+^2,
    \qquad
    J_{\mathrm{resampled}}
    =\sum_\alpha
      \E_\alpha'\!\left[\bigl(t_i'A_{\gamma,h_i'}^{(-i)}\bigr)_-^2\right].
\end{equation*}
We now bound these two terms separately.

\subsection{The resampled bucket}
\label{sec:fresh-bucket}

Fix $\alpha=(\gamma,i)$ and condition on the current sketch. The leave-one-out
loads $A_{\gamma,1}^{(-i)},\ldots,A_{\gamma,B}^{(-i)}$ are then fixed,
while $h_i'$ is uniform on $[B]$ and
$\sigma_{\gamma i}'$ is an independent Rademacher sign. Averaging first over
the resampled sign and then over the resampled bucket gives
\begin{equation}\label{eq:fresh-average}
    \E_\alpha'\!\left[\bigl(t_i'A_{\gamma,h_i'}^{(-i)}\bigr)_-^2\right]
    =\frac{\langle u_i,x\rangle^2}{2sB}
      \sum_{b=1}^B \bigl(A_{\gamma,b}^{(-i)}\bigr)^2
    =\frac{\langle u_i,x\rangle^2}{2sB}
      \sum_{b=1}^B
      \left(
        A_{(\gamma,b)}
        -t_{\gamma i}\one\{i\in I_{(\gamma,b)}\}
      \right)^2.
\end{equation}
Note that the factor $1/B$ comes from averaging over the resampled bucket; it would be absent if only the sign were resampled.

For each fixed layer $\gamma$, the inequality $(u-v)^2\le2u^2+2v^2$ gives
\begin{align}
    &\sum_{i=1}^n\langle u_i,x\rangle^2
      \sum_{b=1}^B
      \left(
        A_{(\gamma,b)}
        -t_{\gamma i}\one\{i\in I_{(\gamma,b)}\}
      \right)^2
      \nonumber\\
    &\qquad\le
      2\sum_{i=1}^n\langle u_i,x\rangle^2
      \sum_{b=1}^B A_{(\gamma,b)}^2
      +2\sum_{i=1}^n\langle u_i,x\rangle^2t_{\gamma i}^2
      \nonumber\\
    &\qquad\le
      2\sum_{b=1}^B A_{(\gamma,b)}^2+\frac2s.
    \label{eq:fresh-sum-bound}
\end{align}
Here we used
\[
    \sum_{i=1}^n\langle u_i,x\rangle^2=1,
    \qquad
    \sum_{i=1}^n\langle u_i,x\rangle^4
    \le
    \left(\sum_{i=1}^n\langle u_i,x\rangle^2\right)^2
    =1.
\]
Summing \eqref{eq:fresh-average} over $\gamma$ and $i$, applying
\eqref{eq:fresh-sum-bound}, and using $m=sB$, we obtain
\begin{align}
    J_{\mathrm{resampled}}
    &\le
      \frac1m\left(
        \sum_\beta A_\beta^2+1
      \right)\nonumber\\
    &=\frac{\lVert\Phi Ux\rVert_2^2+1}{m}
      =\frac{Z+2}{m}.
    \label{eq:fresh-final}
\end{align}

\subsection{The original bucket}
\label{sec:old-bucket}

We next control the original-bucket term $J_{\mathrm{original}}$ bucket by bucket. For a unit vector $v$
and a layer--bucket pair $\beta=(\gamma,b)$, define
\begin{equation*}
    Q_\beta(v)
    =A_\beta(v)^2
      -\sum_{i\in I_\beta}t_{\gamma i}(v)^2,
\end{equation*}
and
\begin{equation*}
    K_\beta(v)
    =\left(\sum_{i\in I_\beta}t_{\gamma i}(v)^2\right)^2
      -\sum_{i\in I_\beta}t_{\gamma i}(v)^4.
\end{equation*}
The quantity $Q_\beta(v)$ is the signed bucket excess in bucket $\beta$: it is
the amount by which the square of the total bucket load exceeds the sum of the
individual squared loads. Thus $Q_\beta(v)$ is the bucket's signed quadratic
contribution to the Rayleigh error. The quantity $K_\beta(v)$ is the
corresponding interaction energy: it squares each pairwise interaction before
summing, so it is nonnegative and does not allow sign cancellation. Indeed,
\[
    Q_\beta(v)
    =2\sum_{\substack{i<j\\i,j\in I_\beta}}
      t_{\gamma i}(v)t_{\gamma j}(v),
    \qquad
    K_\beta(v)
    =2\sum_{\substack{i<j\\i,j\in I_\beta}}
      t_{\gamma i}(v)^2t_{\gamma j}(v)^2.
\]
Moreover,
\begin{equation}\label{eq:bucket-basic}
    0\le\sum_{i\in I_\beta}t_{\gamma i}(v)^2\le\frac1s,
    \qquad
    \sum_\beta\sum_{i\in I_\beta}t_{\gamma i}(v)^2=1,
    \qquad
    v^\top Ev=\sum_\beta Q_\beta(v).
\end{equation}
The first two statements follow from the Parseval identity
$\sum_i\langle u_i,v\rangle^2=1$, and the last follows from
\[
    \sum_\beta A_\beta(v)^2
    =\lVert\Phi Uv\rVert_2^2
    =1+v^\top Ev.
\]

We now specialize these quantities to the fixed maximizing direction $x$ and
again suppress the argument. Since the leave-one-out load in coordinate $i$'s
current bucket is
\[
    A_{\gamma,h_i}^{(-i)}=A_{(\gamma,h_\gamma(i))}-t_{\gamma i},
\]
the original-bucket contribution is exactly
\begin{equation*}
    J_{\mathrm{original}}
    =\sum_\beta\sum_{i\in I_\beta}
      \bigl(t_{\gamma i}(A_\beta-t_{\gamma i})\bigr)_+^2.
\end{equation*}
The signed bucket excess $Q_\beta$ alone cannot control this sum, because
positive and negative coordinate influences may cancel. The algebraic lemma~\ref{lem:bucket-influence} therefore can be used to combine the interaction energy $K_\beta$ with a truncated bucket excess. Define
\begin{equation*}
    \psi(z)
    =\min\left\{(z)_+^2,\frac{(z)_+}{s}\right\}.
\end{equation*}
The two properties needed in Section~\ref{sec:collision-truncation} are
\begin{equation}\label{eq:psi-properties}
    \psi(0)=0,
    \qquad
    0\le\psi(z)\le z^2,
    \qquad
    |\psi(z)-\psi(z')|\le\frac2s|z-z'|.
\end{equation}
The quadratic bound will control the conditional mean of the truncated bucket excess, while the Lipschitz bound will allow Rademacher contraction.

Define the global quantities
\begin{equation*}
    \mathcal{K}
    =\sup_{\lVert v\rVert_2=1}\sum_\beta K_\beta(v),
    \qquad
    \mathcal{Q}
    =\sup_{\lVert v\rVert_2=1}
      \sum_\beta\psi\bigl(Q_\beta(v)\bigr).
\end{equation*}

\begin{lemma}[Upper-distortion variance proxy]\label{lem:expansion-variance-proxy}
The variance proxy satisfies
\begin{equation}\label{eq:structural-variance-proxy}
    \mathcal V^+
    =O\!\left(
        \mathcal{K}
        +\mathcal{Q}
        +\frac{1+Z}{m}
    \right).
\end{equation}
\end{lemma}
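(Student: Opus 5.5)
We will combine the split \eqref{eq:variance-split} with the resampled-bucket bound \eqref{eq:fresh-final}. Since $J_{\mathrm{resampled}}\le (Z+2)/m = O((1+Z)/m)$, it remains to show $J_{\mathrm{original}}=O(\mathcal K+\mathcal Q)$. The case $Z=0$ is trivial, because then $\mathcal V^+=0$, so we assume $Z>0$ and work with the fixed top eigenvector $x$.

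For $J_{\mathrm{original}}$ we proceed bucket by bucket. We have
\[
J_{\mathrm{original}}=\sum_\beta J_\beta(x), \qquad J_\beta(v)=\sum_{i\in I_\beta}\bigl(t_{\gamma i}(v)(A_\beta(v)-t_{\gamma i}(v))\bigr)_+^2 .
\]
The key input is the algebraic bucket inequality, Lemma~\ref{lem:bucket-influence}, stated in the appendix. We will apply it to the finite list of reals $(t_{\gamma i}(x))_{i\in I_\beta}$. Its hypothesis $\sum_{i\in I_\beta}t_{\gamma i}^2\le 1/s$ holds by \eqref{eq:bucket-basic}. It yields the pointwise bound
\[
J_\beta(x)\le C\bigl(K_\beta(x)+\psi(Q_\beta(x))\bigr).
\]
Summing over $\beta$ and bounding the sums at $v=x$ by the suprema over unit $v$ gives $J_{\mathrm{original}}\le C(\mathcal K+\mathcal Q)$. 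Substituting this into \eqref{eq:variance-split} then gives \eqref{eq:structural-variance-proxy}.

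The main obstacle is the bucket inequality itself, in case it must be proved here rather than quoted. I would write $w_i=t_{\gamma i}$, $A=\sum_i w_i$, $S_2=\sum_i w_i^2\le 1/s$, and $c_i=w_i(A-w_i)$. Then $Q=\sum_i c_i$ and $K=S_2^2-\sum_i w_i^4$.

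First, the negative-correlation part: by Cauchy--Schwarz, $\sum_i c_i^2\le\sum_i w_i^2(A-w_i)^2$. If $A^2\lesssim S_2$, expanding $(A-w_i)^2$ and using $\sum_{j\ne i}w_j^2\le S_2$ gives terms bounded by $K+\psi$-type quantities.

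More carefully, I would split on whether one coordinate dominates the bucket, i.e. whether $w_{i^*}^2\ge S_2/2$ for some $i^*$.
\begin{itemize}
\item If no coordinate dominates, then $\sum_i w_i^2\sum_{j\ne i}w_j^2=K$ is comparable to $S_2^2$. The positive parts are then bounded by $w_i^2(A-w_i)^2\le 2w_i^2A^2+2w_i^4$. This gives $J_\beta\lesssim S_2A^2+K$. Here $S_2A^2=S_2(Q+S_2)$, and $S_2 Q$ is at most $\min(Q^2,Q/s)$ up to the term $S_2^2\lesssim K$.
\item If $i^*$ dominates, the $i^*$ term is $(w_{i^*}r)_+^2$ with $r=\sum_{j\ne i^*}w_j$. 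We then have $Q=2w_{i^*}r+(r^2-\sum_{j\ne i^*}w_j^2)$, where the last bracket is controlled by $K$-type terms via $|r^2-\sum w_j^2|\le\ldots$. This is the delicate part. So $(w_{i^*}r)_+^2\lesssim\psi(Q)+K$, using $w_{i^*}^2\le 1/s$ to get the $Q/s$ branch. The other terms are bounded by $w_i^2(A-w_i)^2$ with $w_i$ small, which gives $O(K)$ after using $|A-w_i|\le|w_{i^*}|+|r'|$.
\end{itemize}
I expect the cross term $r^2-\sum_{j\ne i^*}w_j^2$ to be the spot needing care, since it is not obviously bounded by $K$. To handle it, I would try absorbing it by further splitting on whether $r^2\le 4\sum_{j\ne i^*}w_j^2$; in that case everything is bounded by $w_{i^*}^2\sum_{j\ne i^*}w_j^2\le K$.
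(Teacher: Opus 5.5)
Your proof is correct and follows the paper's argument exactly. You combine \eqref{eq:variance-split} with $J_{\mathrm{resampled}}\le (Z+2)/m$, apply Lemma~\ref{lem:bucket-influence} with $\rho=1/s$ to each bucket to get $J_\beta(x)\le c\,\bigl(K_\beta(x)+\psi(Q_\beta(x))\bigr)$, and bound the sums at the admissible unit vector $x$ by $\mathcal K$ and $\mathcal Q$. Your sketch of the bucket lemma is not needed here, since the paper proves it in the appendix. If you do pursue it, two corrections. First, the cross term only needs the one-sided bound $2w_{i^*}r=Q+\rho_0-r^2\le Q+\rho_0$, where $\rho_0=\sum_{j\ne i^*}w_j^2$. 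Second, the non-dominant terms are $O(K+\psi(Q))$, not $O(K)$, because $\sum_{i\ne i^*}w_i^2(A-w_i)^2$ contains $\rho_0A^2=\rho_0(Q+S_2)$.
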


\begin{proof}
Since $\sum_{i\in I_\beta}t_{\gamma i}^2\le1/s$, the algebraic influence lemma from
Appendix~\ref{app:bucket-influence} gives, for every bucket $\beta$,
\begin{equation}\label{eq:bucket-influence-application}
    \sum_{i\in I_\beta}
      \bigl(t_{\gamma i}(A_\beta-t_{\gamma i})\bigr)_+^2
    \le c \bigl(K_\beta+\psi(Q_\beta)\bigr).
\end{equation}

$\mathcal{K}$ aggregates the nonnegative interaction energies
$K_\beta$, while $\mathcal{Q}$ aggregates the truncated positive bucket
excesses $\psi(Q_\beta)$. Because the selected $x$ is one admissible unit
vector, \eqref{eq:bucket-influence-application} yields
\begin{equation}\label{eq:J-old-control}
    J_{\mathrm{original}}
    \le  c \bigl(\mathcal{K}+\mathcal{Q}\bigr).
\end{equation}

Combining \eqref{eq:variance-split}, \eqref{eq:fresh-final}, and
\eqref{eq:J-old-control} gives \eqref{eq:structural-variance-proxy}.
\end{proof}

\section{Interaction energy and truncated bucket excess bounds}
\label{sec:collision-truncation}

Recall the original-bucket quantities
\[
    \mathcal K
    =
    \sup_{\lVert v\rVert_2=1}\sum_\beta K_\beta(v),
    \qquad
    \mathcal Q
    =
    \sup_{\lVert v\rVert_2=1}
    \sum_\beta\psi\bigl(Q_\beta(v)\bigr).
\]
We first bound the global interaction energy \(\mathcal K\) using matrix
Bernstein. We then control the global truncated bucket excess \(\mathcal Q\) by
conditioning on the hashes and applying Rademacher contraction.

\subsection{Interaction-energy moments}

\begin{lemma}[Interaction energy bound]\label{lem:collision-bound}
For every $q\ge2\log(2d^2)$,
\begin{equation*}
    \lVert\mathcal{K}\rVert_q
    =O\!\left(\frac1m+\frac{q}{s^2}\right).
\end{equation*}
\end{lemma}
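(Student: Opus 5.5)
The plan is to linearize the quartic dependence on $v$ by lifting to $d^2$ dimensions, so that $\mathcal K$ becomes the top of a quadratic form in a sum of independent layer matrices. This is where the dimension $2d^2$ in the hypothesis $q\ge2\log(2d^2)$ comes from. I have not checked that the bounded-increment step below goes through; it is the part I am least sure of.

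\textbf{Step 1: lifting.} Write $a_i(v)=\langle u_i,v\rangle$, and put $w_i=\operatorname{vec}(u_iu_i^\top)\in\R^{d^2}$ and $V=\operatorname{vec}(vv^\top)$. Then $a_i(v)^2=\langle w_i,V\rangle$, and $\lVert V\rVert_2=1$ for unit $v$. Since the signs square away,
\[
    \sum_\beta K_\beta(v)
    =\frac{1}{s^2}\sum_{\gamma=1}^s V^\top\Bigl(P_\gamma-D\Bigr)V,
    \qquad
    P_\gamma:=\sum_{b=1}^B c_{\gamma b}c_{\gamma b}^\top,
    \quad
    c_{\gamma b}:=\sum_{i:\,h_\gamma(i)=b}w_i,
    \quad
    D:=\sum_i w_iw_i^\top .
\]
Each $P_\gamma$ is positive semidefinite and depends only on $h_\gamma$, so the $P_\gamma$ are independent across layers. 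Therefore
\[
    \mathcal K\le s^{-2}\sup_{V}V^\top\Bigl(\textstyle\sum_\gamma P_\gamma-sD\Bigr)V,
\]
with the supremum over lifted unit vectors. Note that $D$ is subtracted deterministically; this is what removes the diagonal terms.

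\textbf{Step 2: the mean.} Two distinct coordinates collide with probability $1/B$, so
\[
    \E P_\gamma=D+\frac1B\bigl(ww^\top-D\bigr),
    \qquad
    w=\sum_i w_i=\operatorname{vec}(I_d).
\]
Hence $V^\top\E(\sum_\gamma P_\gamma-sD)V\le\frac sB\langle I_d,vv^\top\rangle^2=\frac sB$. After the factor $s^{-2}$ this gives the $1/m$ term. The mean term therefore comes out right, provided the supremum is kept on the rank-one set $\{\operatorname{vec}(vv^\top)\}$, or on the trace-one PSD cone, rather than on the whole unit sphere of $\R^{d^2}$.

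\textbf{Step 3: fluctuations.} I would apply a matrix Rosenthal/Bernstein moment inequality (from the probability appendix) to the centered independent sum $\sum_\gamma(P_\gamma-\E P_\gamma)$ in dimension $d^2$. For $q\ge2\log(2d^2)$, the dimensional factor $(d^2)^{1/q}$ is $O(1)$. The target is that, in the relevant directions, the per-layer increment $P_\gamma-D$ has norm $O(1)$. This is plausible because, for PSD-cone directions $V\succeq0$ of trace one,
\[
    \sum_b\langle c_{\gamma b},V\rangle^2\le\Bigl(\max_b\langle c_{\gamma b},V\rangle\Bigr)\sum_b\langle c_{\gamma b},V\rangle\le 1.
\]
The variance then also scales like $s\cdot O(1)$, up to a factor $1/B$ from the centering. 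This would yield
\[
    \bigl\lVert\lambda_{\max}\bigl(\textstyle\sum_\gamma P_\gamma-sD\bigr)\bigr\rVert_q
    \lesssim\frac sB+\sqrt{\frac{qs}{B}}+q .
\]
Multiplying by $s^{-2}$ and using $\sqrt{q/(sB)}\cdot s^{-1/2}\le\frac1m+\frac{q}{s^2}$ (by AM--GM) gives $O(1/m+q/s^2)$.

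\textbf{Main obstacle.} The operator norm of $P_\gamma$ over all of $\R^{d^2}$ can be as large as $d$. So the bounded-increment and variance estimates only hold once the quadratic form is restricted to trace-one PSD directions. To handle this, I would first try to replace the supremum over rank-one $V$ by the supremum over the spectraplex $\{V\succeq0,\ \tr V=1\}$. That is still a convex set, and on it the layer increments are bounded by $1$. I would then use a matrix inequality relative to this cone: either by compressing to the operators $X\mapsto\sum_b\langle c_b,\cdot\rangle c_b$ acting on symmetric $d\times d$ matrices, or by applying matrix Chernoff to the PSD layers $\sum_b \langle c_{\gamma b},\cdot\rangle A_{\gamma b}$ with $A_{\gamma b}=\sum_{i:\,h_\gamma(i)=b}u_iu_i^\top\preceq I_d$. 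This last form uses $(v^\top A_bv)^2\le\lVert A_b\rVert\, v^\top A_bv$ together with $\lVert A_b\rVert\le1$. It keeps the dimension at $d$, and the $d^2$ needed for the $\log(2d^2)$ condition would enter only through a union over the two terms.
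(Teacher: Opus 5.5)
Your Steps 1 and 2 are correct. The identity $\sum_\beta K_\beta(v)=s^{-2}\sum_\gamma V^\top(P_\gamma-D)V$ holds, and so does the mean bound $s/B$ on rank-one directions. Step 3, however, is a genuine gap, and it is the one you flag.

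Matrix Bernstein (or Rosenthal) controls $\lambda_{\max}$ over the whole unit sphere of $\R^{d^2}$, and it needs almost-sure operator-norm bounds on the increments. There is no version relative to the spectraplex, so your bound $\sum_b\langle c_{\gamma b},V\rangle^2\le1$ for trace-one PSD $V$ cannot be fed into it.

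On the full space your lift is genuinely bad. Write $A_{\gamma b}=\sum_{i\in I_{(\gamma,b)}}u_iu_i^\top$ and take $V=\operatorname{vec}(I_d)/\sqrt d$. Then $V^\top(P_\gamma-D)V=d^{-1}\sum_b[(\tr A_{\gamma b})^2-\sum_{i\in I_{(\gamma,b)}}\lVert u_i\rVert_2^4]$, which is of order $d/B$ for delocalized $U$ and at least $d-1$ when $B=1$. The mean is already too large as well: $\lambda_{\max}(ww^\top-D)\ge d-1$, so a full-space matrix inequality would give $d/m$ instead of $1/m$.

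Your two fallbacks do not repair this. The superoperator $V\mapsto\sum_b\langle c_{\gamma b},V\rangle c_{\gamma b}$ on symmetric matrices is just $P_\gamma$ again. Matrix Chernoff applied to $\sum_b\langle c_{\gamma b},V\rangle A_{\gamma b}$ for a fixed $V$ says nothing about the supremum over $V$, because the matrix itself depends on $V$. A net over $v$ would instead cost an additive $d/s^2$, which is too large for the later recursion.

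The missing idea is a different lift that agrees with yours on rank-one directions but is dominated by the identity on all of $\R^{d^2}$. Since $(x^\top Ax)^2=\langle x\otimes x,(A\otimes A)(x\otimes x)\rangle$, replace $c_{\gamma b}c_{\gamma b}^\top=\operatorname{vec}(A_{\gamma b})\operatorname{vec}(A_{\gamma b})^\top$ by the Kronecker square $A_{\gamma b}\otimes A_{\gamma b}$, and drop the diagonal pairs. This is the paper's proof.

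With $\xi_{ij}=u_i\otimes u_j$, the paper sets $M_\gamma=\sum_{i\ne j}\one\{h_\gamma(i)=h_\gamma(j)\}\xi_{ij}\xi_{ij}^\top$, so that $\sum_\beta K_\beta(x)=s^{-2}\langle x\otimes x,(\sum_\gamma M_\gamma)(x\otimes x)\rangle$. Each $M_\gamma$ is a sub-sum of the PSD decomposition $\sum_{i,j}\xi_{ij}\xi_{ij}^\top=I_d\otimes I_d=I_{d^2}$. Hence $0\preceq M_\gamma\preceq I_{d^2}$ globally, and $\E M_\gamma\preceq B^{-1}I_{d^2}$ by the collision probability.

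It follows that $\mathcal K\le s^{-2}\lVert\sum_\gamma M_\gamma\rVert_{\op}$. The PSD matrix Bernstein moment bound in dimension $N=d^2$ then gives $O(s/B+q)$ for $q\ge2\log(2d^2)$, i.e.\ $\lVert\mathcal K\rVert_q=O(1/m+q/s^2)$. This is where $d^2$ enters the hypothesis, not through a union over terms. Note also that the diagonal $i=j$ terms are removed inside each summand rather than by subtracting $sD$ afterward; that is what keeps every summand PSD.
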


\begin{proof}
We represent $\mathcal{K}$ through a sum of independent positive semidefinite matrices and apply matrix Bernstein. For $i\ne j$, let
\[
    \xi_{ij}=u_i\otimes u_j\in\R^{d^2},
\]
where $\otimes$ denotes the Kronecker product. Define
\begin{equation*}
    M_\gamma
    =\sum_{i\ne j}
      \one\{h_\gamma(i)=h_\gamma(j)\}
      \xi_{ij}\xi_{ij}^\top.
\end{equation*}
Because
\[
    \sum_{i,j}\xi_{ij}\xi_{ij}^\top
    =
    \left(\sum_i u_i u_i^\top\right)
    \otimes
    \left(\sum_j u_j u_j^\top\right)
    =I_{d^2},
\]
we have
\begin{equation*}
    0\preceq M_\gamma\preceq I_{d^2},
    \qquad
    \E M_\gamma
    =\frac1B\sum_{i\ne j}\xi_{ij}\xi_{ij}^\top
    \preceq\frac1B I_{d^2}.
\end{equation*}
Furthermore,
\begin{equation*}
    \sum_\beta K_\beta(x)
    =\frac1{s^2}
      \left\langle
        x\otimes x,\,
        \left(\sum_{\gamma=1}^sM_\gamma\right)(x\otimes x)
      \right\rangle,
\end{equation*}
so
\begin{equation*}
    \mathcal{K}
    \le\frac1{s^2}
      \left\lVert\sum_{\gamma=1}^sM_\gamma\right\rVert_{\op}.
\end{equation*}
The result follows from the matrix Bernstein moment corollary in Lemma~\ref{lem:matrix-bernstein-moment}.
\end{proof}

\subsection{A Lipschitz reduction of the truncated bucket excess}

We next control $\mathcal{Q}$. Its mean part is controlled by
$\mathcal{K}$, while its centered part is reduced to the underlying
operator-norm process.

\begin{lemma}[Truncated bucket excess reduction]\label{lem:truncated-bucket}
For every $q\ge1$,
\begin{equation*}
    \lVert\mathcal{Q}\rVert_q
    =O\!\left(
        \lVert\mathcal{K}\rVert_q
        +\frac1s\bigl\lVert\lVert E\rVert_{\op}\bigr\rVert_q
    \right).
\end{equation*}
\end{lemma}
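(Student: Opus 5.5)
Condition on the hash functions $h=(h_\gamma)_\gamma$, so only the signs remain random. For fixed $h$, the quantity $Q_\beta(v)$ is a homogeneous Rademacher chaos of order two in the signs $\{\sigma_{\gamma i}:i\in I_\beta\}$. Distinct buckets use disjoint sets of signs. I will split
\[
\mathcal Q\le \sup_{\|v\|_2=1}\sum_\beta \E_\sigma\psi\bigl(Q_\beta(v)\bigr)
+\sup_{\|v\|_2=1}\sum_\beta\Bigl(\psi\bigl(Q_\beta(v)\bigr)-\E_\sigma\psi\bigl(Q_\beta(v)\bigr)\Bigr),
\]
and bound the two pieces separately. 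The conditional-mean piece uses $\psi(z)\le z^2$ from \eqref{eq:psi-properties}. For fixed $h$, $\E_\sigma Q_\beta(v)^2=4\sum_{i<j\in I_\beta}t_{\gamma i}^2t_{\gamma j}^2=2K_\beta(v)$, since $t_{\gamma i}^2$ does not depend on the signs. Hence the first supremum is at most $2\sup_v\sum_\beta K_\beta(v)=2\mathcal K$. This is pointwise in $h$, so its $L_q$ norm is $O(\lVert\mathcal K\rVert_q)$.

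For the fluctuation piece I would use symmetrization in the bucket-block structure. Introduce an independent copy $\sigma'$ of the signs and independent Rademacher variables $\varepsilon_\beta$, one per bucket. Conditionally on $h$, the bucket blocks $(\sigma_{\gamma i})_{i\in I_\beta}$ are independent across $\beta$. The standard symmetrization inequality for $L_q$ norms of suprema of sums of independent (vector-indexed) centered processes then bounds the $L_q$ norm of the centered supremum by $2\bigl\lVert\sup_v\sum_\beta\varepsilon_\beta\psi(Q_\beta(v))\bigr\rVert_q$. Since $\psi(0)=0$ and $\psi$ is $(2/s)$-Lipschitz, the Ledoux--Talagrand contraction principle, in its moment/convex-function form applied conditionally on $\sigma,h$, replaces $\psi(Q_\beta(v))$ by $\frac2s Q_\beta(v)$. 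This gives
\[
\Bigl\lVert\sup_v\sum_\beta\varepsilon_\beta\psi(Q_\beta(v))\Bigr\rVert_q\le \frac{4}{s}\Bigl\lVert\sup_v\Bigl|\sum_\beta\varepsilon_\beta Q_\beta(v)\Bigr|\Bigr\rVert_q .
\]

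The remaining step, which I expect to be the main obstacle, is to remove the bucket signs $\varepsilon_\beta$ and recover $v^\top Ev=\sum_\beta Q_\beta(v)$ from \eqref{eq:bucket-basic}. The plan is a distributional-invariance argument. Flipping all signs $\sigma_{\gamma i}$ with $i\in I_\beta$ by a common sign $\varepsilon_\beta$ leaves each $Q_\beta$ unchanged, because it is quadratic in those signs. So this alone does not absorb $\varepsilon_\beta$. Instead I would write $Q_\beta(v)=A_\beta(v)^2-\sum_{i\in I_\beta} t_{\gamma i}^2$ and treat $\sum_\beta\varepsilon_\beta Q_\beta(v)$ as $v^\top E_\varepsilon v$. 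Here $E_\varepsilon=\frac1s\sum_\beta\varepsilon_\beta U^\top(S_\beta^\top S_\beta-\mathrm{diag}(S_\beta^\top S_\beta))U$, with $S_\beta$ the row of $S$ indexed by $\beta$. Then $\sup_v|\sum_\beta\varepsilon_\beta Q_\beta(v)|=\lVert E_\varepsilon\rVert_{\op}$.

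To compare $\lVert E_\varepsilon\rVert_{\op}$ with $\lVert E\rVert_{\op}$, I would use a decoupling/comparison step. Write $E_\varepsilon=E_+-E_-$, where $E_\pm$ collect the buckets with $\varepsilon_\beta=\pm1$. Each $E_\pm$ is the off-diagonal Gram error restricted to a random subset of buckets, and $\lVert E_\varepsilon\rVert\le\lVert E_+\rVert+\lVert E_-\rVert$. For a fixed $\varepsilon$, $E_+$ equals the conditional expectation of $E$ over the signs of the buckets in $-$, averaging those chaos terms to zero. Jensen's inequality then gives $\lVert\lVert E_\pm\rVert_{\op}\rVert_q\le\lVert\lVert E\rVert_{\op}\rVert_q$. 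This conditional-expectation trick is the step I would check most carefully, since it uses that the off-diagonal bucket contributions have zero conditional mean given the other buckets' signs. Combining the three pieces gives $\lVert\mathcal Q\rVert_q\le O(\lVert\mathcal K\rVert_q)+O(1/s)\,\lVert\lVert E\rVert_{\op}\rVert_q$, as claimed.
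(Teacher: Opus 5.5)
Your proposal is correct and follows the paper's proof essentially step for step. You condition on the hashes, bound the conditional mean by $2\mathcal K$ via $\psi(z)\le z^2$ and $\E_\sigma Q_\beta^2=2K_\beta$, then symmetrize, contract with Lipschitz constant $2/s$, and desymmetrize back to $\sum_\beta Q_\beta(v)=v^\top Ev$. The only difference is cosmetic and lies in the desymmetrization. The paper, inside its centered-function contraction lemma, writes $X_j=\E_{X'}(X_j-X_j')$ and uses the symmetry of $X_j-X_j'$. You instead split $E_\varepsilon=E_+-E_-$ and realize each $E_\pm$ as a conditional expectation of $E$ over the other buckets' signs. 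Both are standard and cost the same factor $2$.
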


\begin{proof}
Condition on all hashes. Different buckets then use disjoint collections of Rademacher signs, so the random signed-bucket-excess functions $Q_\beta(\cdot)$ are independent across buckets. They are centered, and for fixed $x$,
\begin{equation}\label{eq:bucket-error-second-moment}
    \E_\sigma Q_\beta(x)^2
    =4\sum_{\substack{i<j\\i,j\in I_\beta}}
      t_{\gamma i}^2t_{\gamma j}^2
    =2K_\beta(x).
\end{equation}
Since $\psi(z)\le z^2$,
\begin{equation}\label{eq:trunc-mean}
    \sup_{\lVert x\rVert_2=1}
    \sum_\beta\E_\sigma\psi(Q_\beta(x))
    \le2\mathcal{K}.
\end{equation}
Decompose $\mathcal{Q}$ into this mean part and the supremum of the centered process. Apply the centered-function contraction lemma, Lemma~\ref{lem:centered-contraction}, with Lipschitz constant $2/s$. Since
\[
    \sum_\beta Q_\beta(x)=x^\top Ex,
\]
for a random variable $X$, write
$\lVert X\rVert_{L_q(\sigma\mid h)}=(\E_\sigma[|X|^q\mid h])^{1/q}$.
We obtain, conditionally on the hashes,
\[
\left\lVert
    \sup_{\lVert x\rVert_2=1}
    \left|
      \sum_\beta
      \bigl(
        \psi(Q_\beta(x))
        -\E_\sigma\psi(Q_\beta(x))
      \bigr)
    \right|
\right\rVert_{L_q(\sigma\mid h)}
\le
\frac{c}{s}
\bigl\lVert\lVert E\rVert_{\op}\bigr\rVert_{L_q(\sigma\mid h)}.
\]
Combine this with \eqref{eq:trunc-mean}, then take the $L_q$ norm over the hashes.
\end{proof}

The inequality $\psi(z)\le z^2$ controls the conditional mean through \eqref{eq:bucket-error-second-moment}, while the $2/s$-Lipschitz bound makes the centered process comparable to $\lVert E\rVert_{\op}$ without a second net or trace-moment argument.

\section{Closing the operator-norm moment bound}
\label{sec:expansion-recursion}

Write
\[
    r_q
    =
    \bigl\lVert\lVert E\rVert_{\op}\bigr\rVert_q,
    \qquad
    z_q=\lVert Z\rVert_q,
    \qquad
    y_q=\lVert Y\rVert_q.
\]
Lemma~\ref{lem:shrinkage} already gives the required bound for \(y_q\).
We now combine the variance-proxy estimate from
Section~\ref{sec:expansion-sensitivity} with the bounds for
\(\mathcal K\) and \(\mathcal Q\) from
Section~\ref{sec:collision-truncation}. This yields a recursive bound
for \(z_q\) in terms of \(r_q\). We then combine the two distortions and
absorb the recursive term to prove
Proposition~\ref{prop:mean-to-moments}.

\begin{lemma}[Upper distortion recursion]\label{lem:expansion-recursion}
For every $q\ge4\log(2d^2)$,
\begin{equation}\label{eq:expansion-recursion}
    \lVert Z\rVert_q
    =O\!\left(
        \mu
        +\left[
            \frac qm
            +\frac{q^2}{s^2}
            +\frac qs\,\bigl\lVert\lVert E\rVert_{\op}\bigr\rVert_q
        \right]^{1/2}
    \right).
\end{equation}
\end{lemma}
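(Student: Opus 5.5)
We combine the polynomial Efron--Stein inequality (from the appendix of probability tools) with the bounds on the variance proxy established above. That inequality, applied to the function $Z$ of the independent inputs $(X_\alpha)_\alpha$, gives for $q\ge2$
\[
    \bigl\|(Z-\E Z)_+\bigr\|_q
    \le c\sqrt{q\,\|\mathcal V^+\|_{q/2}} .
\]
Since $Z\ge0$, we have $Z\le \E Z+(Z-\E Z)_+$, so $\|Z\|_q\le \E Z+\|(Z-\E Z)_+\|_q$. Also $\E Z\le\mu$, because $Z\le R$. It therefore suffices to show
\[
    \|\mathcal V^+\|_{q/2}
    =O\!\left(\frac1m+\frac q{s^2}+\frac{r_q}{s}\right),
    \qquad r_q=\bigl\|\|E\|_{\op}\bigr\|_q .
\]
Multiplying by $q$ and taking square roots then gives \eqref{eq:expansion-recursion}, using $\sqrt{q/m}+q/s+\sqrt{qr_q/s}\le O\bigl([\,q/m+q^2/s^2+qr_q/s\,]^{1/2}\bigr)$.

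To bound $\|\mathcal V^+\|_{q/2}$, we start from Lemma~\ref{lem:expansion-variance-proxy} and apply the triangle inequality in $L_{q/2}$. Since $\|\cdot\|_{q/2}\le\|\cdot\|_q$, we get
\[
    \|\mathcal V^+\|_{q/2}
    =O\!\left(\|\mathcal K\|_q+\|\mathcal Q\|_q+\frac{1+\|Z\|_q}{m}\right).
\]
Each term is then handled separately.
\begin{itemize}
    \item For $\mathcal K$, Lemma~\ref{lem:collision-bound} applies because $q\ge4\log(2d^2)\ge2\log(2d^2)$, giving $O(1/m+q/s^2)$.
    \item For $\mathcal Q$, Lemma~\ref{lem:truncated-bucket} bounds it by $O(\|\mathcal K\|_q+r_q/s)$, which reduces to the same quantities.
    \item For the term $\|Z\|_q/m$, we use $\|Z\|_q\le r_q$ and $m=sB\ge s$, so $\|Z\|_q/m\le r_q/s$.
\end{itemize}
Adding these three bounds yields the claimed estimate for $\|\mathcal V^+\|_{q/2}$.

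The main obstacle is bookkeeping rather than any new idea. The Efron--Stein step needs $\|\mathcal V^+\|_{q/2}$, and one has to be sure the $L_q$ bounds on $\mathcal K$ and $\mathcal Q$ dominate it; this holds by monotonicity of $L_p$ norms on a probability space. One must also check the case $Z=0$, where $\mathcal V^+=0$ and the proxy bound is trivial. Finally, the threshold $q\ge4\log(2d^2)$ is needed for the matrix Bernstein step inside Lemma~\ref{lem:collision-bound}, which is where it enters.
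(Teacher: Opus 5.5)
Your proposal is correct and follows the paper's proof essentially step for step: polynomial Efron--Stein with $Z\le\E Z+(Z-\E Z)_+$ and $\E Z\le\mu$, then Lemma~\ref{lem:expansion-variance-proxy} with Minkowski in $L_{q/2}$, the $\mathcal K$ and $\mathcal Q$ lemmas, the bound $\lVert Z\rVert_{q/2}\le r_q$, and $m\ge s$.

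The only cosmetic difference is that you pass from $L_{q/2}$ to $L_q$ before invoking Lemma~\ref{lem:collision-bound}, so your route only needs $q\ge2\log(2d^2)$ there. The paper instead applies that lemma at exponent $q/2$, which is where the threshold $4\log(2d^2)$ actually comes from, rather than from the matrix Bernstein step itself.
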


\begin{proof}
By Lemma~\ref{lem:expansion-variance-proxy}, Minkowski's inequality,
Lemma~\ref{lem:collision-bound}, and
Lemma~\ref{lem:truncated-bucket}, we obtain
\begin{equation}\label{eq:variance-proxy-moment}
    \lVert\mathcal V^+\rVert_{q/2}
    \le
    c\left[
        \frac1m
        +\frac q{s^2}
        +\frac{r_q}{s}
        +\frac{1+r_q}{m}
    \right].
\end{equation}
Indeed, monotonicity of $L_q$ norms gives
\[
    \bigl\lVert\lVert E\rVert_{\op}\bigr\rVert_{q/2}
    \le r_q,
\]
and
\[
    \lVert 1+Z\rVert_{q/2}
    \le 1+\lVert Z\rVert_{q/2}
    \le 1+r_q.
\]
Thus no decomposition of $\lVert E\rVert_{\op}$ into $Y+Z$ is needed
in the variance-proxy bound.

Since
\[
    Z\le\E Z+(Z-\E Z)_+,
\]
polynomial Efron--Stein inequality in Lemma~\ref{lem:polynomial-efron-stein} gives
\[
    z_q
    \le
    \E Z
    +c\sqrt{q\,\lVert\mathcal V^+\rVert_{q/2}}.
\]
Moreover, $\E Z\le\E R=\mu$. Substituting
\eqref{eq:variance-proxy-moment} yields
\[
    z_q
    \le
    \mu
    +c\left[
        \frac qm
        +\frac{q^2}{s^2}
        +\frac qs\,r_q
        +\frac qm(1+r_q)
    \right]^{1/2}.
\]
Finally, $m=sB\ge s$, so
\[
    \frac qm\,r_q\le\frac qs\,r_q
    \qquad\text{and}\qquad
    \frac qm\le\frac qm+\frac{q^2}{s^2}.
\]
This proves \eqref{eq:expansion-recursion}.
\end{proof}

\begin{proof}[Proof of Proposition~\ref{prop:mean-to-moments}]
Fix $q\ge4\log(2d^2)$ and set
\[
    a_q
    =
    \mu+\sqrt{\frac qm}+\frac qs.
\]
Lemma~\ref{lem:shrinkage} gives
\begin{equation}\label{eq:y-by-aq}
    y_q\le ca_q.
\end{equation}
Lemma~\ref{lem:expansion-recursion} and
$\sqrt{a+b+c}\le\sqrt a+\sqrt b+\sqrt c$ give
\begin{equation}\label{eq:z-by-rq}
    z_q
    \le
    ca_q
    +c\sqrt{\frac qs\,r_q}.
\end{equation}

Because $R=\max\{Y,Z\}$,
\[
    R^q\le Y^q+Z^q.
\]
Therefore
\[
    r_q
    \le
    \bigl(y_q^q+z_q^q\bigr)^{1/q}
    \le y_q+z_q.
\]
Combining this with \eqref{eq:y-by-aq} and \eqref{eq:z-by-rq},
\[
    r_q
    \le
    2ca_q
    +c\sqrt{\frac qs\,r_q}.
\]
Young's inequality gives
\[
    c\sqrt{\frac qs\,r_q}
    \le
    \frac12r_q+\frac{c^2}{2}\frac qs
    \le
    \frac12r_q+\frac{c^2}{2}a_q.
\]
Absorbing $\frac12r_q$ into the left-hand side gives
\begin{equation}\label{eq:two-sided-moment-local}
    r_q
    \le
    (4c+c^2)\left(
        \mu+\sqrt{\frac qm}+\frac qs
    \right).
\end{equation}

It remains to derive the tail statement. For $t\ge0$, set
\[
    q_t=4\log(2d^2)+t.
\]
Markov's inequality gives
\[
    \Pp\{R>e\lVert R\rVert_{q_t}\}
    \le e^{-q_t}
    \le e^{-t}.
\]
Moreover,
\[
    q_t\le8\bigl(\log(2d)+t\bigr).
\]
Applying \eqref{eq:two-sided-moment-local} at $q=q_t$ therefore gives
\[
    e\lVert R\rVert_{q_t}
    \le
    8e(4c+c^2)\left(
        \mu
        +\sqrt{\frac{\log(2d)+t}{m}}
        +\frac{\log(2d)+t}{s}
    \right).
\]
Choose
\[
    C_{\mathrm{conc}}\ge8e(4c+c^2).
\]
Then \eqref{eq:two-sided-moment-local} implies
\eqref{eq:mean-to-moments}, and the last two inequalities imply
\eqref{eq:mean-to-tail}.
\end{proof}

\section{Proof of the main theorem}\label{sec:main-proof}

We now choose the SparseStack parameters and apply
Propositions~\ref{prop:mean-error} and~\ref{prop:mean-to-moments}. All parameters below depend
only on \(d,\eps,\delta\), and not on the target subspace, as required for
an OSE.

\begin{proof}[Proof of Theorem~\ref{thm:main}]
Let $C_1$ and $C_{\mathrm{conc}}$ be the universal constants from
Propositions~\ref{prop:mean-error} and~\ref{prop:mean-to-moments}, respectively, and set
\[
    C_0=\max\{4,C_{\mathrm{conc}}(C_1+2)\},
    \qquad
    \eps_0=\frac{\eps}{C_0}.
\]
Choose
\[
    p=8\left\lceil\log\frac{8d}{\delta}\right\rceil,
    \qquad
    s=\left\lceil\frac{p}{8\eps_0}\right\rceil,
    \qquad
    B=\left\lceil
        \frac{8\bigl(d+\log(1/\delta)\bigr)}{\eps_0p}
      \right\rceil,
    \qquad
    m=sB.
\]

We first verify the hypotheses of
Proposition~\ref{prop:mean-error}. The definition of \(p\)
gives \(p\ge\log d\), and
\[
    p
    \le
    8\left(\log\frac{8d}{\delta}+1\right)
    \le
    16\bigl(d+\log(1/\delta)\bigr).
\]
Since \(\eps_0\le1/4\), we have
\[
    \frac{p}{8\eps_0}\ge1,
    \qquad
    \frac{8\bigl(d+\log(1/\delta)\bigr)}{\eps_0p}
    \ge\frac{1}{2\eps_0}
    \ge2.
\]
Each ceiling is at most twice its argument, and hence
\[
    \frac{p}{8\eps_0}
    \le s
    \le\frac{p}{4\eps_0},
    \qquad
    \frac{8\bigl(d+\log(1/\delta)\bigr)}{\eps_0p}
    \le B
    \le
    \frac{16\bigl(d+\log(1/\delta)\bigr)}{\eps_0p}.
\]
In particular,
\[
    \frac{d+\log(1/\delta)}{\eps_0^2}
    \le m
    \le
    \frac{4\bigl(d+\log(1/\delta)\bigr)}{\eps_0^2}.
\]
The lower bounds on \(s\) and \(B\), together with
\(d\le d+\log(1/\delta)\), verify the assumptions in
\eqref{eq:technical-regime} with \(\eps\) replaced by \(\eps_0\).

Now set
\[
    t=\log(1/\delta).
\]
Since \(d\ge2\),
\[
    \log(2d)+t
    =\log\frac{2d}{\delta}
    \le d+\log(1/\delta).
\]
The lower bound on \(m\) therefore gives
\[
    \sqrt{\frac{\log(2d)+t}{m}}
    \le\eps_0.
\]
Moreover, the definition of \(p\) gives
\[
    \log(2d)+t
    =\log\frac{2d}{\delta}
    \le\log\frac{8d}{\delta}
    \le\frac{p}{8},
\]
and hence
\[
    \frac{\log(2d)+t}{s}
    \le
    \frac{p/8}{p/(8\eps_0)}
    =\eps_0.
\]

Proposition~\ref{prop:mean-error} gives
\[
    \mu:=\E\lVert E\rVert_{\op}\le C_1\eps_0.
\]
Applying the tail bound in Proposition~\ref{prop:mean-to-moments} gives
\[
    \Pp\!\left\{
        \lVert E\rVert_{\op}>
        C_{\mathrm{conc}}(C_1+2)\eps_0
    \right\}
    \le e^{-t}
    =\delta.
\]
By the choice of \(C_0\),
\[
    C_{\mathrm{conc}}(C_1+2)\eps_0
    =\frac{C_{\mathrm{conc}}(C_1+2)}{C_0}\eps
    \le\eps.
\]
so
\[
    \Pp\{\lVert E\rVert_{\op}>\eps\}\le\delta.
\]

Finally,
\[
    m
    \le
    \frac{4C_0^2\bigl(d+\log(1/\delta)\bigr)}{\eps^2}
    =
    O\!\left(
        \frac{d+\log(1/\delta)}{\eps^2}
      \right),
\]
and
\[
    s
    \le\frac{p}{4\eps_0}
    =\frac{C_0p}{4\eps}
    =
    O\!\left(
        \frac{\log(d/\delta)}{\eps}
      \right).
\]
This proves the theorem.
\end{proof}

\clearpage
\bibliographystyle{plainnat}
\bibliography{refs}

\clearpage
\appendix

\section{Full proof of the restricted Eulerian contraction}\label{app:contraction}

We prove Lemma~\ref{lem:restricted-contraction}. For $A\subseteq[n]$, write
\begin{equation*}
    P_A=\sum_{i\in A}u_i u_i^\top.
\end{equation*}

\subsection{Three Parseval facts}

\begin{claim}[Restricted frame operator]\label{clm:restricted-frame}
For every $A\subseteq[n]$,
\[
    0\preceq P_A\preceq\Id.
\]
In particular, $\lVert u_i\rVert_2\le1$ for every $i$.
\end{claim}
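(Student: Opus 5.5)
The claim follows from the Parseval identity $\sum_{i=1}^n u_iu_i^\top=\Id$ by splitting the sum into the part indexed by $A$ and the part indexed by its complement. The plan is to establish both semidefinite inequalities via quadratic forms and then specialize to singletons.

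For the lower bound, fix $x\in\R^d$ and compute
\[
    x^\top P_A x=\sum_{i\in A}\langle u_i,x\rangle^2\ge0,
\]
so $P_A\succeq0$.

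For the upper bound, write $\Id-P_A=P_{[n]\setminus A}$ using the Parseval identity. The same computation applied to the complement $[n]\setminus A$ shows that $P_{[n]\setminus A}\succeq0$. Hence $P_A\preceq\Id$. Equivalently,
\[
    x^\top P_A x\le\sum_{i=1}^n\langle u_i,x\rangle^2=\lVert x\rVert_2^2 .
\]

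For the norm bound, take $A=\{i\}$, so that $P_A=u_iu_i^\top$. Its operator norm is $\lVert u_i\rVert_2^2$, because it is a rank-one positive semidefinite matrix whose only nonzero eigenvalue is $\lVert u_i\rVert_2^2$. Since $P_A\preceq\Id$, this eigenvalue is at most $1$, which gives $\lVert u_i\rVert_2\le1$.

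No step presents a real obstacle. The only point needing care is that no orthogonality or independence among the $u_i$ is assumed. The argument uses nothing beyond nonnegativity of each summand $\langle u_i,x\rangle^2$ and the Parseval identity for the full sum.
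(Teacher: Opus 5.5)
Your proof is correct and matches the paper's argument: each $u_iu_i^\top$ is positive semidefinite, Parseval gives $\Id-P_A=\sum_{i\notin A}u_iu_i^\top\succeq0$, and taking $A=\{i\}$ bounds the single nonzero eigenvalue $\lVert u_i\rVert_2^2$ of $u_iu_i^\top$ by one. The only differences are that you phrase positivity through quadratic forms and add the explicit identity $x^\top P_Ax\le\lVert x\rVert_2^2$.
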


\begin{proof}
Every $u_i u_i^\top$ is positive semidefinite, and
\[
    \Id-P_A=\sum_{i\notin A}u_i u_i^\top\succeq0.
\]
Thus $P_A\preceq\Id$. Taking $A=\{i\}$ gives $u_i u_i^\top\preceq\Id$, whose only nonzero eigenvalue is $\lVert u_i\rVert_2^2$.
\end{proof}

\begin{claim}[Restricted leaf sum]\label{clm:leaf-sum}
Let $A\subseteq[n]$, let $M\in\R^{d\times d}$ satisfy $\lVert M\rVert_{\op}\le1$, and let $x\in\R^d$. Then
\begin{equation*}
    \sum_{i\in A}|\langle u_i,Mx\rangle|^2
    =x^\top M^\top P_A Mx
    \le\lVert x\rVert_2^2.
\end{equation*}
The same bound holds with $\langle x,Mu_i\rangle$ in place of $\langle u_i,Mx\rangle$.
\end{claim}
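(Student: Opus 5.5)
The identity follows by expanding the quadratic form and regrouping it with the restricted frame operator $P_A$. The inequality then follows from Claim~\ref{clm:restricted-frame} together with the operator-norm bound on $M$.

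\textbf{Step 1: the identity.} For each $i$ we have $\langle u_i,Mx\rangle = u_i^\top Mx$, so
\[
|\langle u_i,Mx\rangle|^2 = (Mx)^\top u_i u_i^\top (Mx).
\]
Summing over $i\in A$ and pulling the fixed vectors outside the sum gives
\[
\sum_{i\in A}|\langle u_i,Mx\rangle|^2 = x^\top M^\top\Bigl(\sum_{i\in A}u_iu_i^\top\Bigr)Mx = x^\top M^\top P_A Mx.
\]

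\textbf{Step 2: the inequality.} Claim~\ref{clm:restricted-frame} gives $P_A\preceq \Id$. Hence
\[
x^\top M^\top P_A Mx \le \|Mx\|_2^2 \le \|M\|_{\op}^2\|x\|_2^2 \le \|x\|_2^2.
\]

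\textbf{Step 3: the transposed variant.} Write $\langle x,Mu_i\rangle=\langle u_i,M^\top x\rangle$. Apply Steps 1 and 2 with $M^\top$ in place of $M$. This is legitimate because $\|M^\top\|_{\op}=\|M\|_{\op}\le1$.

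None of these steps is a real obstacle; the argument is routine linear algebra. The only point needing care is to invoke $P_A\preceq\Id$ from Claim~\ref{clm:restricted-frame}, rather than the full Parseval identity, since $A$ may be a proper subset of $[n]$.
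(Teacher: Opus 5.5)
Your proof is correct and is essentially the paper's argument. You expand the sum as $x^\top M^\top P_A Mx$, use $P_A\preceq\Id$ from Claim~\ref{clm:restricted-frame} to bound it by $x^\top M^\top Mx\le\lVert x\rVert_2^2$, and get the second form by replacing $M$ with $M^\top$.
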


\begin{proof}
By Claim~\ref{clm:restricted-frame},
\[
    x^\top M^\top P_A Mx
    \le x^\top M^\top Mx
    \le\lVert x\rVert_2^2.
\]
The second form follows by transposing $M$.
\end{proof}

\begin{claim}[Squared tree sum]\label{clm:squared-tree}
Let $\cT$ be a tree with at least one edge. Give each vertex $v$ an allowed set $A_v\subseteq[n]$, and each oriented edge $e=(v,w)$ a matrix $M_e$ with $\lVert M_e\rVert_{\op}\le1$. Then
\begin{equation*}
    \sum_{a(v)\in A_v}
    \prod_{e=(v,w)\in E(\cT)}
    |\langle u_{a(v)},M_eu_{a(w)}\rangle|^2
    \le d.
\end{equation*}
\end{claim}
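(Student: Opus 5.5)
I will prove Claim~\ref{clm:squared-tree} by induction on the number of edges of $\cT$, peeling off one leaf at a time, as suggested in the proof sketch of Lemma~\ref{lem:restricted-contraction}.

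\emph{Base case.} Let $\cT$ be a single edge $e=(v,w)$. Summing first over $a(v)\in A_v$ and applying Claim~\ref{clm:leaf-sum} with $x=u_{a(w)}$ and $M=M_e$, the inner sum is at most $\lVert u_{a(w)}\rVert_2^2$. Summing over $a(w)\in A_w$ then gives
\[
    \sum_{j\in A_w}\lVert u_j\rVert_2^2=\tr(P_{A_w})\le\tr(\Id)=d,
\]
by Claim~\ref{clm:restricted-frame}.

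\emph{Inductive step.} Suppose $\cT$ has at least two edges. Choose a leaf $v$, and let $w$ be its unique neighbour and $e$ the edge joining them. Orient $e$ as $(v,w)$; replacing $M_e$ by $M_e^\top$ if necessary leaves the absolute value unchanged and preserves $\lVert M_e\rVert_{\op}\le1$. The label $a(v)$ appears only in the factor $|\langle u_{a(v)},M_eu_{a(w)}\rangle|^2$. Summing it out first, with all other labels fixed, Claim~\ref{clm:leaf-sum} bounds this factor by
\[
    \lVert u_{a(w)}\rVert_2^2\le1,
\]
where the last inequality is Claim~\ref{clm:restricted-frame}. The remaining product is exactly the squared tree sum for $\cT-v$, with the same allowed sets and the same edge matrices. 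The tree $\cT-v$ still has at least one edge, so the induction hypothesis bounds it by $d$. Since all summands are nonnegative, this order of summation is legitimate.

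There is no real obstacle. The only points to check are two. First, the bound $\lVert u_{a(w)}\rVert_2^2\le1$ may simply be discarded in the inductive step and used only at the final edge. Second, the orientation convention matters only through a transpose, which does not affect operator norms. The factor $d$ arises exactly once, as the trace of $P_{A_w}$ at the last remaining edge.
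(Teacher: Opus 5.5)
Your proof is correct and is essentially the paper's argument: you sum out leaves one at a time using Claim~\ref{clm:leaf-sum}, discard the resulting factors $\lVert u_{a(w)}\rVert_2^2\le1$, and pick up the factor $d$ as $\sum_i\lVert u_i\rVert_2^2$ at the last remaining vertex. The only difference is bookkeeping: you discard each factor immediately and induct on the number of edges, whereas the paper roots the tree and sweeps toward the root, dropping the accumulated powers along the way.
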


\begin{proof}
Root the tree at a vertex $\rho$. Sum labels from the leaves toward the root. If $v$ is a current leaf with parent $w$, the only remaining edge factor involving $a(v)$ is
\[
    |\langle u_{a(v)},Mu_{a(w)}\rangle|^2,
\]
up to extra powers of $\lVert u_{a(v)}\rVert_2$ left by already removed children. These powers are at most one by Claim~\ref{clm:restricted-frame}, so dropping them can only increase the sum. Claim~\ref{clm:leaf-sum} replaces the sum over $a(v)$ by at most $\lVert u_{a(w)}\rVert_2^2$.

Continue until only the root remains. Because the original tree had an edge, at least one factor $\lVert u_{a(\rho)}\rVert_2^2$ remains. Any extra powers are at most one. The final sum is at most
\[
    \sum_{i=1}^n\lVert u_i\rVert_2^2
    =\tr\!\left(\sum_{i=1}^n u_i u_i^\top\right)
    =d.
\]
\end{proof}

\subsection{Connected contraction}

The label sum and edge product factor over connected components, so it is enough to prove a bound $d$ when $H$ is connected. We induct on the number of vertices. We use the Nash--Williams criterion: a finite multigraph contains two edge-disjoint spanning trees exactly when every partition of its vertex set into $t\ge2$ nonempty parts has at least $2(t-1)$ edges between different parts~\cite{NashWilliams1961}.

If $H$ has one vertex, it has at least one loop because there are no isolated vertices. Every loop factor satisfies
\[
    |\langle u_i,M_eu_i\rangle|
    \le\lVert u_i\rVert_2^2
    \le1.
\]
Since at least one loop is present, the product of all loop factors is at most $\lVert u_i\rVert_2^2$. Summing over $i$ gives at most $d$.

Now assume $H$ has at least two vertices.

\paragraph{Case 1: two edge-disjoint spanning trees.}
Let $\cT_1,\cT_2$ be such trees. For a fixed labeling, let $A$ and $B$ be the products of the edge factors on $\cT_1$ and $\cT_2$. Every factor outside $\cT_1\cup \cT_2$ has absolute value at most one, so the full product is at most $|AB|$. Since
\[
    2|AB|\le |A|^2+|B|^2,
\]
Claim~\ref{clm:squared-tree} bounds the sum of each squared-tree term by $d$. Hence the connected contribution is at most $d$.

\paragraph{Case 2: no two edge-disjoint spanning trees.}
We reduce across a two-edge cut.

\begin{claim}[Eulerian cuts are even]\label{clm:cuts-even}
For every $\Omega\subseteq V(H)$, let $\delta_{H}(\Omega)$ denote the set of edges with exactly one endpoint in $\Omega$. Then $|\delta_{H}(\Omega)|$ is even.
\end{claim}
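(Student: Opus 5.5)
The plan is to use the degree-sum (handshake) argument over the vertices of \(\Omega\). For every \(v\in V(H)\), \(\deg_H(v)\) is even by hypothesis, where a loop counts twice. So the sum
\[
    \Sigma_\Omega:=\sum_{v\in\Omega}\deg_H(v)
\]
is even.

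Next, I will decompose \(\Sigma_\Omega\) by edge type. Each edge falls into one of three classes:
\begin{itemize}
    \item an edge with both endpoints in \(\Omega\), including a loop at a vertex of \(\Omega\), contributes exactly \(2\) to \(\Sigma_\Omega\);
    \item an edge with exactly one endpoint in \(\Omega\), that is, an edge of \(\delta_H(\Omega)\), contributes exactly \(1\);
    \item an edge with no endpoint in \(\Omega\) contributes \(0\).
\end{itemize}
Parallel edges are counted separately, matching the multigraph degree convention. This gives
\[
    \Sigma_\Omega = 2\,\bigl|E(H[\Omega])\bigr| + |\delta_H(\Omega)|,
\]
where \(E(H[\Omega])\) denotes the edges, loops included, with both endpoints in \(\Omega\).

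Reducing this identity modulo \(2\) shows that \(|\delta_H(\Omega)|\equiv\Sigma_\Omega\equiv 0 \pmod 2\). The only point needing care is the loop convention: a loop at \(v\in\Omega\) adds \(2\) to \(\deg_H(v)\) and is never a cut edge, so it does not disturb parity. I see no real obstacle here.
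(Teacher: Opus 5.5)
Your proof is correct and is essentially the paper's argument: the paper also writes the degree sum over $\Omega$ as $2|E(H[\Omega])|+|\delta_H(\Omega)|$ and concludes by parity. Your explicit check that loops contribute $2$ and never lie in the cut is a small detail the paper leaves implicit.
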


\begin{proof}
The sum of the degrees in $\Omega$ is
\[
    2|E(H[\Omega])|+|\delta_{H}(\Omega)|.
\]
The left side and the first term are even.
\end{proof}

\begin{claim}[A two-edge cut exists]\label{clm:two-edge-cut}
A connected Eulerian multigraph with no two edge-disjoint spanning trees has a cut of size two.
\end{claim}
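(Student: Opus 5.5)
The plan is to apply the Nash--Williams criterion to obtain a partition with few crossing edges, and then use Eulerian parity to extract a single cut of size exactly two.

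Let $H$ be connected and Eulerian with at least two vertices, and suppose it has no two edge-disjoint spanning trees. By the Nash--Williams criterion (in the form quoted above), there is a partition of $V(H)$ into $t\ge2$ nonempty parts $\Omega_1,\ldots,\Omega_t$ such that the number of edges joining different parts is at most $2(t-1)-1=2t-3$.

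Next, count crossing edges part by part. Each crossing edge has its two endpoints in different parts, so it lies in exactly two of the cuts $\delta_H(\Omega_k)$. Hence
\[
    \sum_{k=1}^t|\delta_H(\Omega_k)|
    =2\cdot\#\{\text{crossing edges}\}
    \le 4t-6<4t.
\]
Each $|\delta_H(\Omega_k)|$ is even by Claim~\ref{clm:cuts-even}. Each is also nonzero, because $H$ is connected and $\Omega_k$ is a nonempty proper subset of $V(H)$ (since $t\ge2$). So every $|\delta_H(\Omega_k)|$ lies in $\{2,4,6,\ldots\}$. If every part had cut size at least $4$, the sum would be at least $4t$, contradicting the displayed bound. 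Therefore some part satisfies $|\delta_H(\Omega_k)|=2$, and $\delta_H(\Omega_k)$ is the required cut of size two.

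The only delicate step is using the criterion in its contrapositive form: failure to have two edge-disjoint spanning trees yields a partition with at most $2(t-1)-1$ crossing edges. After that, the evenness and nonemptiness of each cut make the averaging argument immediate. Loops cause no issue, since a loop never crosses a cut.
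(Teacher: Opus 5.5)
Your proposal is correct and is essentially the paper's argument: both combine the Nash--Williams criterion, evenness of Eulerian cuts, connectivity, and double counting of crossing edges. The only difference is presentation. You extract the violating partition directly and average over its parts, while the paper assumes every cut has size at least four and derives two edge-disjoint spanning trees for a contradiction.
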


\begin{proof}
Suppose no such cut exists. By connectedness and Claim~\ref{clm:cuts-even}, every nontrivial shore has at least four boundary edges. For any partition into $t\ge2$ nonempty parts, the sum of the $t$ boundary sizes is at least $4t$. Each edge between parts is counted twice, so there are at least $2t\ge2(t-1)$ crossing edges. Nash--Williams would then give two edge-disjoint spanning trees, a contradiction.
\end{proof}

Choose an inclusion-minimal shore $\Omega$ of a two-edge cut. Let $\cJ=H[\Omega]$. The two cut incidences inside $\Omega$ occur at vertices $a,b\in\Omega$, possibly with $a=b$.

\begin{claim}[The minimal shore has two spanning trees]\label{clm:shore-trees}
The graph $\cJ$ contains two edge-disjoint spanning trees.
\end{claim}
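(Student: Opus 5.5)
The plan is to verify the Nash--Williams criterion for $\cJ=H[\Omega]$ directly, using the minimality of $\Omega$ together with Claim~\ref{clm:cuts-even}. If $|\Omega|=1$, there is nothing to prove: a single vertex trivially has two edge-disjoint (empty) spanning trees. So assume $|\Omega|\ge2$. Fix an arbitrary partition of $\Omega$ into $t\ge2$ nonempty parts $P_1,\ldots,P_t$, and let $X$ denote the number of edges of $\cJ$ whose endpoints lie in different parts. By the Nash--Williams criterion, it suffices to show $X\ge2(t-1)$.

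\textbf{Step 1 (boundary of each part).} Each $P_k$ is a nonempty proper subset of $\Omega$, hence a nonempty proper subset of $V(H)$. Since $H$ is connected, $|\delta_H(P_k)|\ge1$. By Claim~\ref{clm:cuts-even} this size is even, so $|\delta_H(P_k)|\ge2$. If equality held, $P_k$ would be a shore of a two-edge cut strictly contained in $\Omega$, contradicting the inclusion-minimality of $\Omega$. Therefore $|\delta_H(P_k)|\ge4$ for every $k$.

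\textbf{Step 2 (double counting).} Every edge in $\delta_H(P_k)$ is of one of two kinds. Either it is an edge of $\cJ$ joining $P_k$ to another part, or it is one of the two edges of $\delta_H(\Omega)$ with its inner endpoint in $P_k$. Loops and edges internal to a part never contribute. Summing over $k$, each crossing edge of $\cJ$ is counted exactly twice and each of the two cut edges exactly once, so
\[
    \sum_{k=1}^t|\delta_H(P_k)| = 2X+2 .
\]
Combined with Step~1 this gives $2X+2\ge4t$, that is, $X\ge2t-1\ge2(t-1)$. Nash--Williams then yields two edge-disjoint spanning trees in $\cJ$.

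\textbf{Main obstacle.} The only delicate point is the bookkeeping in Step~2. One must check that edges of $\delta_H(\Omega)$ are counted once and not twice, and that the possibility $a=b$ (both cut incidences at the same vertex) does not affect the count; it does not, since each cut edge still has exactly one endpoint in $\Omega$. The rest is immediate from minimality and the evenness of cuts.
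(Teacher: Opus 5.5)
Your proof is correct and follows the paper's argument: minimality of $\Omega$ together with evenness of cuts forces $|\delta_H(P_k)|\ge4$ for every part, and the double count $\sum_k|\delta_H(P_k)|=2X+2$ gives $X\ge 2t-1\ge2(t-1)$, so Nash--Williams applies. You also treat the case $|\Omega|=1$ separately and invoke connectivity of $H$ to get $|\delta_H(P_k)|\ge1$; the paper leaves both of these details implicit.
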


\begin{proof}
Partition $\Omega$ into $t>1$ nonempty parts $P_1,\ldots,P_t$. Every $P_i$ is a proper nonempty subset of $\Omega$. Its boundary in $H$ is nonzero and even. If it had size two, it would be a smaller shore of a two-edge cut, contrary to the choice of $\Omega$. Thus $|\delta_{H}(P_i)|\ge4$.

Let $m_P$ be the number of edges of $\cJ$ crossing between the parts. These edges contribute twice to the sum of the boundary sizes, and the two edges leaving $\Omega$ contribute once each. Hence
\[
    2m_P+2
    =\sum_{i=1}^t|\delta_{H}(P_i)|
    \ge4t,
\]
so $m_P\ge2t-1\ge2(t-1)$. Nash--Williams applied to $\cJ$ gives two edge-disjoint spanning trees.
\end{proof}

\begin{claim}[The shore becomes one contraction edge]\label{clm:shore-edge}
After summing all labels inside $\Omega$, the two-terminal expression between the exterior endpoints of the cut is a bilinear form
\[
    F(x,y)=\langle x,M_\Omega y\rangle
\]
with $\lVert M_\Omega\rVert_{\op}\le1$.
\end{claim}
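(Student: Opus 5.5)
Fix the exterior endpoints $p,q$ of the two cut edges $e_1=(p,a)$ and $e_2=(b,q)$, with matrices $M_{e_1},M_{e_2}$, where $a,b\in\Omega$ are the interior incidences. With $u_{a(p)}=x$ and $u_{a(q)}=y$ held fixed, the sum over labels inside $\Omega$ is
\[
F(x,y)=\sum_{a(v)\in A_v\,(v\in\Omega)}\langle x,M_{e_1}u_{a(a)}\rangle\,\Bigl(\prod_{e\in E(\cJ)}\langle u_{a(v)},M_eu_{a(w)}\rangle\Bigr)\,\langle u_{a(b)},M_{e_2}y\rangle .
\]
It is linear in $x$ and in $y$. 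Hence $F(x,y)=\langle x,M_\Omega y\rangle$ for a unique matrix $M_\Omega$, namely
\[
M_\Omega=\sum_{a}M_{e_1}u_{a(a)}\Bigl(\prod_{e\in E(\cJ)}\cdots\Bigr)u_{a(b)}^\top M_{e_2}.
\]
Everything then reduces to showing $\lVert M_\Omega\rVert_{\op}\le1$, that is, $|F(x,y)|\le\lVert x\rVert_2\lVert y\rVert_2$ for all $x,y$. For this I reuse the Case 1 argument on the shore, since Claim~\ref{clm:shore-trees} supplies two edge-disjoint spanning trees $\cT_1,\cT_2$ of $\cJ$.

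For a fixed labeling of $\Omega$, bound every edge factor of $\cJ$ outside $\cT_1\cup\cT_2$ by $1$ in absolute value, using Claim~\ref{clm:restricted-frame}. Attach the $x$-cut factor to $\cT_1$ and the $y$-cut factor to $\cT_2$. Set
\[
\mathsf A=\langle x,M_{e_1}u_{a(a)}\rangle\prod_{e\in\cT_1}(\cdot),
\qquad
\mathsf B=\langle u_{a(b)},M_{e_2}y\rangle\prod_{e\in\cT_2}(\cdot).
\]
Then $|F|\le\sum|\mathsf A\mathsf B|\le\bigl(\sum\mathsf A^2\bigr)^{1/2}\bigl(\sum\mathsf B^2\bigr)^{1/2}$ by Cauchy--Schwarz over labelings; this is sharper than the AM--GM used in Case 1 and gives the homogeneous form needed. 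It remains to show $\sum_a\mathsf A^2\le\lVert x\rVert_2^2$, and likewise for $\mathsf B$ with $y$.

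For $\sum\mathsf A^2$, treat $\cT_1$ plus the pendant edge to the fixed vector $x$ as a tree rooted at the exterior node, and peel leaves exactly as in Claim~\ref{clm:squared-tree}. Each leaf sum is handled by Claim~\ref{clm:leaf-sum}, which is valid for restricted sets $A_v$, and leftover powers of $\lVert u_i\rVert_2$ are at most $1$ and are dropped. When only the vertex $a$ remains, the last step is $\sum_{i\in A_a}|\langle x,M_{e_1}u_i\rangle|^2\le\lVert x\rVert_2^2$, again by Claim~\ref{clm:leaf-sum} in its transposed form. The root is now a fixed vector rather than a summed label, so there is no trace term and no factor $d$ appears. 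The $\mathsf B$ side is symmetric.

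The degenerate cases need a check. If $a=b$ the argument is unchanged. If $\Omega$ is a single vertex, $\cJ$ has only loops; there I bound the loops by $1$, but must retain at least one factor so that $\sum_i|\langle x,M_1u_i\rangle\langle u_i,M_2y\rangle|\le\lVert x\rVert_2\lVert y\rVert_2$ follows by Cauchy--Schwarz and Claim~\ref{clm:leaf-sum}. If the two exterior endpoints coincide, the resulting generalized edge is a loop, which is allowed.

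The main obstacle is getting the peeling order right so that each vertex of $\Omega$ is summed exactly once. Each tree product involves all of $\Omega$, and summing $\mathsf A^2$ and $\mathsf B^2$ separately over the full labeling is legitimate because Cauchy--Schwarz is applied to the joint labeling. A second point to confirm is that no accumulated norm factor exceeds $1$, which holds since $\lVert M_e\rVert_{\op}\le1$ and $\lVert u_i\rVert_2\le1$.
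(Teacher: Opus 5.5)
Your proposal is correct and is essentially the paper's proof. You split the summand using the two edge-disjoint spanning trees of the minimal shore from Claim~\ref{clm:shore-trees}, attach one boundary factor to each tree, peel leaves toward the boundary vertex with Claim~\ref{clm:leaf-sum} so that the squared sums are at most $\lVert x\rVert_2^2$ and $\lVert y\rVert_2^2$, and conclude by bilinearity. The only difference is cosmetic: you apply Cauchy--Schwarz over labelings directly, whereas the paper first normalizes $x,y$ to unit vectors by homogeneity and then uses $2|AB|\le|A|^2+|B|^2$, which gives the same bound.
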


\begin{proof}
Orient the two cut edges so that their factors are
\[
    \langle x,M_0u_{\eta(a)}\rangle
    \quad\text{and}\quad
    \langle u_{\eta(b)},M_1y\rangle,
\]
where $\eta(v)$ is the label at $v\in\Omega$. By homogeneity, assume $\lVert x\rVert_2=\lVert y\rVert_2=1$. Let $\cT_1,\cT_2$ be the two edge-disjoint spanning trees from Claim~\ref{clm:shore-trees}. For a fixed interior labeling, define
\[
    A=\langle x,M_0u_{\eta(a)}\rangle\prod_{e\in \cT_1}z_e,
    \qquad
    B=\langle u_{\eta(b)},M_1y\rangle\prod_{e\in \cT_2}z_e,
\]
where $z_e$ is the generalized factor on $e$. The remaining internal factors have absolute value at most one. Therefore
\[
    |F(x,y)|
    \le\sum_{\text{labels in }\Omega}|AB|
    \le\frac12\sum_{\text{labels in }\Omega}(|A|^2+|B|^2).
\]
Root $\cT_1$ at $a$ and remove nonroot labels with Claim~\ref{clm:leaf-sum}. At the root, the boundary factor gives
\[
    \sum_{i\in A_a}|\langle x,M_0u_i\rangle|^2\le1.
\]
Thus $\sum|A|^2\le1$. Rooting $\cT_2$ at $b$ gives $\sum|B|^2\le1$. Hence $|F(x,y)|\le1$ for unit $x,y$.

The expression is bilinear because the exterior vectors occur only in the two boundary factors. Thus $F(x,y)=\langle x,M_\Omega y\rangle$ for some matrix $M_\Omega$, and the unit-vector bound is exactly $\lVert M_\Omega\rVert_{\op}\le1$.
\end{proof}

Replace the whole shore $\Omega$ by the single generalized edge carrying $M_\Omega$. The graph stays connected. At each exterior endpoint, the number of removed incidences equals the number of inserted incidences, so all degrees remain even. The number of vertices decreases strictly. By Claim~\ref{clm:shore-edge}, the label sum is unchanged and the new edge matrix still has operator norm at most one. Repeating the reduction eventually reaches either the one-vertex base case or Case~1. Both are bounded by $d$. Multiplying over the original connected components proves Lemma~\ref{lem:restricted-contraction}.

\section{The second-moment bound}\label{app:small-d}

We record a second-moment bound which is valid for every dimension. Take $\ell=2$. There are four trace positions $L_1,R_1,L_2,R_2$. Every admissible block has even size, and no block may contain both $L_t$ and $R_t$ because $i_t\ne j_t$. The only admissible partitions are
\[
    \bigl\{\{L_1,L_2\},\{R_1,R_2\}\bigr\}
    \quad\text{and}\quad
    \bigl\{\{L_1,R_2\},\{R_1,L_2\}\bigr\}.
\]
Their component-incidence multigraphs are, respectively, two parallel edges with $g_\pi=1$ and a two-edge tree with $g_\pi=0$, as shown in Figure~\ref{fig:small-d-supports}.

\begin{figure}[ht]
\centering
\begin{tikzpicture}[every node/.style={font=\small}]
\node[draw,circle,inner sep=2pt] (a1) at (0,0) {$\mathsf C$};
\node[draw,circle,inner sep=2pt] (b1) at (2.2,0) {$\mathsf G$};
\draw (a1) to[bend left=22] node[above] {$D_1$} (b1);
\draw (a1) to[bend right=22] node[below] {$D_2$} (b1);
\node at (1.1,-1.0) {$g_\pi=1$};

\node[draw,circle,inner sep=2pt] (a2) at (5.2,0) {$\mathsf C$};
\node[draw,circle,inner sep=2pt] (b21) at (7.4,0.7) {$\mathsf G_1$};
\node[draw,circle,inner sep=2pt] (b22) at (7.4,-0.7) {$\mathsf G_2$};
\draw (a2) -- node[above] {$D_1$} (b21);
\draw (a2) -- node[below] {$D_2$} (b22);
\node at (6.3,-1.4) {$g_\pi=0$};
\end{tikzpicture}
\caption{The only two component-incidence multigraphs that arise when $\ell=2$: a pair of parallel edges and a two-edge tree.}
\label{fig:small-d-supports}
\end{figure}

Applying \eqref{eq:fixed-partition-final} to these two shapes gives
\[
    \E\tr G^2\le2dp^2\le d(C_3p)^2.
\]
This proves the second-moment assertion of Proposition~\ref{prop:trace-moment}.

\section{The algebraic bucket influence lemma}\label{app:bucket-influence}

This appendix proves the algebraic inequality used in
\eqref{eq:bucket-influence-application} to control the squared positive
marginal contributions in the original bucket.

\begin{lemma}[Truncated positive influence]\label{lem:bucket-influence}
Let $t_1,\ldots,t_N\in\R$ and put
\[
    A=\sum_i t_i,
    \qquad
    w=\sum_i t_i^2,
    \qquad
    Q=A^2-w,
    \qquad
    K=w^2-\sum_i t_i^4.
\]
If $w\le\rho$, then
\begin{equation*}
    \sum_i\bigl(t_i(A-t_i)\bigr)_+^2
    =O\!\left(
        K+\min\bigl\{(Q)_+^2,\rho(Q)_+\bigr\}
    \right).
\end{equation*}
\end{lemma}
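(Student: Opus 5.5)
The plan is to rewrite each coordinatewise interaction in terms of a leave-one-out excess. This yields an exact per-coordinate identity, which gives the $\rho(Q)_+$ branch at once. The $(Q)_+^2$ branch then follows from a case split on whether one coordinate dominates $w$.

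\textbf{The per-coordinate identity.} For each $i$ put $S_i=A-t_i$, $w_i=w-t_i^2$ and $x_i=t_iS_i$, so that $\sum_i x_i=Q$ and $K=\sum_i t_i^2w_i$. Expanding $Q=(S_i+t_i)^2-w_i-t_i^2$ gives $S_i^2=Q+w_i-2x_i$. Multiplying by $t_i^2$ yields
\[
    x_i^2+2t_i^2x_i=t_i^2w_i+t_i^2Q .
\]
For every $i$ with $x_i>0$ this implies two facts:
\[
    x_i^2\le t_i^2w_i+t_i^2(Q)_+,
    \qquad
    x_i\le\tfrac12\bigl(w_i+(Q)_+\bigr).
\]

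\textbf{The $\rho(Q)_+$ branch.} Summing the first inequality over the indices with $x_i>0$ gives
\[
    \sum_i(x_i)_+^2\le K+w(Q)_+\le K+\rho(Q)_+ .
\]
This is the desired bound whenever the minimum equals $\rho(Q)_+$. The same estimate also settles the other branch when $(Q)_+\ge w$, because then $w(Q)_+\le(Q)_+^2$. So it remains to show $\sum_i(x_i)_+^2=O(K+(Q)_+^2)$ when $0<(Q)_+<w$; if $(Q)_+=0$ the bound $K+w(Q)_+=K$ already suffices.

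\textbf{The branch $0<(Q)_+<w$.} This is the main obstacle: we must bound the leftover term $\sum_i t_i^2(Q)_+$ by $K$ rather than by $w(Q)_+$. I will split on whether some coordinate dominates.

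If every $t_i^2\le w/2$, then $\sum_i t_i^4\le w^2/2$. Hence $K\ge w^2/2\ge w(Q)_+/2$, and the bound from the previous step is $O(K)$.

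Otherwise there is a unique index $j$ with $t_j^2>w/2$, so $w_j<t_j^2$ and $K\ge t_j^2w_j\ge\frac w2 w_j$.
\begin{itemize}
    \item For $i\ne j$, use $x_i^2\le t_i^2w_i+t_i^2(Q)_+$. The first terms sum to at most $K$. The second terms sum to at most $w_j(Q)_+\le w_jw\le 2K$.
    \item For the dominant index $j$, use the second inequality $x_j\le\frac12(w_j+(Q)_+)$. It gives $x_j^2\le w_j^2+(Q)_+^2$, and $w_j^2\le t_j^2w_j\le K$ because $w_j<t_j^2$.
\end{itemize}
Collecting the cases gives $\sum_i(x_i)_+^2=O(K+(Q)_+^2)$ in this regime. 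Together with the $\rho(Q)_+$ branch, this proves the claimed bound with an absolute constant.
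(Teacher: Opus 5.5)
Your proof is correct, and its skeleton matches the paper's. First comes the bound $\sum_i (x_i)_+^2\le K+w(Q)_+$. Then comes a bound $O(K+(Q)_+^2)$, obtained by singling out one large coordinate $k$ and applying the leave-one-out identity $2x_k=Q+w_k-S_k^2$ to it. This is exactly the paper's identity $2t_k(A-t_k)=Q+\rho_0-(\sum_{i\ne k}t_i)^2$ with $\rho_0=w_k$.

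You differ from the paper mainly in how you derive the first bound. The paper assumes $A\ge 0$ and splits on the sign of $Q$. If $A^2\le w$, every active index satisfies $0<t_i<A\le\sqrt w$, hence $(A-t_i)^2\le w-t_i^2$. If $A^2\ge w$, the paper checks that $wA^2-\sum_i t_i^4-\sum_i (x_i)_+^2$ is a sum of nonnegative terms. Your identity $x_i^2+2t_i^2x_i=t_i^2(w_i+Q)$ gives the same conclusion in one line, with no sign normalization or case split. The same identity also supplies $x_i\le\frac12(w_i+(Q)_+)$, which you need for the dominant coordinate. So a single relation drives your whole argument, and it yields an explicit constant: $\sum_i(x_i)_+^2\le 4K+\min\{(Q)_+^2,\rho(Q)_+\}$.

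In the second step the paper avoids your dichotomy at $t_j^2>w/2$. It always takes the largest coordinate $k$ and uses $\sum_i t_i^4\le t_k^2w$, i.e.\ $K\ge w\,w_k$, which holds with no dominance assumption. It then bounds the non-maximal terms crudely via $(A-t_i)^2\le 2A^2+2t_i^2$. You could streamline your argument the same way. The inequality $K\ge w\,w_k$ gives both $w_k(Q)_+\le K$ (when $(Q)_+<w$) and $w_k^2\le K$ directly, so the case with no dominant coordinate need not be treated separately.
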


\begin{proof}
The claim is trivial when $w=0$. The expression is invariant under
replacing every $t_i$ by $-t_i$, so assume $A\ge0$.

First,
\begin{equation}\label{eq:bucket-linear-bound}
    \sum_i\bigl(t_i(A-t_i)\bigr)_+^2
    \le K+w(Q)_+.
\end{equation}
If $A^2\le w$, an active index satisfies
$0<t_i<A\le\sqrt w$, and
\[
    (A-t_i)^2
    \le(\sqrt w-t_i)^2
    \le w-t_i^2.
\]
Summing gives at most $K$. If $A^2\ge w$, then
$A\ge|t_i|$ and only positive $t_i$ contribute. Moreover,
\[
\begin{aligned}
    wA^2-\sum_i t_i^4
      -\sum_i\bigl(t_i(A-t_i)\bigr)_+^2
    &=
      \sum_{t_i<0}t_i^2(A^2-t_i^2)
      +2\sum_{t_i>0}t_i^3(A-t_i)\\
    &\ge0.
\end{aligned}
\]
Since $wA^2-\sum_i t_i^4=wQ+K$,
\eqref{eq:bucket-linear-bound} follows.

Second,
\begin{equation}\label{eq:bucket-quadratic-bound}
    \sum_i\bigl(t_i(A-t_i)\bigr)_+^2
    \le c\bigl(K+(Q)_+^2\bigr).
\end{equation}
For $Q\le0$, this follows from
\eqref{eq:bucket-linear-bound}. Suppose $Q>0$ and normalize
$w=1$. Choose $k$ with $|t_k|=\max_i|t_i|$, and put
\[
    \rho_0=\sum_{i\ne k}t_i^2.
\]
Since
\[
    \sum_i t_i^4
    \le\left(\max_i t_i^2\right)\sum_i t_i^2
    =1-\rho_0,
\]
we have $\rho_0\le K$. For $i\ne k$,
\[
    \sum_{i\ne k}t_i^2(A-t_i)^2
    \le2(1+Q)\rho_0+2\rho_0^2
    \le c(K+Q^2).
\]
If the $k$th influence is positive, then
\[
    2t_k(A-t_k)
    =Q+\rho_0-\left(\sum_{i\ne k}t_i\right)^2
    \le Q+\rho_0,
\]
so its square is also at most $c(Q^2+K)$. This proves
\eqref{eq:bucket-quadratic-bound}; rescaling restores general $w$.

Combining \eqref{eq:bucket-linear-bound} and
\eqref{eq:bucket-quadratic-bound} gives
\[
    \sum_i\bigl(t_i(A-t_i)\bigr)_+^2
    \le
    c\left[
        K+\min\bigl\{(Q)_+^2,w(Q)_+\bigr\}
    \right].
\]
Since $w\le\rho$, the result follows.
\end{proof}

For a SparseStack bucket, take $t_i=t_{\gamma i}$ for
$i\in I_\beta$. Then
\[
    A=A_\beta,
    \qquad
    w=\sum_{i\in I_\beta}t_{\gamma i}^2,
    \qquad
    Q=Q_\beta,
    \qquad
    K=K_\beta.
\]
By \eqref{eq:bucket-basic}, $w\le1/s$. Therefore
Lemma~\ref{lem:bucket-influence}, applied with $\rho=1/s$, gives
\eqref{eq:bucket-influence-application}.

\section{Standard probability tools}\label{app:probability-tools}

\subsection{Bousquet concentration for upper-bounded suprema}

\begin{lemma}[Bousquet, one-sided form]\label{lem:bousquet}
Let $X_1,\ldots,X_N$ be independent and identically distributed random variables
taking values in a common measurable space $\mathsf X$, and let
$\mathcal F$ be a countable class of measurable functions
$f:\mathsf X\to\R$. Suppose
\[
    \E f(X_i)=0,
    \qquad
    f(X_i)\le b
\]
almost surely for every $f\in\mathcal F$ and every $i$. Let
\[
    \mathcal Z=\sup_{f\in\mathcal F}\sum_{i=1}^N f(X_i),
    \qquad
    v=\sup_{f\in\mathcal F}\sum_{i=1}^N\E f(X_i)^2.
\]
Assume also that $\mathcal Z\ge0$, for example because the zero function
belongs to $\mathcal F$. Then, for every $t\ge0$,
\begin{equation*}
    \Pp\left\{
        \mathcal Z>\E\mathcal Z+\sqrt{2(v+2b\E\mathcal Z)t}+\frac{bt}{3}
    \right\}
    \le e^{-t}.
\end{equation*}
\end{lemma}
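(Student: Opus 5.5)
This is Bousquet's inequality, which the paper invokes as a standard tool. I would reduce it to the form stated in Bousquet~\cite{Bousquet2002} by rescaling, and check that the hypotheses match. The plan is to normalize, identify the variance parameter, invoke the published theorem, and rescale back.

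\emph{Normalization.} If $b\le 0$, then every $f(X_i)$ is centered and bounded above by $b\le0$. Hence $b=0$ and $f(X_i)=0$ almost surely. As $\mathcal F$ is countable, $\mathcal Z=0$ almost surely and the event is empty. So assume $b>0$ and set $\tilde f=f/b$, so that $\tilde f(X_i)\le1$ and $\E\tilde f(X_i)=0$. Then $\tilde{\mathcal Z}=\mathcal Z/b$ and $\tilde v=v/b^2$. Because the $X_i$ are identically distributed, $\tilde v=N\sup_f\operatorname{Var}\tilde f(X_1)=N\sigma^2$ in Bousquet's notation. Countability of $\mathcal F$ makes $\mathcal Z$ measurable, which is the setting of his theorem.

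\emph{Applying the theorem.} Bousquet's theorem states that for $\tilde f\le 1$ centered,
\[
\Pp\bigl\{\tilde{\mathcal Z}\ge \E\tilde{\mathcal Z}+\sqrt{2(N\sigma^2+2\E\tilde{\mathcal Z})t}+t/3\bigr\}\le e^{-t}.
\]
Multiplying the event through by $b$ gives
\[
b\sqrt{2(v/b^2+2\E\mathcal Z/b)t}=\sqrt{2(v+2b\E\mathcal Z)t},
\]
together with the term $bt/3$. This is exactly the displayed bound, and the strict inequality in our event only makes it smaller. The assumption $\mathcal Z\ge0$ is not needed by this route. It does guarantee that $\E\mathcal Z\ge0$, so the square root is well defined.

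\emph{What the cited proof does, if it must be reproduced.} I would follow the entropy method. Tensorize entropy over the $N$ coordinates and introduce the leave-one-out suprema $\mathcal Z_k$. The key properties are $\mathcal Z-\mathcal Z_k\le 1$ after normalization, $\sum_k(\mathcal Z-\mathcal Z_k)\le\mathcal Z$, and a variance term bounded by $v+2\E\mathcal Z$. These feed a modified log-Sobolev inequality, which yields a differential inequality for the log-moment generating function. Integrating it gives a Bennett-type bound, and inverting that bound gives the Bernstein form above.

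The main obstacle is obtaining the sharp constants, namely the factor $2$ in front of $v+2b\E\mathcal Z$ with no extra multiplicative losses. That requires Bousquet's careful comparison argument for the differential inequality. For the purposes of this paper I would not redo it, and would simply cite the theorem after the rescaling above.
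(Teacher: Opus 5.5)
Your proposal is correct and matches the paper, which also simply invokes Theorem~2.3 of Bousquet for i.i.d.\ observations and rescales its unit-envelope statement by $b$. Your normalization $\tilde f=f/b$, the identification $v/b^{2}=N\sigma^{2}$ via identical distribution, the degenerate case $b\le 0$, and the passage from $\ge$ to $>$ are exactly the checks this reduction requires.
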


This is the one-sided, upper-bounded form of
Theorem~2.3 of Bousquet~\cite{Bousquet2002}, after scaling its
unit-envelope statement by $b$. Bousquet states the result for
i.i.d.\ observations, which is the form used here.

\subsection{Polynomial Efron--Stein}

\begin{lemma}[Polynomial Efron--Stein]\label{lem:polynomial-efron-stein}
Let $X_1,\ldots,X_N$ be independent, let
$F=f(X_1,\ldots,X_N)$, and let $F_i'$ be obtained by replacing
$X_i$ by an independent copy. Define
\[
    \mathcal V^+
    =\sum_{i=1}^N
      \E_i'\bigl[(F-F_i')_+^2\bigr].
\]
Then, for every real $q\ge2$,
\begin{equation*}
    \lVert(F-\E F)_+\rVert_q
    =O\!\left(\sqrt{q\,\lVert \mathcal V^+\rVert_{q/2}}\right).
\end{equation*}
\end{lemma}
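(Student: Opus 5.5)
We do not need to reprove this from scratch: it is the one-sided moment inequality of Boucheron, Bousquet, Lugosi and Massart~\cite{BoucheronEtAl2005}, so we will follow their argument. We will not import any concentration input from the body of the paper. Write $Z:=F$, $W:=(Z-\E Z)_+$, $v:=\lVert\mathcal V^+\rVert_{q/2}$ and $x_r:=\lVert W\rVert_r$. The plan is to prove a recursion in the moment order $r$ that raises $r$ by one at each step, and then close it by induction.

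The first step is the base range $r\in[1,2]$. The one-sided Efron--Stein inequality gives $\E W^2\le \operatorname{Var}(Z)\le \E\mathcal V^+$. Together with monotonicity of $L_r$ norms, this yields $x_r^2\le x_2^2\le\lVert\mathcal V^+\rVert_1\le v$ for $q\ge2$.

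The second step, and the heart of the proof, is the recursion for $r\in(2,q]$:
\begin{equation*}
    x_r^r\;\le\;x_{r-1}^{r}\;+\;\frac{r(r-1)}{2}\,\E\bigl[\mathcal V^+\,W^{r-2}\bigr].
\end{equation*}
We will derive it from the tensorization of $\varphi$-entropies for the power function $\varphi(u)=u^{\alpha}$ with $\alpha=r/(r-1)\in(1,2]$, applied to the nonnegative variable $W^{r-1}$. Tensorization holds because $u\mapsto u^\alpha$ has $1/\varphi''$ concave, so it is an admissible $\varphi$-entropy. It bounds $\E[W^r]-(\E W^{r-1})^{r/(r-1)}$ by a sum over coordinates of conditional $\varphi$-entropies. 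Each conditional term will be controlled by the variational (convexity) form of the entropy. After that we use the elementary inequality $\varphi(a)-\varphi(b)-\varphi'(b)(a-b)\le \tfrac{r(r-1)}{2}\,a^{r-2}(a-b)_+^2$ for $a,b\ge0$, written in terms of $W$ and $W_i'$. Only decreases $Z-Z_i'>0$ survive, and they contribute at most $(Z-Z_i')_+^2$, which produces the one-sided proxy $\mathcal V^+$ rather than the two-sided one. This is the step I expect to be the main obstacle. The exponent bookkeeping must keep only the positive-part increments and must land exactly on the factor $W^{r-2}$; a cruder route, such as applying plain Efron--Stein to $W^{q/2}$ and doubling $q$, loses a factor of $\sqrt q$ and would only give $x_q=O(q\sqrt v)$.

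The third step closes the induction. By H\"older with exponents $r/2$ and $r/(r-2)$, $\E[\mathcal V^+W^{r-2}]\le \lVert\mathcal V^+\rVert_{r/2}\,x_r^{r-2}\le v\,x_r^{r-2}$. We then prove $x_r^2\le\kappa r v$ by induction on $r$, moving along $r_0,r_0+1,\dots$ from a base point $r_0\in[1,2]$ so that real $q$ is covered. Assume $x_{r-1}^2\le \kappa (r-1)v$ and, for contradiction, $x_r^2>\kappa r v$. Dividing the recursion by $x_r^r$ gives
\begin{equation*}
    1\le\Bigl(1-\tfrac1r\Bigr)^{r/2}+\frac{r-1}{2\kappa}.
\end{equation*}
This bound is useless for large $r$, so we will instead track $x_r^2$ relative to $r$ directly. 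Set $y=x_r^2/(rv)$ and rewrite the recursion as
\begin{equation*}
    y^{r/2}\le \kappa^{r/2}\Bigl(1-\tfrac1r\Bigr)^{r/2}+\tfrac{r-1}{2r}\,y^{r/2-1}.
\end{equation*}
We then check that $y\le\kappa$ for a suitable absolute constant. Using $(1-1/r)^{r/2}\le e^{-1/2}$, it suffices that $\kappa^{r/2}(1-e^{-1/2})\ge\tfrac12\kappa^{r/2-1}$, that is $\kappa\ge 1/(2(1-e^{-1/2}))$. Taking $r=q$ gives $\lVert(F-\E F)_+\rVert_q\le\sqrt{\kappa q\,\lVert\mathcal V^+\rVert_{q/2}}$.
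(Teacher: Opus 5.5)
The paper gives no proof of this lemma; it just cites Theorem~2 of Boucheron--Bousquet--Lugosi--Massart. Reconstructing their argument (tensorize the $\varphi$-entropy for $\varphi(u)=u^{r/(r-1)}$ on $W^{r-1}$, apply H\"older, induct in steps of one) is therefore the right plan, and your base case and H\"older step are fine. The execution fails at exactly the step you flagged, for two reasons.

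First, set $D(a,b):=\varphi(a)-\varphi(b)-\varphi'(b)(a-b)$. Your pointwise inequality $D(a,b)\le\frac{r(r-1)}{2}a^{r-2}(a-b)_+^2$ cannot hold when $a<b$: the left side is a strictly positive Bregman divergence and the right side is zero. Decreases alone survive only after a symmetrization you did not state. Conditionally on the other coordinates, $(Y,Y_i')$ is exchangeable, so $\E D(Y,Y_i')=\E\bigl[(\varphi'(Y)-\varphi'(Y_i'))(Y-Y_i')\mathbf{1}\{Y>Y_i'\}\bigr]$.

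Second, and fatally, the coefficient $\frac{r(r-1)}{2}$ is too large by a factor of order $r$. Dividing $\frac{r(r-1)}{2}v\,x_r^{r-2}$ by $(rv)^{r/2}$ gives $\frac{r-1}{2}y^{r/2-1}$, not $\frac{r-1}{2r}y^{r/2-1}$. With the correct normalization, $y=\frac{r-1}{2}$ satisfies your recursion whatever $x_{r-1}$ is. So no induction based on this recursion alone can give $x_r^2\le c\,rv$ once $r>2c+1$. At best it gives $x_q^2\lesssim q^2 v$, which is the $O(q\sqrt v)$ loss you set out to avoid. Your first computation, $1\le(1-1/r)^{r/2}+\frac{r-1}{2\kappa}$, was correct. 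The ``direct'' normalization only hides the same obstruction behind a dropped factor of $r$.

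The repair is to bound the symmetrized quantity sharply. With $Y=W^{r-1}$ one has $\varphi'(Y)=\frac{r}{r-1}W$. The event $\{W>W_i'\}$ forces $Z>Z_i'$ and $W-W_i'\le Z-Z_i'$. On that event, convexity of $u\mapsto u^{r-1}$ for $r\ge2$ gives
\[
    \tfrac{r}{r-1}(W-W_i')\bigl(W^{r-1}-(W_i')^{r-1}\bigr)
    \le rW^{r-2}(W-W_i')^2
    \le rW^{r-2}(Z-Z_i')_+^2 .
\]
Summing over $i$ yields $x_r^r\le x_{r-1}^r+r\,\E[\mathcal V^+W^{r-2}]$, whose coefficient is linear in $r$. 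After H\"older, the normalized inequality becomes $y^{r/2}\le c^{r/2}(1-1/r)^{r/2}+y^{r/2-1}$. If $y>c$, dividing by $y^{r/2-1}$ gives $y\le ce^{-1/2}+1\le c$ as soon as $c\ge 1/(1-e^{-1/2})$, a contradiction. So your closing argument goes through with $c=1/(1-e^{-1/2})$ in place of $\kappa$.
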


This is Theorem~2 of
Boucheron--Bousquet--Lugosi--Massart~\cite{BoucheronEtAl2005}. Here \emph{polynomial} refers to power moments: the inequality controls the
\(L_q\)-norm of \((F-\mathbb{E}F)_+\), rather than an exponential moment. It is an
Efron--Stein inequality because the variance proxy \(\mathcal V^+\) is formed from the changes
\(F-F_i'\) produced by independently resampling one variable at a time.

\subsection{A matrix Bernstein moment corollary}

\begin{lemma}[Matrix Bernstein moment bound]\label{lem:matrix-bernstein-moment}
Let $M_1,\ldots,M_s$ be independent positive semidefinite matrices
of dimension $N$ such that
\[
    0\preceq M_\gamma\preceq I_N,
    \qquad
    \E M_\gamma\preceq B^{-1}I_N.
\]
Then, for every $q\ge2\log(2N)$,
\begin{equation*}
    \left\lVert
        \left\lVert\sum_{\gamma=1}^sM_\gamma\right\rVert_{\op}
    \right\rVert_q
    =O\!\left(\frac{s}{B}+q\right).
\end{equation*}
\end{lemma}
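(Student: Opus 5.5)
Let $X:=\sum_{\gamma=1}^s M_\gamma$. The plan is to derive the moment bound from the standard matrix Bernstein (or matrix Chernoff) tail inequality for sums of independent bounded positive semidefinite matrices, and then integrate the tail. Since each $M_\gamma$ satisfies $0\preceq M_\gamma\preceq I_N$ and $\E X\preceq (s/B)I_N$, the natural tool is the upper matrix Chernoff bound. It gives, for $\mu_{\max}:=\lambda_{\max}(\E X)\le s/B$ and every $u\ge e\,\mu_{\max}$,
\[
    \Pp\{\lambda_{\max}(X)\ge u\}\le N\left(\frac{e\mu_{\max}}{u}\right)^{u}\le N e^{-u}
    \quad\text{once } u\ge e^2 s/B .
\]
Alternatively, I can center: $X-\E X$ has summands bounded in norm by $1$ with variance $\lVert\sum\E(M_\gamma-\E M_\gamma)^2\rVert\le\lVert\sum \E M_\gamma^2\rVert\le\lVert\sum\E M_\gamma\rVert\le s/B$, using $M_\gamma^2\preceq M_\gamma$. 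Matrix Bernstein then gives $\Pp\{\lVert X-\E X\rVert\ge u\}\le 2N\exp(-\tfrac{u^2/2}{s/B+u/3})$. I will use the Bernstein form because it matches the lemma's name.

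The key steps are as follows. First, use $\lVert X\rVert_{\op}\le \lVert\E X\rVert_{\op}+\lVert X-\E X\rVert_{\op}\le s/B+\lVert X-\E X\rVert_{\op}$, which reduces the problem to bounding the $L_q$ norm of $W:=\lVert X-\E X\rVert_{\op}$. Second, rewrite the Bernstein tail as
\[
    \Pp\{W\ge c(\sqrt{\sigma^2 t}+t)\}\le 2N e^{-t},
    \qquad \sigma^2=s/B.
\]
Third, integrate this tail to obtain $\lVert W\rVert_q\lesssim \sqrt{\sigma^2(q+\log 2N)}+(q+\log 2N)$. The factor $2N$ is handled by the assumption $q\ge 2\log(2N)$: evaluating at $t=\log(2N)+t'$ absorbs the prefactor, and $\log(2N)\le q/2$. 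Fourth, apply AM--GM, $\sqrt{(s/B)q}\le s/B+q$, to collapse the bound to $O(s/B+q)$.

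The only mildly delicate point is the tail-to-moment integration with the dimensional prefactor. Concretely, I would write $\E W^q=\int_0^\infty qu^{q-1}\Pp\{W>u\}\,du$, split at $u_0=C(\sqrt{\sigma^2 q}+q)$, bound the part below $u_0$ by $u_0^q$, and above $u_0$ use $\Pp\{W>u\}\le 2N e^{-\min(u^2/(4\sigma^2),\,u/4)}$. The resulting integral is at most $(Cu_0)^q$, because $2N\le e^{q/2}$ costs only a constant factor after taking $q$th roots. This step could instead be cited as the moment form of matrix Bernstein (e.g.\ Tropp's expected-norm bounds raised to $q$ via a trace moment $\E\tr X^q$). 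I expect no real obstacle beyond tracking these constants.
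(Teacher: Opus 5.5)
Your proposal is correct and follows the paper's route: center each summand, apply self-adjoint matrix Bernstein with variance proxy $s/B$ (obtained from $M_\gamma^2\preceq M_\gamma$), integrate the tail using $q\ge 2\log(2N)$ to absorb the dimensional prefactor, then add $\lVert\E X\rVert_{\op}\le s/B$ and collapse $\sqrt{(s/B)q}$ by AM--GM. The only cosmetic difference is that the paper bounds the one-sided quantity $\lambda_{\max}\bigl(\sum_\gamma(M_\gamma-\E M_\gamma)\bigr)$, with prefactor $N$, which suffices because $X\succeq 0$, whereas you bound the two-sided norm with prefactor $2N$; both versions work.
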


\begin{proof}
Set $H_\gamma=M_\gamma-\E M_\gamma$. Since
$M_\gamma^2\preceq M_\gamma$,
\[
    \sum_\gamma\E H_\gamma^2
    \preceq\sum_\gamma\E M_\gamma^2
    \preceq\frac{s}{B}I_N,
    \qquad
    \lambda_{\max}(H_\gamma)\le1.
\]
The self-adjoint matrix Bernstein inequality~\cite{Tropp2012}
gives
\[
    \Pp\left\{
        \lambda_{\max}\left(\sum_\gamma H_\gamma\right)\ge t
    \right\}
    \le
    N\exp\left(
        -\frac{t^2}{2(s/B+t/3)}
    \right).
\]
Integrating this scalar tail yields, for
$q\ge2\log(2N)$,
\[
    \left\lVert
        \left(
          \lambda_{\max}\left(\sum_\gamma H_\gamma\right)
        \right)_+
    \right\rVert_q
    \le c\left(\sqrt{\frac{qs}{B}}+q\right).
\]
Add
$\lVert\sum_\gamma\E M_\gamma\rVert_{\op}\le s/B$ and use
$\sqrt{(s/B)q}\le\tfrac12(s/B+q)$.
\end{proof}

\subsection{Symmetrization and contraction}

Let $\zeta_1,\ldots,\zeta_N$ be independent Rademacher signs. We use
the Rademacher contraction principle in moment form~\cite{LedouxTalagrand1991}:
if $\varphi(0)=0$ and $\varphi$ is $L$-Lipschitz, then for every
deterministic set $T\subseteq\R^N$ and every $q\ge1$,
\begin{equation}\label{eq:rademacher-contraction}
    \left\lVert
        \sup_{t\in T}
        \left|\sum_{j=1}^N\zeta_j\varphi(t_j)\right|
    \right\rVert_q
    \le
    2L
    \left\lVert
        \sup_{t\in T}
        \left|\sum_{j=1}^N\zeta_jt_j\right|
    \right\rVert_q.
\end{equation}

\begin{lemma}[Centered-function contraction]\label{lem:centered-contraction}
Let $X_1,\ldots,X_N$ be independent centered random functions on a
compact index set $T$. Let $\varphi(0)=0$ be $L$-Lipschitz. Then, for
every $q\ge1$,
\begin{equation*}
    \left\lVert
        \sup_{t\in T}
        \left|
          \sum_{j=1}^N
          \bigl(\varphi(X_j(t))-\E\varphi(X_j(t))\bigr)
        \right|
    \right\rVert_q
    \le
    8L
    \left\lVert
        \sup_{t\in T}
        \left|\sum_{j=1}^NX_j(t)\right|
    \right\rVert_q.
\end{equation*}
\end{lemma}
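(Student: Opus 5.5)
The plan is the standard symmetrization-then-contraction route. Write $F_j(t)=\varphi(X_j(t))$. First I would symmetrize. Let $X_1',\ldots,X_N'$ be an independent copy of the process and $F_j'(t)=\varphi(X_j'(t))$. Conditional Jensen gives $\E\varphi(X_j(t))=\E'F_j'(t)$. Applied to the convex functional $\sup_t|\cdot|$ raised to the power $q$, this yields
\[
\Bigl\lVert\sup_t\Bigl|\sum_j\bigl(F_j(t)-\E F_j(t)\bigr)\Bigr|\Bigr\rVert_q\le\Bigl\lVert\sup_t\Bigl|\sum_j\bigl(F_j(t)-F_j'(t)\bigr)\Bigr|\Bigr\rVert_q .
\]
The pairs $(X_j,X_j')$ are independent across $j$, and swapping $X_j\leftrightarrow X_j'$ flips the sign of $F_j-F_j'$. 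Hence we may insert independent Rademacher signs $\zeta_j$. The triangle inequality then bounds the right side by $2\lVert\sup_t|\sum_j\zeta_j\varphi(X_j(t))|\rVert_q$.

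Next I would apply the contraction principle \eqref{eq:rademacher-contraction} conditionally on the $X_j$. The index set is $\{(X_1(t),\ldots,X_N(t)):t\in T\}\subseteq\R^N$, which is deterministic once the $X_j$ are fixed, and $\varphi(0)=0$. This gives the bound $4L\lVert\sup_t|\sum_j\zeta_jX_j(t)|\rVert_q$, after integrating the conditional $L_q$ norms over $X$.

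Finally I would desymmetrize. Since the $X_j$ are centered, $\sum_j\zeta_jX_j(t)$ has the same law as $\sum_j\zeta_j(X_j(t)-X_j'(t))$ after conditioning on $\zeta$ and introducing the copy. By Jensen over $X'$, this is at most the norm of $\sum_j\zeta_j(X_j-X_j')$. The law of that sum is sign-invariant, so it equals the norm of $\sum_j(X_j-X_j')$, which the triangle inequality bounds by $2\lVert\sup_t|\sum_jX_j(t)|\rVert_q$. Combining the three steps gives $2\cdot 2L\cdot 2=8L$, as claimed.

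The main obstacle is measurability of the suprema over $T$. Compactness of $T$ will be used to reduce to a countable dense subset, assuming the random functions are continuous (as they are in the application, where $T$ is the sphere and the functions are quadratic forms). The Jensen steps also need care: each must be applied conditionally, with the supremum taken inside.
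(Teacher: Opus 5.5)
Your proposal is correct and follows the paper's proof step for step. It uses the same three steps: symmetrization with an independent copy and Rademacher signs (factor $2$), the Ledoux--Talagrand contraction applied conditionally on the $X_j$ (factor $2L$), and desymmetrization via $X_j=\E_{X'}(X_j-X_j')$, Jensen, sign-invariance of $X_j-X_j'$, and the triangle inequality (factor $2$).

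One wording fix in the last step: $\sum_j\zeta_jX_j(t)$ does not have the same law as $\sum_j\zeta_j(X_j(t)-X_j'(t))$. It is the conditional expectation of that sum over $X'$, which is exactly what your subsequent Jensen step uses.
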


\begin{proof}
Let $X_1',\ldots,X_N'$ be independent copies. Jensen's inequality
and the usual moment symmetrization give
\[
    \left\lVert
        \sup_{t\in T}
        \left|
          \sum_j\bigl(\varphi(X_j(t))-\E\varphi(X_j(t))\bigr)
        \right|
    \right\rVert_q
    \le
    2
    \left\lVert
        \sup_{t\in T}
        \left|\sum_j\zeta_j\varphi(X_j(t))\right|
    \right\rVert_q.
\]
Conditioning on the functions and applying
\eqref{eq:rademacher-contraction} gives
\[
    \left\lVert
        \sup_{t\in T}
        \left|
          \sum_j\bigl(\varphi(X_j(t))-\E\varphi(X_j(t))\bigr)
        \right|
    \right\rVert_q
    \le
    4L
    \left\lVert
        \sup_{t\in T}
        \left|\sum_j\zeta_jX_j(t)\right|
    \right\rVert_q.
\]
Centering gives
$X_j=\E_{X'}(X_j-X_j')$. Jensen's inequality, followed by symmetry
of $X_j-X_j'$, allows the auxiliary signs to be absorbed:
\[
    \left\lVert
        \sup_{t\in T}
        \left|\sum_j\zeta_jX_j(t)\right|
    \right\rVert_q
    \le
    \left\lVert
        \sup_{t\in T}
        \left|\sum_j(X_j(t)-X_j'(t))\right|
    \right\rVert_q.
\]
The triangle inequality bounds the last expression by twice
\[
    \left\lVert
        \sup_{t\in T}
        \left|\sum_jX_j(t)\right|
    \right\rVert_q.
\]
\end{proof}

\subsection{Tail-to-moment conversion}

\begin{lemma}[Tail-to-moment conversion]\label{lem:tail-to-moment}
Let $X\ge0$ and suppose that, for every $t\ge0$,
\[
    \Pp\{X>a+b\sqrt t+ct\}\le e^{-t}.
\]
Then, for every $q\ge1$,
\begin{equation*}
    \lVert X\rVert_q
    \le a+O(b\sqrt q+cq).
\end{equation*}
The same bound holds for a finite sum of square-root terms by adding
their coefficients.
\end{lemma}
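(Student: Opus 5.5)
The plan is to integrate the tail into an $L_q$ bound through the layer-cake formula. We may assume $q\ge1$ and replace $X$ by $(X-a)_+$. Since $X\le a+(X-a)_+$, Minkowski's inequality gives
\[
\lVert X\rVert_q\le a+\lVert(X-a)_+\rVert_q .
\]
It therefore suffices to show that $W:=(X-a)_+$ satisfies $\lVert W\rVert_q=O(b\sqrt q+cq)$. Write $g(t)=b\sqrt t+ct$. This function is increasing, continuous, and maps $[0,\infty)$ onto $[0,\infty)$ whenever $b$ or $c$ is positive. The degenerate case $b=c=0$ is handled separately: the hypothesis with $t\to0$ forces $X\le a$ almost surely.

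For the main step, the layer-cake formula and the substitution $u=g(t)$ give
\[
\E W^q=\int_0^\infty q u^{q-1}\Pp\{W>u\}\,du\le\int_0^\infty q\,g(t)^{q-1}g'(t)e^{-t}\,dt .
\]
We then use $g(t)^{q-1}\le 2^{q-1}\bigl((b\sqrt t)^{q-1}+(ct)^{q-1}\bigr)$ and $g'(t)=b/(2\sqrt t)+c$. Each of the four resulting products is a Gamma integral of the form $b^\alpha c^\beta\Gamma(\cdot)$ with $\alpha+\beta=q$. An alternative bound avoids the cross terms: since $g(t)\le 2\max(b\sqrt t,ct)$, we have $\Pp\{W>2b\sqrt t\}+\Pp\{W>2ct\}\ge e^{-t}$-type control. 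More cleanly, $\Pp\{W>u\}\le \exp(-\min\{u^2/(4b^2),\,u/(2c)\})$, because $u\ge g(t)$ holds whenever $t=\min\{u^2/(4b^2),u/(2c)\}$.

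With this tail bound, $W$ is dominated in law by the sum of a subgaussian variable with scale $b$ and a subexponential variable with scale $c$. The standard estimates
\[
\int_0^\infty qu^{q-1}e^{-u^2/(4b^2)}\,du=(2b)^q\Gamma(q/2+1),\qquad \int_0^\infty qu^{q-1}e^{-u/(2c)}\,du=(2c)^q\Gamma(q+1)
\]
then give $\lVert W\rVert_q\le C(b\sqrt q+cq)$ by Stirling's formula, using $\Gamma(q+1)^{1/q}\le q$ and $\Gamma(q/2+1)^{1/q}\le\sqrt q$ for $q\ge1$. The exponential of the minimum is at most the sum of the two exponentials, so $\E W^q$ is bounded by the sum of the two integrals. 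Taking $q$-th roots, with $(x+y)^{1/q}\le x^{1/q}+y^{1/q}$, completes the argument.

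For the final sentence of the statement, suppose the threshold is $a+\sum_k b_k\sqrt t+ct$. Set $b=\sum_k b_k$ and apply the same argument. No step is a genuine obstacle; the only care needed is the change of variables $u\mapsto t$ in the tail bound and the uniformity of the constants over $q\ge1$.
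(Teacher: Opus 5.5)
Your proof is correct and is essentially the paper's argument. The right-hand side of your first display is exactly $\E[g(T)^q]$ for $T$ exponential with mean one, which is the paper's stochastic domination of $X$ by $a+b\sqrt T+cT$. The paper then finishes directly with Minkowski, using $\lVert\sqrt T\rVert_q=O(\sqrt q)$ and $\lVert T\rVert_q=O(q)$; your min-tail splitting and Gamma integrals are an equivalent but longer finish. Two small slips: in the degenerate case $b=c=0$ you need $t\to\infty$, not $t\to0$, to force $X\le a$ almost surely; and the sentence ending in ``$\ge e^{-t}$-type control'' is garbled and should be deleted.
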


\begin{proof}
Let $T$ be an exponential random variable with mean one. The tail
assumption implies stochastic domination by $a+b\sqrt T+cT$. Since
\[
    \lVert T\rVert_q=O(q),
    \qquad
    \lVert\sqrt T\rVert_q=O(\sqrt q),
\]
Minkowski's inequality proves the claim.
\end{proof}

\end{document}